\documentclass{article}

\usepackage[preprint]{neurips_2026}

\usepackage[utf8]{inputenc} % allow utf-8 input
\usepackage[T1]{fontenc}    % use 8-bit T1 fonts
\usepackage{hyperref}       % hyperlinks
\usepackage{url}            % simple URL typesetting
\usepackage{booktabs}       % professional-quality tables
\usepackage{amsfonts}       % blackboard math symbols
\usepackage{nicefrac}       % compact symbols for 1/2, etc.
\usepackage{microtype}      % microtypography
\usepackage{xcolor}         % colors
\usepackage{amsthm}
\usepackage{amsmath}
\usepackage{mathtools}
\usepackage{algorithm}
\usepackage{algorithmic}
\newtheorem{theorem}{Theorem}
\newtheorem{lemma}{Lemma}
\newtheorem{definition}{Definition}
\newtheorem{proposition}{Proposition}

\newtheorem{assumption}{Assumption}

\title{Bayesian Persuasion with a Risk-Conscious Receiver}

  \author{
Yujing Chen \\
School of Mathematical Sciences, Peking University \\
\texttt{yujingchen@stu.pku.edu.cn}
}

\begin{document}

\maketitle

\begin{abstract}

We study Bayesian persuasion when the receiver evaluates actions by reward-side Conditional Value-at-Risk (CVaR) rather than expected utility.  CVaR preferences break the standard action-based direct-recommendation reduction: merging signals that recommend the same action can change the receiver's tail-risk ranking and destroy incentive compatibility.  
We show that this failure does not imply intractability in the explicit finite-state model.  Each CVaR action value is max-affine in the posterior, and refining recommendations by the active affine piece yields an active-facet revelation principle and an exact polynomial-size linear program.  We further identify a representation boundary: listed polyhedral risks remain tractable by the same LP, whereas succinctly represented facet families make exact persuasion NP-hard.  Finally, we give a finite-precision approximation scheme for risk preferences determined by finitely many stable posterior statistics.
\end{abstract}

\section{Introduction}

In many strategic communication settings, a sender influences a receiver by selectively revealing information about an uncertain state of the world.  Bayesian persuasion, introduced by \citet{kamenica_bayesian_2011}, gives a clean formal model for this problem and has found applications in advertising, security, regulation, and resource allocation \citep{badanidiyuru_targeting_2018,rabinovich_information_2015,yang_improving_2013}.  The classical theory usually assumes that the receiver is risk-neutral, so her payoff from an action is linear in the posterior belief.  This linearity is mathematically powerful: it supports the revelation principle and leads to tractable linear-programming formulations.

Many decision environments violate this premise.  A regulator evaluating a model release, a platform deciding whether to surface an automated recommendation, or an operator deciding whether to trust an alert may care less about average performance than about rare but severe failures.  In such settings, the receiver may reject an action that looks favorable in expectation if a small posterior tail contains unacceptable downside risk.  The information-design problem is then not only to persuade an expected-utility receiver, but to communicate in a way that remains persuasive under tail-risk evaluation.

We capture this form of caution through Conditional Value-at-Risk (CVaR), a standard risk measure that focuses on the worst tail of an outcome distribution \citep{artzner_coherent_1999,rockafellar_optimization_2000,rockafellar_conditional_2002,acerbi_coherence_2002,jorion_value_2006,filippi_conditional_2020}.  Under CVaR preferences, the receiver's utility is nonlinear in posterior beliefs.  As a result, merging two signals that separately induce the same action can destroy incentive compatibility, so direct recommendation schemes are no longer without loss.  Recent work has developed geometric insight for risk-conscious receivers \citep{anunrojwong_persuading_2024a}; this paper gives a computational account of the CVaR case.

\paragraph{Our contributions.}
The paper studies how tail-risk preferences alter the reduced form and computational structure of Bayesian persuasion.  The first observation is that CVaR, and more generally nonlinear risk preferences, can invalidate the usual action-based revelation argument. Under expected utility, the receiver's incentive constraints are linear in the posterior, so signals that induce the same action can be merged.  Under CVaR, merging changes the distribution of outcomes under the posterior and may change the relevant lower tail.  An action that is optimal at two posteriors need not remain optimal at their convex combination.

The main positive result is an active-facet replacement for direct revelation.  In a finite state space, the CVaR value of each action is max-affine in the posterior.  A signal can therefore be indexed by an action \(a\) together with an active affine piece \(\ell\) supporting the receiver's CVaR value.  Conditional on this action-facet pair, incentive compatibility is described by linear inequalities.  This gives an active-facet revelation principle and an exact linear program with polynomially many variables and constraints in the explicit finite-state model.

The same argument also clarifies the role of representation.  CVaR is one instance of an explicitly listed polyhedral risk preference,
$
    \rho(\mu,a)=\max_{\ell\in\mathcal L_a}\langle c_{a,\ell},\mu\rangle .
$
Whenever the affine pieces are part of the input, the active-facet LP remains polynomial in the listed representation.  By contrast, if the affine pieces are specified only implicitly by a succinct combinatorial family, the representation can hide an NP-hard search problem.  We prove such a hardness result for succinct polyhedral risk preferences with an explicit finite state space and two receiver actions.

The approximation results address this representation gap.  They are not needed to solve the tabular CVaR problem, but they give a finite-precision method for settings in which the exact facet description is unavailable or too large.  Under a finite-statistic access condition, where the receiver's risk value depends Lipschitz-continuously on finitely many posterior statistics
and the induced statistic cells can be enforced, a discretized state-contingent LP gives uniform \(\epsilon\)-incentive compatibility. A positive margin condition converts this approximate guarantee into strict incentive compatibility, and a local active-facet refinement shows how the grid complexity can depend on the number of facets that are active near the relevant posterior region rather than on the global facet count.

\subsection{Related Work}
We give a short overview here and defer a fuller discussion to
Appendix~\ref{app:related_work}. Bayesian persuasion was introduced by
\citet{kamenica_bayesian_2011}; see \citet{bergemann_information_2019} for a
survey. Classical tractability relies on expected-utility linearity, which
supports concavification and direct recommendations. We ask what remains when
the receiver evaluates actions by tail risk.

The closest conceptual predecessor is work on risk-conscious receivers
\citep{anunrojwong_persuading_2024a}. We focus on CVaR and show that its
finite max-affine representation yields an active-facet revelation principle
and an exact LP in the explicit finite-state model. This connects to
algorithmic persuasion, where LP tractability is well understood under
expected utility \citep{xu_exploring_2015,dughmi_algorithmic_2021}. The
explicit-versus-succinct distinction used here follows a standard
representation view in complexity theory: compact encodings can change the
complexity of graph and combinatorial problems
\citep{galperin_succinct_1983,papadimitriou_note_1986,wagner_complexity_1986}.
In our setting, listed polyhedral risks remain tractable, while succinctly
represented facet families can be hard.

The finite-precision part uses the sampling intuition behind sparse
approximations in games \citep{althofer_sparse_1994,lipton_playing_2003}, but
the approximated objects are posterior statistics determining the receiver's
risk value rather than mixed strategies or payoff vectors.
\section{The General Risk-Conscious Persuasion Model}
\label{sec:model}

We first describe a persuasion model in which the receiver's preference over actions may be nonlinear in posterior beliefs. This general formulation keeps the role of the risk functional separate from the information structure; CVaR is introduced in the next section.

\subsection{The Environment and Information Structure}
We consider a game with a single Sender ($S$) and a single Receiver ($R$). The uncertainty is a state $\omega$ drawn from a finite state space $\Omega = \{\omega_1, \dots, \omega_m\}$. The Sender and Receiver share a common prior $\mu_0 \in \operatorname{int}(\Delta(\Omega))$, where $\Delta(\Omega)$ is the $(m-1)$-dimensional probability simplex over $\Omega$. The Receiver chooses from a finite action set $\mathcal{A} = \{a_1, \dots, a_n\}$. 
Throughout the paper, we write \(n=|\mathcal A|\).

The Sender possesses an informational advantage, observing the realization of the true state $\omega$. To influence the Receiver's action, the Sender commits to a \textit{signaling scheme} (or information structure) $\pi$. 
\begin{definition}[Signaling Scheme]
A signaling scheme is a tuple $(\mathcal{S}, \pi)$, where $\mathcal{S}$ is a finite space of signal realizations, and $\pi: \Omega \to \Delta(\mathcal{S})$ is a mapping from states to distributions over signals. We denote $\pi(s|\omega)$ as the probability of sending signal $s \in \mathcal{S}$ conditional on the realized state being $\omega$.
\end{definition}

Upon observing a signal $s \in \mathcal{S}$, the Receiver updates her belief about the state of the world using Bayes' rule. The posterior belief $\mu_s \in \Delta(\Omega)$ is given by:
\begin{equation}
    \mu_s(\omega) = \frac{\pi(s|\omega)\mu_0(\omega)}{\sum_{\omega' \in \Omega} \pi(s|\omega')\mu_0(\omega')}, \quad \forall \omega \in \Omega.
\end{equation}
The unconditional probability of observing signal $s$ is $\mathbb{P}(s) = \sum_{\omega'} \pi(s|\omega')\mu_0(\omega')$. Thus a signaling scheme induces a distribution of posterior beliefs $\tau \in \Delta(\Delta(\Omega))$. Bayesian updating imposes the Bayes-plausibility condition that the mean posterior equals the prior:
\begin{equation}
    \sum_{s \in \mathcal{S}} \mathbb{P}(s) \mu_s = \mu_0.
\end{equation}

\subsection{Generalized Risk Preferences}
In the canonical Bayesian persuasion model, the Receiver is assumed to be risk-neutral, maximizing an expected utility $\mathbb{E}_{\mu_s}[u^R(\omega, a)]$ which is strictly linear in the posterior belief $\mu_s$. 

In the generalized model, the Receiver evaluates actions through a risk functional $\rho: \Delta(\Omega) \times \mathcal{A} \to \mathbb{R}$. Given posterior $\mu_s$, the Receiver chooses
\begin{equation} \label{eq:receiver_general_obj}
    a^*(s) \in \operatorname{argmax}_{a \in \mathcal{A}} \rho(\mu_s, a).
\end{equation}

Throughout, incentive compatibility is understood in the weak sense: a
recommended action must belong to the receiver's best-response set.  When the
sender's value is evaluated at a posterior with multiple receiver best
responses, we use sender-favorable tie-breaking.

The functional $\rho$ may depend nonlinearly on $\mu_s$, allowing the model to represent preferences such as tail-risk sensitivity or ambiguity aversion. The Sender is risk-neutral with utility $v: \Omega \times \mathcal{A} \to \mathbb{R}$ and solves the following bilevel problem:
\begin{equation}\label{eq:general_sender_problem}
    \begin{aligned}
        \max_{\pi:\Omega \to \Delta(\mathcal{S})} \quad & \sum_{\omega \in \Omega} \sum_{s \in \mathcal{S}} \mu_0(\omega) \pi(s|\omega) v(\omega, a^*(s))\\
    \text{s.t.} \quad & a^*(s) \in \operatorname{argmax}_{a \in \mathcal{A}} \rho(\mu_s, a) \quad \forall s \in \mathcal{S}\\
    &\sum_{s \in \mathcal{S}} \pi(s|\omega) = 1 \quad \forall \omega \in \Omega\\
    &\pi(s|\omega)  \ge 0 \quad \forall s \in \mathcal{S}, \omega \in \Omega
    \end{aligned}
\end{equation}

\subsection{Failure of Action-Based Revelation}
\label{subsec:revelation_failure}

The nonlinear dependence of \(\rho(\mu,a)\) on posterior beliefs changes the
standard reduced form of Bayesian persuasion.  Under expected utility, the
receiver's payoff from each action is linear in \(\mu\).  If two signals both
make the same action optimal, their merger preserves optimality, and the sender
can restrict attention to direct recommendations indexed by actions.

This argument fails for general risk functionals.  The set
$
    \{\mu\in\Delta(\Omega): a\in\arg\max_{a'}\rho(\mu,a')\}
$
need not be convex.  Hence two posteriors that both induce action \(a\) may
have a convex combination at which another action is preferred.  Merging
signals that recommend the same action can therefore destroy incentive
compatibility.

The failure of action-based revelation does not by itself imply computational
hardness.  It means that a different reduced form is needed.  For arbitrary
nonlinear risk preferences, such a reduced form may not be available.  The
next section shows that CVaR has enough additional structure to recover
tractability: its value is max-affine in the posterior, and signals can be
refined by the active affine facet supporting the recommended action.

For completeness, Appendix~\ref{app:general_cardinality_proof} records a
standard finite-support bound for arbitrary risk preferences.  This bound is a
geometric background result and is not used in the active-facet LP construction.

\section{Tractability and Optimization under CVaR}
\label{sec:cvar_optimization}

We now specialize the receiver's risk functional to Conditional Value-at-Risk
(CVaR).  The role of this specialization is structural.  Although CVaR is
nonlinear in posterior beliefs, it is polyhedral on every finite state space:
for each action, the receiver's value is the maximum of finitely many affine
functions.  This max-affine structure is what makes an exact optimization
result possible.

\subsection{CVaR-Conscious Preferences and Piecewise Linearity}
For a given confidence level $r \in (0, 1)$, the Receiver evaluates the CVaR of the utility $u(\omega, a)$, which captures the expectation of the worst $r$-quantile outcomes. The risk functional is defined as:
\begin{equation}
    \rho(\mu, a) \coloneqq \operatorname{CVaR}_r^\mu(u(\cdot, a)) = \sup_{b \in \mathbb{R}} \left( b - \frac{1}{r} \mathbb{E}_{\omega \sim \mu} [(b - u(\omega, a))^+] \right).
\end{equation}

Although CVaR is nonlinear in beliefs, it has a simple finite-facet representation on a finite state space.

\begin{lemma}[Finite-Facet Representation of CVaR]\label{lemma:cvar_pwl}
For any fixed action $a\in\mathcal A$, the function
$
f_a(\mu)=\operatorname{CVaR}_r^\mu(u(\cdot,a))
$
is continuous, convex, and piecewise linear in $\mu$. More precisely, there exists a finite set $\mathcal L_a$ with $|\mathcal L_a|\le |\Omega|$ and vectors $c_{a,l}\in\mathbb R^{|\Omega|}$ such that
\begin{equation}
f_a(\mu)=\max_{l\in\mathcal L_a}\langle c_{a,l},\mu\rangle.
\end{equation}
If $|u(\omega,a)|\le C_R$ for all $(\omega,a)$, then the coefficients may be chosen to satisfy
\begin{equation}
\|c_{a,l}\|_\infty\le C_R\left(1+\frac2r\right).
\end{equation}
\end{lemma}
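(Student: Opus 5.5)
The plan is to use the variational (Rockafellar--Uryasev) formula that defines $f_a$ and show that the supremum over $b$ can be restricted to the finitely many values $b\in\{u(\omega,a):\omega\in\Omega\}$. Fix $a$ and write $u_j=u(\omega_j,a)$. For fixed $\mu$ the map
\[
g_\mu(b)=b-\tfrac1r\sum_{j}\mu(\omega_j)(b-u_j)^+
\]
is concave and piecewise linear in $b$, with breakpoints only at the $u_j$. For $b<\min_j u_j$ its slope is $1>0$. For $b>\max_j u_j$ its slope is $1-1/r<0$, since $r\in(0,1)$. Hence the supremum is finite and attained at some breakpoint $u_k$, which gives
\[
f_a(\mu)=\max_{k=1,\dots,m} g_\mu(u_k).
\]

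Next I observe that for each fixed $k$, $g_\mu(u_k)$ is affine in $\mu$. To make it linear on $\Delta(\Omega)$, I use $\sum_j\mu(\omega_j)=1$ to homogenize the constant $u_k$. This defines
\[
c_{a,k}(\omega_j)=u_k-\tfrac1r(u_k-u_j)^+ ,
\]
so that $g_\mu(u_k)=\langle c_{a,k},\mu\rangle$ for every $\mu$ in the simplex. Taking $\mathcal L_a=\{1,\dots,m\}$ (after merging duplicate values $u_k$) gives $|\mathcal L_a|\le|\Omega|$.

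Continuity, convexity and piecewise linearity then follow immediately, because $f_a$ is a finite maximum of linear functions of $\mu$. For the coefficient bound, $|u_k|\le C_R$ and $0\le(u_k-u_j)^+\le 2C_R$ give
\[
|c_{a,k}(\omega_j)|\le C_R+\tfrac{2C_R}{r}=C_R\bigl(1+\tfrac2r\bigr).
\]

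The only step needing care is the restriction of the supremum to breakpoints, since it requires the slope signs at both ends. I would check these explicitly: the slope on the far left is $1$, and the slope on the far right is $1-\tfrac1r\sum_j\mu(\omega_j)=1-1/r$. A concave piecewise-linear function with these end slopes attains its maximum at a kink. Degenerate $\mu$ with small support causes no trouble, because the kinks of $g_\mu$ then form a subset of the $u_j$, and the maximum over all $k$ is still at least the attained value and at most the supremum.
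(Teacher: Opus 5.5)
Your proposal is correct and follows essentially the same route as the paper. Like the paper, you restrict the supremum in the variational formula to the payoff values $u(\omega,a)$ using the end slopes $1$ and $1-1/r$, homogenize via $\sum_\omega\mu(\omega)=1$ to obtain $c_{a,k}(\omega)=u_k-\frac1r(u_k-u(\omega,a))^+$, and bound $|c_{a,k}(\omega)|\le C_R+2C_R/r$.
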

Proof in Appendix~\ref{app:cvar_proof}.

\subsection{Active-Facet Revelation and Exact Optimization}
\label{subsec:active_facet_lp}

Lemma~\ref{lemma:cvar_pwl} shows that the receiver's CVaR value is nonlinear
but finitely piecewise linear in posterior beliefs. This finite-facet
structure gives the right replacement for the usual action-based revelation
principle. Under expected utility, signals that recommend the same action can
be merged without changing incentives. Under CVaR, such a merge may change
the active tail event and therefore change the receiver's ranking of actions.
Thus the appropriate reduced form is not indexed by actions alone, but by
actions together with their active CVaR facets.

For each action \(a\in\mathcal A\), let \(\mathcal L_a\) be the finite
index set from Lemma~\ref{lemma:cvar_pwl}, so that
\[
    f_a(\mu)
    =
    \operatorname{CVaR}_r^\mu(u(\cdot,a))
    =
    \max_{\ell\in\mathcal L_a}
    \langle c_{a,\ell},\mu\rangle .
\]
For every pair \((a,\ell)\), define the refined incentive region
\begin{equation}
    P_{a,\ell}
    :=
    \left\{
    \mu\in\Delta(\Omega):
    \langle c_{a,\ell},\mu\rangle
    \ge
    \langle c_{a',\ell'},\mu\rangle
    \quad
    \text{for all }a'\in\mathcal A,\ \ell'\in\mathcal L_{a'}
    \right\}.
\end{equation}
If \(\mu\in P_{a,\ell}\), then facet \(\ell\) is active for action \(a\),
and action \(a\) is a weak best response for the receiver. Since all
comparisons are linear, each \(P_{a,\ell}\) is a polytope.

This yields a refined revelation principle. Although signals cannot in
general be merged by action alone, they can be grouped by the pair consisting
of the recommended action and an active CVaR facet. 

\begin{proposition}[Active-Facet Revelation Principle]

\label{prop:active_facet_revelation}

For every incentive-compatible signaling scheme in the finite-state CVaR
persuasion problem, there exists another incentive-compatible signaling scheme
with the same sender value whose signals are indexed by pairs
\((a,\ell)\), where \(a\in\mathcal A\) is the recommended action and
\(\ell\in\mathcal L_a\) is an active CVaR facet at the induced posterior.
Moreover, the posterior associated with signal \((a,\ell)\) belongs to
\(P_{a,\ell}\).
\end{proposition}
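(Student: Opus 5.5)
Take an arbitrary incentive-compatible scheme \((\mathcal S,\pi)\), with recommended actions \(a^*(s)\) and posteriors \(\mu_s\). Assign to each signal a label \((a^*(s),\ell(s))\), where \(\ell(s)\in\arg\max_{\ell\in\mathcal L_{a^*(s)}}\langle c_{a^*(s),\ell},\mu_s\rangle\) is chosen by a fixed deterministic rule, e.g.\ the smallest index. Such a facet exists by Lemma~\ref{lemma:cvar_pwl}, since the maximum is over a finite set. Signals sharing a label are then merged by setting
\[
\pi'((a,\ell)\mid\omega)=\sum_{s:\,(a^*(s),\ell(s))=(a,\ell)}\pi(s\mid\omega),
\]
and the receiver is told to play \(a\) upon \((a,\ell)\). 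Labels that receive no signals are dropped, or equivalently have zero probability. Each \(\pi'(\cdot\mid\omega)\) is a probability distribution, so this is a valid signaling scheme.

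Next I will check that each unmerged posterior lies in its cell. Fix \(s\) with label \((a,\ell)\). Incentive compatibility gives \(f_a(\mu_s)\ge f_{a'}(\mu_s)\) for all \(a'\). Activity of \(\ell\) gives \(\langle c_{a,\ell},\mu_s\rangle=f_a(\mu_s)\). Every \(\ell'\in\mathcal L_{a'}\) satisfies \(\langle c_{a',\ell'},\mu_s\rangle\le f_{a'}(\mu_s)\). Chaining these three facts yields \(\langle c_{a,\ell},\mu_s\rangle\ge\langle c_{a',\ell'},\mu_s\rangle\) for all \((a',\ell')\), so \(\mu_s\in P_{a,\ell}\).

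Then I will show that merging preserves membership. The merged posterior is \(\mu'_{(a,\ell)}=\sum_s \frac{\mathbb P(s)}{\mathbb P'(a,\ell)}\mu_s\), a convex combination of the posteriors carrying that label. This follows from Bayes' rule, because the numerators \(\pi(s\mid\omega)\mu_0(\omega)\) add up. The set \(P_{a,\ell}\) is an intersection of the simplex with linear half-spaces, so it is convex, and hence \(\mu'_{(a,\ell)}\in P_{a,\ell}\). At this posterior, \(\langle c_{a,\ell},\mu'\rangle\ge\langle c_{a,\ell'},\mu'\rangle\) for all \(\ell'\in\mathcal L_a\), so \(\ell\) is an active facet of \(f_a\). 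Moreover \(f_a(\mu')=\langle c_{a,\ell},\mu'\rangle\ge\max_{\ell'}\langle c_{a',\ell'},\mu'\rangle=f_{a'}(\mu')\), so \(a\) is a weak best response and the new scheme is incentive compatible.

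Finally, I will compare sender values. The sender's value is \(\sum_\omega\sum_s\mu_0(\omega)\pi(s\mid\omega)v(\omega,a^*(s))\). The action is constant within each merged group, so regrouping the sum gives exactly \(\sum_\omega\sum_{(a,\ell)}\mu_0(\omega)\pi'((a,\ell)\mid\omega)v(\omega,a)\). The value is therefore unchanged.

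The main obstacle I expect is tie-breaking. I must ensure that the receiver indeed plays the recommended \(a\) at the merged posterior, which may have several best responses. This is resolved by the weak notion of incentive compatibility, together with the sender-favorable tie-breaking fixed in Section~\ref{sec:model}: the sender value is computed with the recommended action. If ties were broken against the sender, the value might only weakly increase, which would not matter here. The other point needing care is that the argument uses only the linear structure of the cells, not the global convexity of the action's best-response region, which the paper notes can fail.
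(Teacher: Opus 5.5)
Your proof is correct and follows the paper's argument: pick an active facet $\ell_s$ for the recommended action, then chain $\langle c_{a_s,\ell_s},\mu_s\rangle=f_{a_s}(\mu_s)\ge f_{a'}(\mu_s)\ge\langle c_{a',\ell'},\mu_s\rangle$ to place $\mu_s$ in $P_{a_s,\ell_s}$. The one difference is that the paper stops at relabeling, keeping same-label signals separate and aggregating them only later in the LP of Theorem~\ref{thm:active_facet_lp}. You instead merge them explicitly using convexity of $P_{a,\ell}$, which makes the ``one signal per pair'' reading of the statement literally true.
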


Proof in Appendix~\ref{app:active_facet_revelation}.

The resulting problem is
linear in the joint distribution over states and refined recommendations.

\begin{theorem}[Exact LP for finite-state CVaR persuasion]
\label{thm:active_facet_lp}
Consider a finite-state Bayesian persuasion problem with a single CVaR
receiver, finite action set \(\mathcal A\), explicitly given sender and
receiver payoffs, prior \(\mu_0\), and rational risk level \(r\in(0,1)\).
Let \(m=|\Omega|\), \(n=|\mathcal A|\), and
\[
    L:=\sum_{a\in\mathcal A}|\mathcal L_a|.
\]
Then an optimal incentive-compatible signaling scheme can be computed by a
linear program with \(mL\) nonnegative joint-mass variables and at most
\(m+L^2\) linear constraints, apart from nonnegativity. Since
\(L\le nm\), this is \(O(nm^2)\) variables and \(O(n^2m^2)\) incentive
constraints. In particular, the problem is solvable in time polynomial in
\(n\), \(m\), and the input bit length.
\end{theorem}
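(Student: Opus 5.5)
The plan is to write the sender's problem over joint masses and invoke Proposition~\ref{prop:active_facet_revelation} to show nothing is lost. For each refined recommendation \((a,\ell)\) with \(\ell\in\mathcal L_a\) and each state \(\omega\), introduce a variable \(x_{a,\ell}(\omega)\ge 0\), interpreted as \(\mu_0(\omega)\pi((a,\ell)\mid\omega)\). This gives \(mL\) variables. The LP is
\[
\max_{x\ge 0}\ \sum_{a,\ell}\sum_{\omega} x_{a,\ell}(\omega)\,v(\omega,a)
\]
subject to two families of constraints. The first is consistency: \(\sum_{a,\ell} x_{a,\ell}(\omega)=\mu_0(\omega)\) for each \(\omega\), which gives \(m\) constraints. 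The second is incentive compatibility: for each \((a,\ell)\) and each \((a',\ell')\ne(a,\ell)\),
\[
\sum_\omega x_{a,\ell}(\omega)\bigl(c_{a,\ell}(\omega)-c_{a',\ell'}(\omega)\bigr)\ge 0,
\]
which gives at most \(L(L-1)\le L^2\) constraints. The IC inequality is the homogenized form of \(\mu\in P_{a,\ell}\). If the signal mass \(p=\sum_\omega x_{a,\ell}(\omega)\) is positive, dividing by \(p\) recovers exactly the polytope condition on the posterior \(x_{a,\ell}/p\). If \(p=0\), the signal is never sent, and the constraint is vacuous.

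Next I verify that the LP value equals the optimum of \eqref{eq:general_sender_problem}, in two directions.

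\emph{Feasible LP point to valid scheme.} Set \(\pi((a,\ell)\mid\omega)=x_{a,\ell}(\omega)/\mu_0(\omega)\); this is well defined because \(\mu_0\) is interior. Each sent signal has posterior in \(P_{a,\ell}\). Hence \(\langle c_{a,\ell},\mu\rangle\ge\langle c_{a,\ell''},\mu\rangle\) for all \(\ell''\in\mathcal L_a\), so \(f_a(\mu)=\langle c_{a,\ell},\mu\rangle\). Likewise \(f_a(\mu)\ge\max_{\ell'}\langle c_{a',\ell'},\mu\rangle=f_{a'}(\mu)\) for every \(a'\), by Lemma~\ref{lemma:cvar_pwl}. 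So \(a\) is a weak best response, and under sender-favorable tie-breaking the sender receives at least the LP objective. In fact exactly the LP objective is attained, since recommending \(a\) is IC.

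\emph{Valid scheme to feasible LP point.} Given any IC scheme, Proposition~\ref{prop:active_facet_revelation} yields an equivalent scheme indexed by pairs \((a,\ell)\) with posteriors in \(P_{a,\ell}\) and the same sender value. Its joint masses are LP-feasible with equal objective. The optimum exists because the feasible set is a nonempty polytope; it is nonempty since it contains the no-information scheme, which places all mass on a best-response pair at \(\mu_0\).

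The remaining step is complexity, which I expect to be the only real work. Everything depends on the facet vectors \(c_{a,\ell}\) being computable in polynomial time with polynomial bit length. Lemma~\ref{lemma:cvar_pwl} gives existence and a magnitude bound, but I would make them explicit. Sort the states by \(u(\cdot,a)\). The supremum over \(b\) is attained at some \(b=u(\omega_k,a)\), because the objective is concave piecewise linear in \(b\) with breakpoints at the utilities. This gives
\[
c_{a,k}(\omega)=u(\omega_k,a)-\tfrac1r\bigl(u(\omega_k,a)-u(\omega,a)\bigr)^+ ,
\]
with \(k\) ranging over at most \(m\) distinct values, so \(L\le nm\). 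These entries are rational, with bit length polynomial in the bit lengths of \(u\) and \(r\).

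The LP therefore has \(O(nm^2)\) variables, \(O(n^2m^2)\) constraints, and polynomially sized rational coefficients. The ellipsoid or interior-point method solves it in polynomial time. An optimal vertex then gives the scheme via the division by \(\mu_0(\omega)\).
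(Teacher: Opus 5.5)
Your proposal is correct and follows the paper's proof essentially step for step. It uses the same joint-mass variables, the same homogenized constraints encoding $\mu_{a,\ell}\in P_{a,\ell}$, the same two-direction equivalence via choosing an active facet at each induced posterior (the paper redoes this inline rather than citing Proposition~\ref{prop:active_facet_revelation}), and the same $m+L^2$ constraint count. Your extra remarks on nonemptiness of the feasible polytope and on the explicit facet formula for bit-length control only make explicit what the paper delegates to Lemma~\ref{lemma:cvar_pwl}. The one step you leave implicit, aggregating several signals that carry the same label $(a,\ell)$, is harmless because the homogenized incentive constraints are closed under addition.
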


Proof in Appendix~\ref{app:active_facet_lp}.

The theorem clarifies the role of Lemma~\ref{lemma:cvar_pwl}. CVaR does
break the classical action-only revelation argument, because the active
tail facet may change under merging. But once the active facet is kept as
part of the recommendation type, the incentive regions become polytopes and
the sender's optimization problem becomes an ordinary linear program.

The appendices contain two complementary views of this tractability result.
Appendix~\ref{app:low_dim} works out the binary state-action case explicitly,
where the geometry of CVaR incentives can be visualized directly. Appendix~\ref{app:poly_proof} gives a fixed-dimensional cell-enumeration argument.
Both are subsumed by the active-facet LP in Theorem~\ref{thm:active_facet_lp},
but they help explain why the correct refinement is by action-facet pairs
rather than by actions alone.
\section{The Polyhedral Boundary: Explicit Tractability and Succinct Hardness}
\label{sec:polyhedral_boundary}

The active-facet LP reveals that the tractability of finite-state CVaR
persuasion is not specific to the variational formula of CVaR.  The essential
property is max-affinity: once the receiver's value is represented as the
maximum of affine functions of the posterior, each signal can be refined by the
affine piece supporting the recommended action.  This section makes the
corresponding representation boundary precise.  If the affine pieces are
explicitly listed, the same LP argument remains polynomial.  If they are given
only through a succinct combinatorial description, the hidden facet structure
can encode NP-hard search problems.

\subsection{Explicitly Listed Polyhedral Risk Preferences}
\label{subsec:explicit_polyhedral_risk}

We first record the direct extension of the CVaR result to general polyhedral
risk preferences with listed facets.

\begin{definition}[Explicit polyhedral risk preference]
\label{def:explicit_polyhedral_risk}
A risk preference is \emph{explicit polyhedral} if, for every receiver action
\(a\in\mathcal A\),
\[
    \rho(\mu,a)
    =
    \max_{\ell\in\mathcal L_a}\langle c_{a,\ell},\mu\rangle,
\]
where the finite index set \(\mathcal L_a\) and all coefficient vectors
\(c_{a,\ell}\in\mathbb R^{|\Omega|}\) are given as part of the input.  Let
$
    L:=\sum_{a\in\mathcal A}|\mathcal L_a|
$
denote the total number of listed affine pieces.
\end{definition}

CVaR is the leading example of this class.  By
Lemma~\ref{lemma:cvar_pwl}, each action has at most \(|\Omega|\) affine pieces,
and these pieces can be constructed from the payoff table.  The next theorem
states the general form of the active-facet LP.  Its running time is polynomial
in the number of states, actions, and listed affine pieces.

\begin{theorem}[Exact LP for explicit polyhedral risk persuasion]
\label{thm:explicit_polyhedral_lp}
Consider a finite-state Bayesian persuasion problem with finite action set
\(\mathcal A\), prior \(\mu_0\), sender payoff \(v\), and an explicit
polyhedral receiver risk preference as in
Definition~\ref{def:explicit_polyhedral_risk}.  Let \(m=|\Omega|\) and
\(L=\sum_{a\in\mathcal A}|\mathcal L_a|\).  Then an optimal
incentive-compatible signaling scheme can be computed by a linear program with
\(mL\) nonnegative joint-mass variables and at most \(m+L^2\) linear
constraints, apart from nonnegativity.  Hence the problem is solvable in time
polynomial in \(m\), \(L\), and the input bit length.
\end{theorem}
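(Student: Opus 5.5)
The plan is to repeat the active-facet argument behind Proposition~\ref{prop:active_facet_revelation} and Theorem~\ref{thm:active_facet_lp}, using only the max-affine form of \(\rho\) and never the CVaR formula. The proof has three parts: a revelation step, an LP formulation, and a polynomiality argument.

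\textbf{Revelation step.} Take any incentive-compatible scheme \((\mathcal S,\pi)\).

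For each signal \(s\) with \(\mathbb P(s)>0\), let \(a^*(s)\) be the recommended action. Choose an index \(\ell(s)\in\arg\max_{\ell\in\mathcal L_{a^*(s)}}\langle c_{a^*(s),\ell},\mu_s\rangle\), breaking ties by a fixed rule.

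By weak incentive compatibility, \(\rho(\mu_s,a^*(s))\ge\rho(\mu_s,a')\) for every \(a'\). Hence
\[
\langle c_{a^*(s),\ell(s)},\mu_s\rangle \ge \langle c_{a',\ell'},\mu_s\rangle
\quad\text{for all } a'\in\mathcal A,\ \ell'\in\mathcal L_{a'},
\]
so \(\mu_s\in P_{a^*(s),\ell(s)}\).

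Now merge all signals that share the same pair \((a,\ell)\). The merged posterior is a convex combination of posteriors in \(P_{a,\ell}\). This set is an intersection of the simplex with finitely many linear half-spaces, so it is convex and the merged posterior stays in \(P_{a,\ell}\). Action \(a\) therefore remains a weak best response.

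The sender's value is \(\sum_\omega\sum_s \mu_0(\omega)\pi(s|\omega)v(\omega,a^*(s))\), which is linear in the joint masses. Merging signals with the same action does not change it. Signals of probability zero can be dropped.

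\textbf{LP formulation.} Introduce variables \(x_{a,\ell}(\omega)=\mu_0(\omega)\pi((a,\ell)|\omega)\ge 0\), one for each state \(\omega\) and each pair \((a,\ell)\), giving \(mL\) variables. Then:
\begin{itemize}
\item The objective is \(\max\sum_{a,\ell,\omega}x_{a,\ell}(\omega)v(\omega,a)\).
\item Consistency requires \(\sum_{a,\ell}x_{a,\ell}(\omega)=\mu_0(\omega)\) for each \(\omega\), giving \(m\) equalities.
\item Incentive constraints require \(\sum_\omega x_{a,\ell}(\omega)\bigl(c_{a,\ell}(\omega)-c_{a',\ell'}(\omega)\bigr)\ge0\) for each ordered pair \((a,\ell)\neq(a',\ell')\), giving at most \(L^2\) constraints.
\end{itemize}
The incentive constraints are homogeneous, so they are equivalent to \(\mu_{(a,\ell)}\in P_{a,\ell}\) whenever the signal has positive mass. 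When the signal has zero mass they hold trivially.

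\textbf{Correctness.} Every feasible \(x\) defines a scheme via \(\pi((a,\ell)|\omega)=x_{a,\ell}(\omega)/\mu_0(\omega)\), which is well defined because \(\mu_0\) is interior. Under sender-favorable tie-breaking this scheme is incentive compatible and achieves the LP objective. Conversely, the revelation step shows that every incentive-compatible scheme yields a feasible point with the same value. So the LP optimum equals the persuasion optimum, and an optimal vertex gives an optimal scheme.

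\textbf{Polynomiality.} The LP has \(mL\) variables and \(m+L^2\) constraints. Its coefficients are differences of input entries plus the prior and the payoffs, so their bit length is polynomial in the input. The ellipsoid method or an interior-point method then solves it in time polynomial in \(m\), \(L\), and the input bit length.

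\textbf{Expected obstacle.} The only delicate point is the revelation step. Merging must be done by the active facet and not by action alone, and the merged posterior must stay in \(P_{a,\ell}\) because that set is convex. The rest is bookkeeping.
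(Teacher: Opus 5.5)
Your proposal is correct and follows essentially the same route as the paper. Both arguments assign each signal of an incentive-compatible scheme an active facet, so that its posterior lies in $P_{a,\ell}$, and then aggregate joint masses by action--facet type; your explicit convexity-of-$P_{a,\ell}$ merging step is exactly what the paper's homogeneous IC constraints encode. Both then solve the same $mL$-variable LP with $m$ Bayes-plausibility equalities and at most $L^2$ incentive inequalities, with the same counting and bit-length argument for polynomial-time solvability.
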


Proof in Appendix~\ref{app:explicit_polyhedral_lp}.

This theorem is a representation statement.  Polyhedrality alone is not a
source of computational hardness: if all facets are listed, they can be treated
as refined recommendation types.  Any hardness result must therefore rely on a
different representation model, in which the relevant affine pieces are hidden,
implicit, or exponentially many.

\subsection{Succinct Polyhedral Risk and Computational Hardness}
\label{subsec:succinct_polyhedral_hardness}

We next consider such a representation model.  A succinct polyhedral risk
preference still has a max-affine form, but the affine pieces are specified by
a compact combinatorial rule rather than by an explicit list.  Expanding all
facets would recover the LP above, but the expanded formulation may have
exponential size.  Moreover, identifying a useful hidden facet can itself be a
hard combinatorial problem.

A polyhedral risk functional is given succinctly if, for each action \(a\),
\[
    \rho(\mu,a)
    =
    \max_{\ell\in\mathcal L_a}\langle c_{a,\ell},\mu\rangle,
\]
but the index family \(\mathcal L_a\) is specified by a polynomial-size
combinatorial description rather than by enumeration.  

In the reduction, the hidden affine pieces are indexed by the \(K\)-cliques of
an input graph, while all payoffs and affine coefficients lie in \([0,1]\).

\begin{theorem}[NP-hardness for succinct polyhedral risk]
\label{thm:succinct_polyhedral_np_hard}
Deciding whether a signaling scheme can achieve sender value at least a given
threshold is NP-hard for succinctly represented polyhedral risk preferences,
even with an explicit finite state space, two receiver actions, and a uniform
prior.
\end{theorem}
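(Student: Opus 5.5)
The reduction will be from \textsc{Clique}: given a graph \(G=(V,E)\) with \(|V|=N\) and an integer \(K\), decide whether \(G\) has a \(K\)-clique. We take the state space \(\Omega=V\), which is explicit and of size \(N\), and a uniform prior \(\mu_0(v)=1/N\). There are two receiver actions. The default action \(a_0\) has linear value \(\rho(\mu,a_0)=\tfrac{1}{K}\cdot\alpha\) (a constant affine piece, \(\alpha\in[0,1]\) tuned below). The target action \(a_1\) has succinct facet family \(\mathcal L_{a_1}=\{\text{\(K\)-cliques } C \text{ of } G\}\) with coefficient vector \(c_{a_1,C}=\mathbf 1_C\), so that
\[
\rho(\mu,a_1)=\max_{C\ K\text{-clique}}\mu(C).
\]
This is a polynomial-size description consisting of \(G\) and \(K\), and all coefficients lie in \([0,1]\). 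The sender gets payoff \(1\) when the receiver plays \(a_1\) and \(0\) otherwise, independent of the state. The sender's value is therefore the maximal probability of inducing \(a_1\).

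The key step is to compute this value in both cases. We choose the threshold for \(a_1\) to be \(\mu(C)\ge \theta\), where \(\theta\) is set via \(\alpha\). By Proposition~\ref{prop:active_facet_revelation}, extended verbatim to listed-or-not polyhedral families, it suffices to consider signals indexed by pairs \((a_1,C)\) together with \(a_0\). A signal with active clique \(C\) has posterior placing mass at least \(\theta\) on \(C\). Bayes plausibility then bounds the total probability of such signals by \(\mu_0(C)/\theta = K/(N\theta)\) per clique used. The obstacle is that several different cliques can be used simultaneously. To make this a clean gap, we take \(\theta=1\), so that each signal recommending \(a_1\) must be supported inside some \(K\)-clique. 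The sender can then induce \(a_1\) with probability equal to \(\mu_0\) of the union of vertices covered by \(K\)-cliques, which reflects coverage rather than existence.

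To obtain a sharp yes/no gap, I would instead make \(a_0\) depend on the posterior and pick \(\theta\) so that the only profitable pooling is a full clique. Concretely, we take \(\rho(\mu,a_0)=\theta\) constant with \(\theta=1\). The value then equals \(\Pr\) of the union of \(K\)-clique vertex sets. This is positive iff a \(K\)-clique exists, so deciding value \(\ge 1/N\) (or \(>0\)) decides \textsc{Clique}. We use \(\theta=1\) together with weak IC and sender-favorable tie-breaking, so a posterior uniform on a clique \(C\) gives \(\rho(\mu,a_1)=1=\rho(\mu,a_0)\) and the receiver takes \(a_1\).

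For the proof itself, I would check both directions. If a \(K\)-clique \(C\) exists, the scheme that sends signal \(C\) exactly when \(\omega\in C\) yields value \(K/N\ge 1/N\). Conversely, any signal inducing \(a_1\) needs \(\max_C\mu_s(C)\ge 1\). This requires some \(K\)-clique to exist; if none exists, the maximum over the empty family is \(-\infty\) by convention, or we add a dummy facet \(0\), and the value is \(0\). Finally, I would note that all numbers lie in \([0,1]\) and the construction is polynomial. The main point to guard is that the hardness genuinely sits in the implicit facet family: evaluating \(\rho(\mu_0,a_1)\) at a single posterior is already max-weight-clique-hard. The threshold instance above shows that even the sender's decision problem inherits this hardness.
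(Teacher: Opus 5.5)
Your final construction is the paper's reduction from $K$-Clique: $\Omega=V$, a uniform prior, a safe action with constant value $1$, target-action facets $\mathbf 1_C$ indexed by the $K$-cliques of $G$, and sender payoff $1$ exactly on the target action; your two directions also match its proof (the paper uses threshold $K/n$ and sets the empty maximum to $0$, which amounts to your dummy zero facet and threshold $1/N$). The detour through $\alpha$, a general $\theta$, and the coverage value is unnecessary and slightly contradictory, since you say $a_0$ should depend on the posterior and then take it constant, so you can drop it: with $\theta=1$ the yes/no gap is immediate.
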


Proof in Appendix~\ref{app:succinct_polyhedral_np_hard}.

Together, Theorems~\ref{thm:explicit_polyhedral_lp} and
\ref{thm:succinct_polyhedral_np_hard} identify a representation boundary.
Max-affine risk preferences are tractable when their affine pieces are listed,
but the same structure can be computationally hard when those pieces are given
only implicitly.  The obstruction is therefore not nonlinearity per se, but
the representation of the receiver's polyhedral risk landscape.

\section{Posterior Discretization for Finite-Statistic Risk Preferences}
\label{sec:qptas}

The hardness result above is an exact worst-case statement for arbitrary
succinct descriptions of polyhedral risk.  We now turn to a finite-precision
setting. The approximation problem does not require recovering every hidden facet of the
receiver's risk landscape; it only requires preserving the posterior statistics
that determine the receiver's evaluation. This leads to approximate incentive compatibility
rather than exact incentive compatibility.  For implicit statistic families,
the relevant input model is the cell-separation model in
Assumption~\ref{assump:finite_statistic_access}.

\begin{assumption}[Finite-statistic cell access]
\label{assump:finite_statistic_access}
There exist statistic functions
\[
    g_1,\ldots,g_D:\Omega\to\mathbb R,
    \qquad |g_j(\omega)|\le C_g,
\]
and maps \(\Psi_a:\mathbb R^D\to\mathbb R\), one for each action
\(a\in\mathcal A\), such that
\begin{equation}\label{eq:finite_statistic_risk}
    \rho(\mu,a)
    =
    \Psi_a\bigl(
        \langle g_1,\mu\rangle,\ldots,\langle g_D,\mu\rangle
    \bigr).
\end{equation}
Each \(\Psi_a\) is \(L_\Psi\)-Lipschitz in the \(\ell_\infty\)-norm.  The
algorithm can evaluate \(\rho(\bar\mu,a)\) at grid posteriors and can enforce
the statistic cells used below: given a grid center \(\bar\mu\), a tolerance
\(\eta\), and a candidate posterior \(\mu\), it can either certify
\begin{equation}\label{eq:cell_access}
    \max_{j\in[D]}
    |\langle g_j,\mu-\bar\mu\rangle|
    \le \eta
\end{equation}
or return a violated statistic.  
This condition is satisfied when the statistics are explicitly listed; more
generally, it may be imposed through a polynomial-time separation oracle for
the statistic cells.
\end{assumption}
This access condition is used only for the finite-precision scheme; the exact
LPs for CVaR and explicitly listed polyhedral risks do not rely on it.

For max-affine risks, the statistics can be chosen as the affine pieces.  If
\[
    \rho(\mu,a)=\max_{\ell\in\mathcal L_a}
    \langle c_{a,\ell},\mu\rangle,
\]
then set \(g_{a,\ell}=c_{a,\ell}\).  In this case
\(D=\sum_a|\mathcal L_a|\) and \(L_\Psi=1\).  For CVaR,
Lemma~\ref{lemma:cvar_pwl} gives \(D\le mn\) and
\(C_g\le M_R(1+2/r)\).  For succinct polyhedral risks, \(D\) may be
exponentially large.  The bounds below depend on \(\log D\), while the LP
implementation relies on the cell-access condition in
Assumption~\ref{assump:finite_statistic_access}.

For the discretization bounds, write
\begin{equation}
    B_\rho:=C_gL_\Psi .
\end{equation}

\paragraph{Sampling finite statistics.}
The discretization uses the standard empirical approximation argument for
finite families of bounded linear statistics.  A posterior need not be
approximated in every coordinate of the simplex; it is enough to approximate
the statistics through which the receiver's risk value changes.  The
state-contingent LP then enforces these statistic cells directly.

\subsection{Discretization by \(k\)-Uniform Distributions}
The belief simplex \(\Delta(\Omega)\) is continuous, so we replace it with the
finite grid of \(k\)-uniform distributions. For \(k\in\mathbb N\), let
\begin{equation}
    \mathcal D_k
    =
    \left\{
    \mu\in\Delta(\Omega):
    \mu(\omega)\in\left\{0,\frac1k,\ldots,1\right\}
    \text{ for every }\omega
    \right\}.
\end{equation}
Its cardinality is
\[
    |\mathcal D_k|
    =
    \binom{m+k-1}{k}
    \le m^k.
\]

The grid must approximate receiver values for all actions.  Under the
finite-statistic representation, it is enough to approximate the statistics
\(\langle g_j,\mu\rangle\) uniformly.

\begin{lemma}[Uniform finite-statistic approximation by \(k\)-uniform posteriors]
\label{lemma:approx}
Suppose Assumption~\ref{assump:finite_statistic_access} holds.  For any
posterior \(\mu\in\Delta(\Omega)\) and any \(\epsilon_R>0\), if
$
    k
    \ge
    \frac{2B_\rho^2}{\epsilon_R^2}
    \log(2D),
$
then there exists a \(k\)-uniform posterior \(\bar\mu\in\mathcal D_k\) such
that
\begin{equation}\label{eq:stat_approx}
    \max_{j\in[D]}
    \left|
    \langle g_j,\mu-\bar\mu\rangle
    \right|
    \le
    \frac{\epsilon_R}{L_\Psi}.
\end{equation}
Consequently,
\begin{equation}\label{eq:risk_value_approx}
    \max_{a\in\mathcal A}
    |\rho(\mu,a)-\rho(\bar\mu,a)|
    \le \epsilon_R.
\end{equation}
\end{lemma}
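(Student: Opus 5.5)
The proof is the standard empirical-sampling (Althöfer / Lipton--Markakis--Mehta) argument combined with a union bound over the \(D\) statistics. Fix \(\mu\in\Delta(\Omega)\) and draw \(k\) i.i.d.\ states \(\omega^{(1)},\ldots,\omega^{(k)}\sim\mu\). Let \(\bar\mu\) be their empirical distribution, \(\bar\mu(\omega)=\frac1k\#\{i:\omega^{(i)}=\omega\}\). By construction \(\bar\mu\in\mathcal D_k\). For each \(j\in[D]\), the quantity \(\langle g_j,\bar\mu\rangle=\frac1k\sum_{i}g_j(\omega^{(i)})\) is an average of independent random variables taking values in \([-C_g,C_g]\), with mean \(\langle g_j,\mu\rangle\).

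Hoeffding's inequality, for variables in an interval of length \(2C_g\), gives
\[
\Pr\bigl[|\langle g_j,\mu-\bar\mu\rangle|>t\bigr]\le 2\exp\!\left(-\frac{2k t^2}{(2C_g)^2}\right)=2\exp\!\left(-\frac{k t^2}{2C_g^2}\right).
\]
Take \(t=\epsilon_R/L_\Psi\), so that \(t^2/C_g^2=\epsilon_R^2/B_\rho^2\). The right-hand side becomes \(2\exp(-k\epsilon_R^2/(2B_\rho^2))\). A union bound over the \(D\) statistics then bounds the failure probability by \(2D\exp(-k\epsilon_R^2/(2B_\rho^2))\).

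The main point to check is that the stated threshold makes this failure probability strictly less than one, so that the probabilistic method yields a deterministic \(\bar\mu\). With \(k\ge \frac{2B_\rho^2}{\epsilon_R^2}\log(2D)\), the bound is at most \(1\), and equality is possible at the threshold. In the boundary case I would note that the inequality is non-strict and Hoeffding's tail is for \(>t\). The cleanest fix is to observe that the variables are bounded, so for \(k\ge1\) the probability of each deviation strictly exceeding \(t\) is strictly below Hoeffding's bound whenever that bound is positive. Alternatively, one may apply Hoeffding at any \(t'\) slightly smaller than \(t\) where the bound is still at most \(1\) and use continuity. If this technicality became awkward, I would fall back to observing that the tail \(\Pr[|X|\ge t]\) has expectation-zero structure: the event \(|\langle g_j,\mu-\bar\mu\rangle|\le t\) for all \(j\) has positive probability because \(\mathbb E\) of the union-bound sum is at most \(1\), and the sum cannot be identically \(\ge1\) since for the realization closest to \(\mu\) the deviations are small. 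Either way, some realization \(\bar\mu\in\mathcal D_k\) satisfies \eqref{eq:stat_approx}.

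The consequence \eqref{eq:risk_value_approx} is then immediate from Assumption~\ref{assump:finite_statistic_access}. For each action \(a\), write \(\rho(\mu,a)-\rho(\bar\mu,a)=\Psi_a(G\mu)-\Psi_a(G\bar\mu)\) with \(G\mu=(\langle g_j,\mu\rangle)_{j}\). The \(L_\Psi\)-Lipschitz property in \(\ell_\infty\) bounds this by \(L_\Psi\max_j|\langle g_j,\mu-\bar\mu\rangle|\le\epsilon_R\). The same \(\bar\mu\) works for all actions simultaneously because the approximation holds uniformly over all statistics, so taking the maximum over \(a\in\mathcal A\) completes the argument.
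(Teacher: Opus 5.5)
Your proof is correct and is the paper's argument: take the empirical distribution of $k$ i.i.d.\ draws from $\mu$, apply Hoeffding with range $2C_g$ at $t=\epsilon_R/L_\Psi$, union-bound over the $D$ statistics, and conclude via the $\ell_\infty$-Lipschitz property of each $\Psi_a$. The paper simply asserts that $2D\exp(-k\epsilon_R^2/(2B_\rho^2))$ is below one, so your boundary remark is a genuine minor improvement; of your fixes, the first is the sound one (Markov's inequality applied to $e^{\lambda X}$ is strict, since equality would force $X\equiv t$ while $X$ has mean zero and $t>0$), whereas the ``$t'<t$'' variant runs the wrong way (you need $t'\downarrow t$ from above plus finiteness of $\mathcal D_k$) and the last fallback is circular.
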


\begin{proof}
Sample \(X_1,\ldots,X_k\) independently from \(\mu\), and let \(\bar\mu\) be
the empirical distribution.  For a fixed statistic \(g_j\), Hoeffding's
inequality gives
\[
    \Pr\left(
    \left|
    \langle g_j,\bar\mu-\mu\rangle
    \right|
    \ge
    \frac{\epsilon_R}{L_\Psi}
    \right)
    \le
    2\exp\left(
    -\frac{k\epsilon_R^2}{2B_\rho^2}
    \right).
\]
Taking a union bound over the \(D\) statistics, the probability that
\eqref{eq:stat_approx} fails is at most
\[
    2D\exp\left(
    -\frac{k\epsilon_R^2}{2B_\rho^2}
    \right).
\]
By the choice of \(k\), this probability is smaller than one, so there exists
an empirical posterior satisfying \eqref{eq:stat_approx}.  The Lipschitz
condition then implies, for every action \(a\),
\[
    |\rho(\mu,a)-\rho(\bar\mu,a)|
    \le
    L_\Psi
    \max_j|\langle g_j,\mu-\bar\mu\rangle|
    \le \epsilon_R.
\]
\end{proof}

The preceding lemma shows that every posterior can be represented, for the
purpose of receiver incentives, by a nearby grid posterior in terms of the
statistics that determine the receiver's risk values.  We now encode this
approximation as linear cell constraints.

\begin{definition}[Finite-statistic cell]
\label{def:cvar_cell}
For a grid posterior \(\bar\mu\in\mathcal D_k\) and tolerance \(\epsilon_R>0\),
define
\begin{equation}
    \mathcal C_{\epsilon_R}(\bar\mu)
    =
    \left\{
    \mu\in\Delta(\Omega):
    \max_{j\in[D]}
    |\langle g_j,\mu-\bar\mu\rangle|
    \le
    \frac{\epsilon_R}{L_\Psi}
    \right\}.
\end{equation}
\end{definition}
By Lemma~\ref{lemma:approx}, these cells cover the belief simplex.  If
\(\mu\in\mathcal C_{\epsilon_R}(\bar\mu)\), then
\[
    \max_{a\in\mathcal A}
    |\rho(\mu,a)-\rho(\bar\mu,a)|
    \le \epsilon_R.
\]

The cell construction controls how receiver values change when moving from a
grid center $\bar{\mu}$ to any posterior inside its cell.  We therefore allow as signal
labels all actions that are approximately optimal at the grid center.

\begin{definition}[Approximate-center signal labels]
\label{def:approx_signal_space}
For \(\epsilon_R>0\), define
\begin{equation}
    \widehat\Sigma
    =
    \left\{
    (\bar\mu,a)\in\mathcal D_k\times\mathcal A:
    \rho(\bar\mu,a)
    \ge
    \max_{a'\in\mathcal A}\rho(\bar\mu,a')-2\epsilon_R
    \right\}.
\end{equation}
\end{definition}
The construction of \(\widehat\Sigma\) uses only value access at grid
posteriors.

\subsection{The Algorithm and Performance Guarantees}
Fix an accuracy parameter \(\epsilon>0\), set \(\epsilon_R=\epsilon/4\), and
choose \(k\) as in Lemma~\ref{lemma:approx}.  The algorithm is as follows.

\begin{enumerate}
    \item \textbf{Grid generation.} Construct \(\mathcal D_k\).
    \item \textbf{Filtering approximate-center signals.} Construct
    \(\widehat\Sigma\) using Definition~\ref{def:approx_signal_space} and
    value access to \(\rho(\bar\mu,a)\).
    \item \textbf{State-contingent optimization.} Solve the LP below over
    variables \(\varphi(\omega,\sigma)\), where \(\omega\in\Omega\) and
    \(\sigma=(\bar\mu_\sigma,a_\sigma)\in\widehat\Sigma\).
\end{enumerate}

\begin{subequations}\label{lp:qptas_state_contingent}
\begin{align}
    \max_{\varphi}\quad
    & \sum_{\omega\in\Omega}\mu_0(\omega)
      \sum_{\sigma\in\widehat\Sigma}
      \varphi(\omega,\sigma)v(\omega,a_\sigma)
      \label{lp:obj}
      \\
    \text{s.t.}\quad
    & \sum_{\sigma\in\widehat\Sigma}
      \varphi(\omega,\sigma)=1,
      \qquad \forall \omega\in\Omega,
      \label{lp:signal_rule}
      \\
    & \sum_{\omega\in\Omega}
      \mu_0(\omega)\varphi(\omega,\sigma)
      \left(
      g_j(\omega)
      -
      \langle g_j,\bar\mu_\sigma\rangle
      +
      \frac{\epsilon_R}{L_\Psi}
      \right)
      \ge 0,
      \nonumber\\
    &\hspace{7cm}
      \forall \sigma\in\widehat\Sigma,
      \forall j\in[D],
      \label{lp:cell_lower}
      \\
    & \sum_{\omega\in\Omega}
      \mu_0(\omega)\varphi(\omega,\sigma)
      \left(
      g_j(\omega)
      -
      \langle g_j,\bar\mu_\sigma\rangle
      -
      \frac{\epsilon_R}{L_\Psi}
      \right)
      \le 0,
      \nonumber\\
    &\hspace{7cm}
      \forall \sigma\in\widehat\Sigma,
      \forall j\in[D],
      \label{lp:cell_upper}
      \\
    & \varphi(\omega,\sigma)\ge0,
      \qquad
      \forall \omega\in\Omega,
      \forall \sigma\in\widehat\Sigma .
      \label{lp:nonnegative}
\end{align}
\end{subequations}
The program is linear in \(\varphi\).  Its cell constraints enforce closeness
of the induced posterior to the grid center in all statistics \(g_j\), rather
than in all coordinates of the posterior or in all hidden facets of the risk
representation.

\begin{lemma}[Soundness of the State-Contingent LP]
\label{lem:qptas_soundness}
Let \(\varphi\) be any feasible solution to
LP~\eqref{lp:qptas_state_contingent}.  Then \(\varphi\) induces a valid
signaling scheme.  Every supported signal \(\sigma\in\widehat\Sigma\) satisfies
\[
    \operatorname{Reg}_{IC}(\mu_\sigma,a_\sigma)
    \le 4\epsilon_R,
\]
where \(\mu_\sigma\) is the posterior induced by signal \(\sigma\).  In
particular, if \(\epsilon_R=\epsilon/4\), the induced scheme is \(\epsilon\)-IC.
\end{lemma}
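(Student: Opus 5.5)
The argument has three parts: the scheme is valid, each supported posterior lies in its statistic cell, and the regret bound follows from cell approximation together with the label filter.

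\emph{Valid scheme.} Constraints \eqref{lp:signal_rule} and \eqref{lp:nonnegative} say that, for each state \(\omega\), \(\varphi(\omega,\cdot)\) is a probability distribution over \(\widehat\Sigma\). Setting \(\pi(\sigma\mid\omega):=\varphi(\omega,\sigma)\) therefore gives a signaling scheme on the finite signal space \(\widehat\Sigma\), and Bayes plausibility holds automatically.

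\emph{Cell membership.} Fix a supported signal \(\sigma\), so that \(\mathbb P(\sigma)=\sum_\omega\mu_0(\omega)\varphi(\omega,\sigma)>0\). Its posterior is \(\mu_\sigma(\omega)=\mu_0(\omega)\varphi(\omega,\sigma)/\mathbb P(\sigma)\). Dividing \eqref{lp:cell_lower} and \eqref{lp:cell_upper} by \(\mathbb P(\sigma)\), and using \(\sum_\omega\mu_\sigma(\omega)=1\), gives
\[
    -\frac{\epsilon_R}{L_\Psi}\le\langle g_j,\mu_\sigma\rangle-\langle g_j,\bar\mu_\sigma\rangle\le\frac{\epsilon_R}{L_\Psi}
\]
for every \(j\). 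Hence \(\mu_\sigma\in\mathcal C_{\epsilon_R}(\bar\mu_\sigma)\). The Lipschitz property of each \(\Psi_a\), as noted after Definition~\ref{def:cvar_cell}, then yields \(|\rho(\mu_\sigma,a)-\rho(\bar\mu_\sigma,a)|\le\epsilon_R\) for all \(a\).

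\emph{Regret bound.} I read \(\operatorname{Reg}_{IC}(\mu,a)\) as \(\max_{a'}\rho(\mu,a')-\rho(\mu,a)\), the natural quantity for \(\epsilon\)-IC; this definition is the one point I would check against the paper's. Let \(a^\star\) maximize \(\rho(\mu_\sigma,\cdot)\). Then
\begin{align*}
    \rho(\mu_\sigma,a^\star)
    &\le\rho(\bar\mu_\sigma,a^\star)+\epsilon_R\\
    &\le\max_{a'}\rho(\bar\mu_\sigma,a')+\epsilon_R\\
    &\le\rho(\bar\mu_\sigma,a_\sigma)+2\epsilon_R+\epsilon_R\\
    &\le\rho(\mu_\sigma,a_\sigma)+4\epsilon_R .
\end{align*}
The third inequality uses the membership condition of Definition~\ref{def:approx_signal_space}, and the first and last use cell membership. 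This gives \(\operatorname{Reg}_{IC}\le4\epsilon_R\), which equals \(\epsilon\) when \(\epsilon_R=\epsilon/4\).

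Each step is routine. The only potential subtlety is that the LP constraints are homogeneous in \(\varphi(\cdot,\sigma)\), so they carry no information about unsupported signals. Those signals are excluded by hypothesis, so no obstacle arises there.
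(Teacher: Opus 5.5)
Your proof is correct and follows the paper's argument step for step. Validity comes from the signaling-rule and nonnegativity constraints, cell membership from dividing the homogeneous cell constraints by \(p_\sigma\), and then the Lipschitz bound and the same cell/label-filter/cell chain give \(4\epsilon_R\). Your reading of \(\operatorname{Reg}_{IC}(\mu,a)\) as \(\max_{a'}\rho(\mu,a')-\rho(\mu,a)\) is the one the paper implicitly uses, since its proof establishes \(\rho(\mu_\sigma,a')\le\rho(\mu_\sigma,a_\sigma)+4\epsilon_R\) for every competing action \(a'\).
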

Proof in Appendix~\ref{app:qptas_soundness_proof}.

\begin{lemma}[Completeness of the State-Contingent LP]
\label{lem:qptas_completeness}
Let \(OPT\) denote the optimal sender value among exact-IC signaling schemes.
Suppose that there exists an optimal exact-IC signaling scheme with finite
support.  Then the optimal value of LP~\eqref{lp:qptas_state_contingent} is at
least \(OPT\).
\end{lemma}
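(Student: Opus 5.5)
Take an optimal exact-IC scheme with finite support, signals $s\in\mathcal S$, probabilities $\mathbb P(s)>0$, posteriors $\mu_s$, and sender-favorable recommended actions $a^*(s)$. The plan is to map each signal $s$ to a label $\sigma(s)=(\bar\mu_s,a^*(s))\in\widehat\Sigma$. Then we define $\varphi(\omega,\sigma)$ as the total conditional probability, given $\omega$, of the signals mapped to $\sigma$. Finally we check that this $\varphi$ is feasible for LP~\eqref{lp:qptas_state_contingent} and attains objective value $OPT$.

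\emph{Choosing the labels.} For each supported $s$, Lemma~\ref{lemma:approx} (with the chosen $k$) gives a grid posterior $\bar\mu_s\in\mathcal D_k$ with $\max_j|\langle g_j,\mu_s-\bar\mu_s\rangle|\le\epsilon_R/L_\Psi$. It also gives $|\rho(\mu_s,a)-\rho(\bar\mu_s,a)|\le\epsilon_R$ for all $a$. Since $a^*(s)$ is exactly optimal at $\mu_s$, we get for every $a'$
\[
\rho(\bar\mu_s,a^*(s))\ge\rho(\mu_s,a^*(s))-\epsilon_R\ge\rho(\mu_s,a')-\epsilon_R\ge\rho(\bar\mu_s,a')-2\epsilon_R .
\]
Hence $(\bar\mu_s,a^*(s))\in\widehat\Sigma$. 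The objective is preserved automatically: every signal mapped to $\sigma$ carries action $a_\sigma=a^*(s)$, so $\sum_\omega\mu_0(\omega)\sum_\sigma\varphi(\omega,\sigma)v(\omega,a_\sigma)$ equals the original sender value $OPT$. The constraint \eqref{lp:signal_rule} holds because the map $s\mapsto\sigma(s)$ partitions $\mathcal S$ and $\sum_s\pi(s|\omega)=1$. Nonnegativity is immediate.

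\emph{Cell constraints (the main point).} Several signals may be merged into the same label $\sigma$. The step to watch is that merging does not break \eqref{lp:cell_lower}--\eqref{lp:cell_upper}, and it does not, because those constraints are linear in the joint mass. For a single signal $s$ with label $\sigma$, the cell condition reads
\[
\sum_\omega\mu_0(\omega)\pi(s|\omega)\bigl(g_j(\omega)-\langle g_j,\bar\mu_\sigma\rangle\bigr)=\mathbb P(s)\,\langle g_j,\mu_s-\bar\mu_\sigma\rangle .
\]
This quantity lies in $[-\mathbb P(s)\epsilon_R/L_\Psi,\ \mathbb P(s)\epsilon_R/L_\Psi]$. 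Moreover $\sum_\omega\mu_0(\omega)\pi(s|\omega)\,\epsilon_R/L_\Psi=\mathbb P(s)\epsilon_R/L_\Psi$. So each signal individually satisfies both the lower and upper inequalities with its own mass.

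Summing over all $s$ with $\sigma(s)=\sigma$ then gives the constraints for $\varphi(\cdot,\sigma)$, since both sides are linear in $\pi(s|\cdot)$. Equivalently, the merged posterior is a convex combination of posteriors lying in the convex cell $\mathcal C_{\epsilon_R}(\bar\mu_\sigma)$, so it stays in that cell.

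Note that we never need exact IC of the merged posterior; only membership in the cell is required. This is exactly why the label space $\widehat\Sigma$ uses the $2\epsilon_R$ slack. Finite support is used so that $\varphi$ is a finite sum over signals. We conclude that $\varphi$ is feasible with value $OPT$, hence the LP optimum is at least $OPT$.
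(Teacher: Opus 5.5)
Your proposal is correct and follows the paper's proof essentially step for step. You map each supported signal to its $k$-uniform proxy paired with its recommended action, and you show membership in $\widehat\Sigma$ by the same $2\epsilon_R$ chain. You then aggregate the signals that share a label and verify the cell constraints exactly as the paper does: each signal's statistic bound is scaled by its probability via $\mathbb P(s)\mu_s(\omega)=\mu_0(\omega)\pi(s\mid\omega)$ and summed, and the objective is preserved because every label keeps the original action. The paper leaves implicit your observation that this summation amounts to convexity of $\mathcal C_{\epsilon_R}(\bar\mu_\sigma)$.
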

Proof in Appendix~\ref{app:qptas_completeness_proof}.

\begin{theorem}[Finite-Precision Discretization]
\label{thm:qptas}
Assume the finite-statistic cell-access model of
Assumption~\ref{assump:finite_statistic_access}.  For any \(\epsilon>0\), let
\[
    k=
    O\left(
    \frac{B_\rho^2\log D}{\epsilon^2}
    \right).
\]
Then the discretized state-contingent LP returns a signaling scheme with IC
regret at most \(\epsilon\) at every supported posterior-action pair and sender
utility at least \(OPT\).  When the statistics are explicitly listed, the
running time is
\[
    m^{O(B_\rho^2\log D/\epsilon^2)}
    \operatorname{poly}(m,n,D,T_{\mathrm{eval}}).
\]
With implicit statistics and statistic-cell separation, the polynomial
dependence on \(D\) is replaced by the separation cost.
\end{theorem}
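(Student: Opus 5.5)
The plan is to assemble Theorem~\ref{thm:qptas} from Lemmas~\ref{lemma:approx}, \ref{lem:qptas_soundness} and \ref{lem:qptas_completeness}, and then count the size of LP~\eqref{lp:qptas_state_contingent}.

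\textbf{Step 1: fix the parameters.} Set $\epsilon_R=\epsilon/4$. Take
\[
k=\left\lceil \frac{2B_\rho^2}{\epsilon_R^2}\log(2D)\right\rceil = O\!\left(\frac{B_\rho^2\log D}{\epsilon^2}\right),
\]
which meets the hypothesis of Lemma~\ref{lemma:approx}. The finite-statistic cells of Definition~\ref{def:cvar_cell} then cover $\Delta(\Omega)$ at tolerance $\epsilon_R/L_\Psi$.

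\textbf{Step 2: feasibility and incentive guarantee.} Any feasible $\varphi$ (for instance the optimum) induces a valid signaling scheme. By Lemma~\ref{lem:qptas_soundness}, every supported signal satisfies
\[
\operatorname{Reg}_{IC}(\mu_\sigma,a_\sigma)\le 4\epsilon_R=\epsilon,
\]
which is the stated $\epsilon$-IC guarantee.

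\textbf{Step 3: value guarantee.} The sender value of the induced scheme equals the LP objective~\eqref{lp:obj}, since $\mu_0(\omega)\varphi(\omega,\sigma)$ is exactly the joint mass of state $\omega$ and signal $\sigma$. Lemma~\ref{lem:qptas_completeness} then gives value at least $OPT$, but only under its hypothesis that some optimal exact-IC scheme has finite support. I would discharge this with the finite-support (Carathéodory-type) bound recorded in Appendix~\ref{app:general_cardinality_proof} for arbitrary risk preferences.

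This is the step I expect to be the main obstacle. For a general continuous $\rho$ the sender's feasible set is defined by nonconvex IC regions, so the supremum might not be attained. I would first try to argue attainment. Each $\rho(\cdot,a)$ is continuous because $\Psi_a$ is Lipschitz and the statistics are linear. Hence each region $\{\mu: a\in\arg\max\rho(\mu,\cdot)\}$ is closed. The sender's value is then the concavification of an upper semicontinuous function $\hat v(\mu)=\max\{\mathbb E_\mu v(\cdot,a): a \text{ IC at } \mu\}$ on a compact set. Carathéodory then yields an optimal distribution over at most $m$ posteriors. If attainment were problematic, the fallback is to replace $OPT$ by $OPT-\delta$ for arbitrary $\delta>0$ and take limits, using finiteness of the LP.

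\textbf{Step 4: running time.} $|\mathcal D_k|\le m^k$, and $\widehat\Sigma$ is built with $n|\mathcal D_k|$ evaluations of $\rho$. The LP has $m|\widehat\Sigma|\le mn\,m^k$ variables. It has $m+2D|\widehat\Sigma|$ constraints, whose coefficients are computable from the listed $g_j$ in $\operatorname{poly}(m,D)$ time. A polynomial-time LP solver then gives the bound
\[
m^{O(B_\rho^2\log D/\epsilon^2)}\operatorname{poly}(m,n,D,T_{\mathrm{eval}}).
\]
For implicit statistics, I would run the ellipsoid method with the cell-access oracle of Assumption~\ref{assump:finite_statistic_access} as the separation oracle for constraints~\eqref{lp:cell_lower}--\eqref{lp:cell_upper}. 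This is valid because the cell condition for signal $\sigma$ is homogeneous in the masses $\mu_0(\omega)\varphi(\omega,\sigma)$: a violated statistic returned at the normalized posterior yields a violated linear inequality. This replaces the $\operatorname{poly}(D)$ factor by the separation cost.
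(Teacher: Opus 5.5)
Your proposal is correct and follows the paper's proof: Lemma~\ref{lem:qptas_soundness} gives the $4\epsilon_R=\epsilon$ regret bound, Lemma~\ref{lem:qptas_completeness} gives LP value at least $OPT$, and the running time follows from $|\widehat\Sigma|\le n m^k$ and the LP size, with ellipsoid or cutting planes for implicit statistics. Where you go beyond the paper is in discharging the finite-support hypothesis of Lemma~\ref{lem:qptas_completeness}, which the paper's proof invokes without comment; your upper-semicontinuity and Carath\'eodory attainment argument is sound, and your $OPT-\delta$ fallback works on its own, because the completeness construction never uses optimality and the model defines signaling schemes with finite signal spaces.
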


When \(B_\rho\) is bounded and \(D\) is polynomial in the explicit input size,
the running time is quasi-polynomial in the input size and in \(1/\epsilon\).
The value comparison is against the exact-IC benchmark; because the output is
allowed \(\epsilon\)-IC, its sender value may exceed the exact-IC optimum.

Here \(T_{\mathrm{eval}}\) is the cost of evaluating
\(\rho(\bar\mu,a)\) at a grid posterior, and \(T_{\mathrm{cell}}\) is the cost
of separating the statistic-cell constraints.  In the explicitly listed case,
the LP includes these constraints directly; in the separated case, it is solved
by a standard cutting-plane or ellipsoid method.

\begin{proof}
By Lemma~\ref{lem:qptas_soundness}, any feasible solution of
LP~\eqref{lp:qptas_state_contingent} induces a signaling scheme with IC regret
at most \(4\epsilon_R=\epsilon\).  By Lemma~\ref{lem:qptas_completeness}, the
LP optimal value is at least \(OPT\).  Therefore the LP optimizer induces a
signaling scheme with sender value at least \(OPT\) and IC regret at most
\(\epsilon\).

It remains to bound the running time.  Since
\[
    |\mathcal D_k|=\binom{m+k-1}{k}\le m^k,
\]
we have
\[
    |\widehat\Sigma|
    \le
    n|\mathcal D_k|
    \le
    n m^k.
\]
If the statistics are explicitly listed, the LP has \(m|\widehat\Sigma|\)
variables and \(O(m|\widehat\Sigma|+D|\widehat\Sigma|)\) constraints.  
More generally, if the statistics are implicit, the cell constraints are
handled through the cell-separation oracle in
Assumption~\ref{assump:finite_statistic_access}; a standard cutting-plane or
ellipsoid method then solves the separated LP in time polynomial in the
represented size and the oracle costs. Substituting
\[
    k=O\left(\frac{B_\rho^2\log D}{\epsilon^2}\right)
\]
gives the claimed bound.
\end{proof}

\subsection{From Approximate IC to Strict IC via Margin Filtering}
\label{subsec:strict_ic}
Theorem~\ref{thm:qptas} gives a scheme whose recommended actions have uniformly bounded IC regret.  This approximate guarantee is unavoidable near receiver indifference boundaries: if two actions have nearly equal receiver risk value at a posterior, an $O(\epsilon)$ perturbation of that posterior can change the exact best response.  We next record a standard margin condition under which the same discretization yields signal-wise strict incentive compatibility.

\begin{definition}[IC Margin and Margin-Filtered Signal Labels]
\label{def:margin_filter}
For a posterior-action pair $(\mu,a)$, define the IC margin
\begin{equation}
    \Gamma(\mu,a)
    \coloneqq
    \rho(\mu,a)-\max_{a'\neq a}\rho(\mu,a').
\end{equation}
For $\gamma>0$, define the margin-filtered signal alphabet
\begin{equation}
    \widehat\Sigma_\gamma
    =
    \left\{
    (\bar\mu,a)\in\mathcal D_k\times\mathcal A:
    \Gamma(\bar\mu,a)\ge \frac{\gamma}{2}
    \right\}.
\end{equation}
\end{definition}

\begin{proposition}[Margin Stability under Risk-Value Approximation]
\label{prop:boundary_instability}
If $\bar\mu$ is an $\epsilon$-proxy of $\mu$ in the sense that
\begin{equation}
\max_{a\in\mathcal{A}}|\rho(\mu,a)-\rho(\bar\mu,a)|\le \epsilon,
\end{equation}
then, for every action $a$,
\begin{equation}
    \Gamma(\bar\mu,a)
    \ge
    \Gamma(\mu,a)-2\epsilon .
\end{equation}
\end{proposition}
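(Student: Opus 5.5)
The plan is a direct two-sided perturbation argument on the two terms in the definition of \(\Gamma\), using only the \(\epsilon\)-proxy hypothesis. Fix an action \(a\). First, the hypothesis applied to \(a\) itself gives the lower bound \(\rho(\bar\mu,a)\ge\rho(\mu,a)-\epsilon\) on the first term of \(\Gamma(\bar\mu,a)\). Second, I will control the competitor term: for every \(a'\neq a\) the hypothesis gives \(\rho(\bar\mu,a')\le\rho(\mu,a')+\epsilon\). Taking the maximum over \(a'\neq a\) on both sides (a finite maximum, since \(\mathcal A\) is finite) then yields
\[
    \max_{a'\neq a}\rho(\bar\mu,a')\le\max_{a'\neq a}\rho(\mu,a')+\epsilon .
\]

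Subtracting the second inequality from the first gives
\[
    \Gamma(\bar\mu,a)=\rho(\bar\mu,a)-\max_{a'\neq a}\rho(\bar\mu,a')
    \ge \rho(\mu,a)-\epsilon-\max_{a'\neq a}\rho(\mu,a')-\epsilon
    =\Gamma(\mu,a)-2\epsilon,
\]
which is the claim. The same argument with the roles of \(\mu\) and \(\bar\mu\) swapped also gives the reverse bound \(\Gamma(\mu,a)\ge\Gamma(\bar\mu,a)-2\epsilon\). So \(\Gamma(\cdot,a)\) is \(2\epsilon\)-stable in both directions. This is worth noting because the later strict-IC argument will presumably use the reverse direction, passing from a grid center with margin \(\gamma/2\) to posteriors in its cell.

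There is essentially no obstacle. The only point needing care is that the maximum over \(a'\neq a\) is monotone under uniform shifts of its arguments, that is, \(\max_i x_i\le\max_i y_i+\epsilon\) whenever \(x_i\le y_i+\epsilon\) for all \(i\). I would state this explicitly. I would also mention the degenerate case \(|\mathcal A|=1\), where the maximum over an empty set is taken to be \(-\infty\), \(\Gamma\) is \(+\infty\) at both posteriors, and the inequality holds trivially.
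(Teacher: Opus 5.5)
Your proof is correct and is essentially the paper's argument: the paper bounds each pairwise difference \(\rho(\bar\mu,a)-\rho(\bar\mu,a')\) below by \(\rho(\mu,a)-\rho(\mu,a')-2\epsilon\) and then takes the minimum over \(a'\neq a\), which is equivalent to your separate bounds on the own-action term and on the competitor maximum. Your remark on the symmetric direction is apt, since the strict-IC proof of Theorem~\ref{thm:strict_ic_margin} does apply the proposition with the roles of \(\mu\) and \(\bar\mu\) swapped.
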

Proof in Appendix~\ref{app:boundary_instability_proof}.

Strict IC is a stability property rather than a purely optimality property.  Exact IC permits ties in the receiver's best-response correspondence, and such ties can be reversed by arbitrarily small posterior perturbations.  We therefore state the strict-IC guarantee relative to a margin-restricted benchmark.  Let $OPT_\gamma$ denote the optimal sender value among signaling schemes whose every supported recommendation has IC margin at least $\gamma$.  The positive margin keeps the benchmark away from the IC boundary and allows the discretized scheme to preserve strict incentives.
\begin{theorem}[Strict IC under a Margin Condition]
\label{thm:strict_ic_margin}
Suppose there exists a finite-support benchmark scheme with sender value
\(OPT_\gamma\) whose every supported posterior-action pair has IC margin at
least \(\gamma>0\).  Run the finite-precision scheme with the margin-filtered
alphabet \(\widehat\Sigma_\gamma\) and choose \(\epsilon_R<\gamma/4\).  Then the
induced signaling scheme is strictly incentive compatible and has sender value
at least \(OPT_\gamma\).

With \(\epsilon_R=\gamma/8\), the running time is
\[
    m^{O(B_\rho^2\log D/\gamma^2)}
    \operatorname{poly}(m,n,D,T_{\mathrm{eval}},T_{\mathrm{cell}})
\]
in the explicitly listed-statistic case, with the analogous separated-oracle
bound when statistic-cell constraints are separated.
\end{theorem}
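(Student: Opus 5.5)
The plan is to run the same soundness/completeness argument used for Theorem~\ref{thm:qptas}, replacing the alphabet \(\widehat\Sigma\) by \(\widehat\Sigma_\gamma\) and the \(\epsilon\)-IC conclusion by a margin argument.

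\emph{Soundness.} Let \(\varphi\) be any feasible solution of LP~\eqref{lp:qptas_state_contingent} with alphabet \(\widehat\Sigma_\gamma\). Exactly as in Lemma~\ref{lem:qptas_soundness}, constraints \eqref{lp:signal_rule} and \eqref{lp:nonnegative} make \(\varphi\) a valid signaling scheme. For a supported \(\sigma=(\bar\mu_\sigma,a_\sigma)\), divide \eqref{lp:cell_lower}--\eqref{lp:cell_upper} by \(\mathbb P(\sigma)>0\). This shows that the induced posterior \(\mu_\sigma\) lies in \(\mathcal C_{\epsilon_R}(\bar\mu_\sigma)\), so \(\max_a|\rho(\mu_\sigma,a)-\rho(\bar\mu_\sigma,a)|\le\epsilon_R\) by the Lipschitz property of \(\Psi_a\). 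Applying Proposition~\ref{prop:boundary_instability} with the roles of \(\mu\) and \(\bar\mu\) swapped (the proxy condition is symmetric) gives
\[
\Gamma(\mu_\sigma,a_\sigma)\ge\Gamma(\bar\mu_\sigma,a_\sigma)-2\epsilon_R\ge \tfrac{\gamma}{2}-2\epsilon_R>0,
\]
because \(\epsilon_R<\gamma/4\). Hence \(a_\sigma\) is the unique best response at \(\mu_\sigma\), so the scheme is strictly IC, and no tie-breaking issue arises.

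\emph{Completeness.} Take the finite-support benchmark scheme with value \(OPT_\gamma\), posteriors \(\mu_s\) and margins \(\Gamma(\mu_s,a_s)\ge\gamma\). Choose \(k\) as in Lemma~\ref{lemma:approx}. For each \(s\) that lemma gives a grid point \(\bar\mu_s\in\mathcal D_k\) with \(\mu_s\in\mathcal C_{\epsilon_R}(\bar\mu_s)\), and Proposition~\ref{prop:boundary_instability} then gives
\[
\Gamma(\bar\mu_s,a_s)\ge\gamma-2\epsilon_R\ge\gamma/2,
\]
so \((\bar\mu_s,a_s)\in\widehat\Sigma_\gamma\). Several benchmark signals may map to the same label \(\sigma\). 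Define \(\varphi(\omega,\sigma)=\sum_{s\mapsto\sigma}\pi(s|\omega)\). The merged posterior is a convex combination of posteriors \(\mu_s\) that each lie in the convex set \(\mathcal C_{\epsilon_R}(\bar\mu_\sigma)\), since the cell is an intersection of slabs. It therefore stays in the cell, and multiplying through by \(\mathbb P(\sigma)\) shows that \eqref{lp:cell_lower}--\eqref{lp:cell_upper} hold. The objective equals \(OPT_\gamma\) because the action label is preserved. Thus the LP optimum is at least \(OPT_\gamma\), and its optimizer is strictly IC by the soundness step.

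\emph{Running time.} With \(\epsilon_R=\gamma/8<\gamma/4\), Lemma~\ref{lemma:approx} requires \(k=O(B_\rho^2\log D/\gamma^2)\). Then \(|\widehat\Sigma_\gamma|\le nm^k\), and filtering the alphabet needs \(n\) value evaluations per grid point. The LP size and solution method then follow verbatim from the proof of Theorem~\ref{thm:qptas}, with the separation-oracle variant covering the implicit case.

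The main obstacle is the completeness merging step: one must check that the cell constraints are convex in the posterior, so that merging preserves membership in the cell, and that the margin filter is applied at the grid center rather than at the merged posterior. This is exactly what lets us avoid the non-convexity of best-response regions described in Section~\ref{subsec:revelation_failure}.
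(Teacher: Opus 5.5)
Your proof is correct and follows essentially the same route as the paper. For completeness, each benchmark posterior is mapped to a grid proxy from Lemma~\ref{lemma:approx}, Proposition~\ref{prop:boundary_instability} places the pair in $\widehat\Sigma_\gamma$, and signals are merged as in Lemma~\ref{lem:qptas_completeness} (your cell-convexity argument is that weighted-sum step); strict IC then follows by applying Proposition~\ref{prop:boundary_instability} with the roles swapped, giving $\Gamma(\mu_\sigma,a_\sigma)\ge\gamma/2-2\epsilon_R>0$.
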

Indeed, every supported signal satisfies
$
    \Gamma(\mu_\sigma,a_\sigma)
    \ge
    \frac{\gamma}{2}-2\epsilon_R>0 .
$
For CVaR, \(D\le mn\) and \(B_\rho=C_R=M_R(1+2/r)\), giving
$
    m^{O(C_R^2\log(mn)/\gamma^2)}\operatorname{poly}(m,n).
$
For fixed \(r\) and normalized payoffs, this is quasi-polynomial.

Proof in Appendix~\ref{app:strict_ic_margin_proof}.

\subsection{A Local Active-Facet Refinement}
\label{subsec:local_refinements}

The finite-statistic discretization in Theorem~\ref{thm:qptas} is a global
certificate: it controls every statistic that may affect the receiver's value.
For max-affine risks, including CVaR and explicitly listed polyhedral risks,
these statistics can be taken to be the affine pieces.  The resulting global
certificate is conservative when the relevant posteriors lie in a small region
of the simplex.  In that case, many affine pieces can never be active and need
not enter the local certificate.

Fix a posterior region \(\mathcal U\subseteq\Delta(\Omega)\) and a radius
\(\eta>0\), and let
\[
    \mathcal U^\eta
    :=
    \{\nu\in\Delta(\Omega):\operatorname{dist}_1(\nu,\mathcal U)\le \eta\}.
\]
For a max-affine risk
\[
    \rho(\mu,a)=\max_{\ell\in\mathcal L_a}
    \langle c_{a,\ell},\mu\rangle,
\]
define the local active-facet family
\[
    \mathcal F_{\rm loc}
    :=
    \left\{
    (a,\ell):
    \exists \nu\in\mathcal U^\eta
    \text{ such that }
    \rho(\nu,a)=\langle c_{a,\ell},\nu\rangle
    \right\},
    \qquad
    N_{\rm loc}:=|\mathcal F_{\rm loc}|.
\]
Thus \(N_{\rm loc}\) is the number of affine pieces that can matter near the
posteriors under consideration.

The local refinement replaces the global statistic family by
\(\mathcal F_{\rm loc}\).  Its proof follows the same soundness-completeness
argument as Theorem~\ref{thm:qptas}, with the global uniform approximation
event replaced by a local one over \(\mathcal F_{\rm loc}\).

\begin{theorem}[Local Discretization Bound]
\label{thm:local_complexity}
Suppose a benchmark scheme is supported on \(\mathcal U\), and suppose the
reduced state-contingent LP uses grid centers in
\(\mathcal D_k\cap\mathcal U^\eta\) and is certified to induce only posteriors
in \(\mathcal U^\eta\).  Then the global max-affine family in
Theorem~\ref{thm:qptas} can be replaced by the local family
\(\mathcal F_{\rm loc}\), while preserving the same value and IC-regret
guarantees.

In particular, if
\[
    V_{\rm loc}
    :=
    \sup_{\mu\in\mathcal U}
    \max_{(a,\ell)\in\mathcal F_{\rm loc}}
    \operatorname{Var}_{\omega\sim\mu}[c_{a,\ell}(\omega)],
\]
then it suffices to take
\[
    k_{\rm loc}
    =
    \widetilde O\left(
    \max\left\{
    \frac{V_{\rm loc}}{\epsilon^2},
    \frac{C_R}{\epsilon}
    \right\}
    \log N_{\rm loc}
    \right),
\]
where \(C_R\) bounds the affine coefficients.  The resulting grid size is at
most \(m^{k_{\rm loc}}\).
\end{theorem}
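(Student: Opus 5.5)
The plan is to rerun the soundness--completeness argument behind Theorem~\ref{thm:qptas}, but with the statistic family restricted to \(\mathcal F_{\rm loc}\), grid centers restricted to \(\mathcal D_k\cap\mathcal U^\eta\), and Hoeffding replaced by a Bernstein-type bound.

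\textbf{Step 1: locality of the risk.} First I will check that on \(\mathcal U^\eta\) the risk depends only on local facets. For every \(\nu\in\mathcal U^\eta\) and every action \(a\),
\[
    \rho(\nu,a)=\max_{\ell:(a,\ell)\in\mathcal F_{\rm loc}}\langle c_{a,\ell},\nu\rangle .
\]
The inequality ``\(\ge\)'' is trivial. For ``\(\le\)'', note that the maximizing facet at \(\nu\) is active at \(\nu\in\mathcal U^\eta\), so by definition it belongs to \(\mathcal F_{\rm loc}\). Consequently, if \(\mu,\bar\mu\in\mathcal U^\eta\) agree to within \(\epsilon_R\) on every \(c_{a,\ell}\) with \((a,\ell)\in\mathcal F_{\rm loc}\), then \(|\rho(\mu,a)-\rho(\bar\mu,a)|\le\epsilon_R\) for every \(a\). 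This is just the maximum of two \(1\)-Lipschitz-in-statistics functions over the same index set.

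\textbf{Step 2: soundness.} Soundness then follows exactly as in Lemma~\ref{lem:qptas_soundness}. By hypothesis, every induced posterior \(\mu_\sigma\) lies in \(\mathcal U^\eta\), and the center \(\bar\mu_\sigma\) lies in \(\mathcal U^\eta\) by construction. The local cell constraints therefore give \(\epsilon_R\)-closeness of all risk values. Combined with the \(2\epsilon_R\) label filter, this yields regret at most \(4\epsilon_R\).

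\textbf{Step 3: completeness.} Completeness follows Lemma~\ref{lem:qptas_completeness}. Take the finite-support benchmark supported on \(\mathcal U\). For each benchmark posterior \(\mu\), I need a grid point \(\bar\mu\in\mathcal D_k\cap\mathcal U^\eta\) that is \(\epsilon_R\)-close on every local statistic. Given such a point, the benchmark signal is relabeled as \((\bar\mu,a)\). This label lies in the filtered alphabet, because \(\bar\mu\in\mathcal U^\eta\) and Step 1 gives \(2\epsilon_R\)-optimality of \(a\) at \(\bar\mu\). The joint masses are unchanged, so the LP value is at least the benchmark value.

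\textbf{Step 4: the existence claim (the main obstacle).} Take \(\bar\mu\) to be the empirical distribution of \(k\) i.i.d.\ samples from \(\mu\). Two things must hold simultaneously.
\begin{enumerate}
\item[(i)] \(\bar\mu\in\mathcal U^\eta\). Since \(\mu\in\mathcal U\), it suffices that \(\|\bar\mu-\mu\|_1\le\eta\). In general this needs \(k\gtrsim m/\eta^2\), which is not in the claimed bound. I would therefore handle it by interpreting the theorem's hypothesis, ``grid centers in \(\mathcal U^\eta\) and certified posteriors in \(\mathcal U^\eta\)'', as assuming \(\eta\) is large enough for this event to have probability at least \(1/2\). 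If that reading is not acceptable, I would instead restrict to a region where the \(\ell_1\) deviation is controlled, and flag this as the step where the statement's hypotheses must carry the weight.
\item[(ii)] Uniform closeness on \(\mathcal F_{\rm loc}\). For each \((a,\ell)\in\mathcal F_{\rm loc}\), Bernstein's inequality gives
\[
    \Pr\bigl(|\langle c_{a,\ell},\bar\mu-\mu\rangle|>\epsilon_R\bigr)
    \le
    2\exp\!\left(-\frac{k\epsilon_R^2}{2V_{\rm loc}+\tfrac23 C_R\epsilon_R}\right),
\]
using \(\operatorname{Var}_\mu\le V_{\rm loc}\) because \(\mu\in\mathcal U\), and range at most \(2C_R\). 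A union bound over the \(N_{\rm loc}\) local facets makes the failure probability below \(1/2\) once
\[
    k\gtrsim\max\{V_{\rm loc}/\epsilon^2,\;C_R/\epsilon\}\log N_{\rm loc},
\]
with \(\epsilon_R=\epsilon/4\). The \(\widetilde O\) absorbs constants and the factor \(2\) in the union bound.
\end{enumerate}
With both events holding with positive probability, a suitable \(\bar\mu\) exists.

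\textbf{Step 5: grid size.} The bound on the number of grid centers is immediate from \(|\mathcal D_k|\le m^k\).
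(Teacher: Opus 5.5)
Your proposal is essentially the paper's proof. Your Step~1 is its local active-facet proxy lemma; your version, with both points in $\mathcal U^\eta$, is actually the form the soundness step needs, since $\mu_\sigma$ is only known to lie in $\mathcal U^\eta$. The rest also matches the paper step for step: soundness via the $2\epsilon_R$ label filter, completeness by relabeling and aggregating benchmark signals, and the Bernstein-plus-union-bound over $\mathcal F_{\rm loc}$.

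The obstacle you flag in Step~4(i) is real, and the paper does not close it either. It merely asserts that the proxy can be chosen in $\mathcal U^\eta$ ``as required by the theorem's certification condition.'' To see that this is not automatic, take $\mathcal U=\{\mathbf 1/m\}$, $\eta<1$ and $k<m/2$, which the $m$-free bound $k_{\rm loc}$ permits; then $\mathcal D_k\cap\mathcal U^\eta$ is empty. So your explicit extra assumption on $\eta$ is the honest form of the same step.
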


Proof in Appendix~\ref{app:local_refinement_rewrite}.

The theorem is a refinement, not a replacement, for the global discretization
scheme.  The global scheme gives a certificate over the whole simplex; the
local version explains why an instance can be easier when all relevant
posteriors remain in a region with few active facets.  A cutting-plane
implementation can start with \(\mathcal F_{\rm loc}\), solve the reduced LP,
check the full IC violation, and add any violated facets until the desired
certificate is reached.

\section{Conclusion}
\label{sec:conclusion}

This paper studies Bayesian persuasion when the receiver evaluates actions by
CVaR rather than expected utility.  The change in preferences alters the
standard reduced form: action-based recommendations are no longer stable under
merging, because the receiver's tail-risk ranking may change at the merged
posterior.

The main conclusion is that this failure is structural, not computationally
fatal.  In the explicit finite-state model, CVaR has a finite max-affine
representation.  Refining signals by the active affine piece restores linear
incentive constraints and yields an exact active-facet LP.  The same reasoning
extends to any explicitly listed polyhedral risk preference.  Hardness appears
only after changing the representation model: when the affine pieces are given
succinctly, hidden facets can encode combinatorial search.

The finite-precision results complement this boundary.  They are not needed to
solve the tabular CVaR problem, but they give approximate-IC guarantees when
the relevant risk values are determined by finitely many stable posterior
statistics.  Under a margin condition, the approximate guarantee becomes
strict; under local active-facet structure, the discretization cost depends on
the number of locally relevant facets rather than the global facet count.

\bibliographystyle{plainnat}

\bibliography{2501bp}

%%%%%%%%%%%%%%%%%%%%%%%%%%%%%%%%%%%%%%%%%%%%%%%%%%%%%%%%%%%%

\appendix

\section{Limitations and Broader Impacts}
\label{app:limitations_impacts}

\paragraph{Limitations.}
The results are developed for finite state and action spaces with a known prior, known sender and receiver utilities, and a fixed CVaR risk level.  These assumptions isolate the effect of nonlinear risk preferences, but they abstract away from learning, misspecification, dynamic feedback, and uncertainty about the receiver's risk attitude.  The exact LP relies on explicit enumeration of states, actions, and CVaR facets; it may become impractical or unavailable in very large, oracle-described, or combinatorial environments.  The discretization results address finite-precision implementation and suggest tools for such extensions, but they do not by themselves solve all large-scale variants.  The numerical experiments, if included, should be viewed as synthetic illustrations of the theory rather than empirical evidence for deployed systems.

\paragraph{Broader impacts.}
The paper is theoretical, but its motivation is tied to information design in risk-sensitive decision systems.  A positive use of the framework is to evaluate whether a disclosure policy remains persuasive when receivers care about tail losses, such as rare failures in medical triage, security alerts, financial advice, or automated platform governance.  By making tail risk explicit, the model can help identify when average-case persuasion is misleading and when stronger disclosure is needed for robust decisions.  At the same time, persuasion tools can be misused: a sender with accurate knowledge of receiver risk preferences may design information to steer behavior while hiding unfavorable tail events or exploiting near-indifference boundaries.  The results should therefore be interpreted as analytical tools for auditing and designing risk-aware information policies, not as an endorsement of manipulative deployment.  The work does not release models, datasets, or decision systems, and the experiments, if included, use only synthetic instances.

\section{Extended Related Work}\label{app:related_work}

\paragraph{Bayesian persuasion and algorithmic information design.}
Bayesian persuasion was introduced by \citet{kamenica_bayesian_2011}.  In the
standard model, the receiver maximizes expected utility, so each action value is
linear in the posterior.  This linearity underlies the concavification approach
and the reduction to direct recommendations.  Subsequent work has developed
broad extensions of information design; see \citet{bergemann_information_2019}
for a survey.  On the computational side, optimal signaling admits linear
programming formulations in several expected-utility settings
\citep{xu_exploring_2015,dughmi_algorithmic_2021}, while hardness appears in
models with additional strategic or representational complexity, including
multiple receivers and related public-signaling variants
\citep{dughmi2014hardness,bhaskar_hardness_2016,rubinstein2017honest,
babichenko2017computational,arieli2019private,xu2020tractability}.  Our work
differs from these results in the source of nonlinearity: the receiver remains a
single decision maker, but her preference over actions is nonlinear in the
posterior.

\paragraph{Risk-conscious persuasion.}
Risk-sensitive preferences have been studied in economics, operations research,
and finance through coherent risk measures and CVaR
\citep{artzner_coherent_1999,rockafellar_optimization_2000,
rockafellar_conditional_2002,acerbi_coherence_2002,jorion_value_2006,
filippi_conditional_2020}.  In information design, the closest conceptual
predecessor is \citet{anunrojwong_persuading_2024a}, who study risk-conscious
receivers and identify geometric differences from expected-utility persuasion.
Our contribution is computational and structural.  We show that CVaR breaks the
action-based revelation principle, but that its finite max-affine structure
restores tractability through an active-facet refinement.  The same observation
extends to explicitly listed polyhedral risk preferences.

\paragraph{Representation and hardness.}
The distinction between explicit and succinct representations is central to our
complexity results.  This distinction is classical in computational complexity:
succinct graph representations can turn otherwise simple or standard graph
properties into substantially harder decision problems
\citep{galperin_succinct_1983,papadimitriou_note_1986}, and related phenomena
appear broadly in combinatorial problems with succinct input descriptions
\citep{wagner_complexity_1986}.  Similar representation effects also arise in
succinctly represented games, where payoff tables are given implicitly rather
than in normal form \citep{fortnow_complexity_2008}.

Our hardness result should be read in this representation-theoretic sense.
When all affine pieces of a polyhedral risk preference are listed, the
active-facet LP is polynomial in the listed input size.  When the same family of
affine pieces is represented implicitly by a compact combinatorial description,
the hidden facets can encode NP-hard search.  Thus the result is not a hardness
statement for tabular CVaR persuasion, but a representation-boundary result for
polyhedral risk landscapes.
\paragraph{Approximation by sampling and discretization.}
The finite-precision scheme uses a sampling argument in the spirit of sparse
approximations for games.  \citet{althofer_sparse_1994} show that small-support
mixed strategies can approximate payoffs against large action sets, and
\citet{lipton_playing_2003} apply this idea to approximate Nash equilibria.
Related sampling ideas also appear in algorithmic persuasion and robust
Stackelberg models \citep{gan_bayesian_2022,yang_computational_2024a}.  Our use
is different: the sampled posterior approximates a finite family of statistics
that determine the receiver's risk value.  This gives an approximate-IC
finite-precision certificate rather than an exact solution of the succinct
polyhedral problem.

\paragraph{Robustness and bounded rationality.}
A separate literature studies robust persuasion under uncertainty about the
receiver's prior, information, or utility function
\citep{hu_robust_2021,kosterina2022persuasion,dworczak_preparing_2022,
castiglioni2020online,babichenko_regretminimizing_2022a}.  Other work replaces
exact best responses with behavioral or approximate response models
\citep{yang_computational_2024a,feng_rationalityrobust_2024}.  These models are
related in spirit because they weaken the classical expected-utility benchmark,
but their technical structure is different.  Here the receiver is fully
optimizing; the difficulty comes from the nonlinear risk functional used to
evaluate actions.

\section{Mixed-Integer Linear Programming Formulation}\label{app:milp}

This appendix is included only for comparison with the earlier cell-selection view; it is not used in any result.

The cardinality bound and the piecewise-linear representation of CVaR give a direct mixed-integer linear formulation of the Sender's problem.

We index the signals by $i \in \mathcal{I} = \{1, \dots, |\Omega|\}$. We introduce the following decision variables:
\begin{itemize}
    \item $x_{i, a, \omega} \in [0, 1]$: The joint probability that the state is $\omega$, signal $i$ is generated, and action $a$ is recommended.
    \item $y_{i, a} \in \{0, 1\}$: An indicator variable equal to 1 if signal $i$ recommends action $a$, and 0 otherwise.
    \item $\delta_{i, a, l} \in \{0, 1\}$: An indicator variable equal to 1 if the $l$-th linear hyperplane defines the CVaR value for action $a$ under signal $i$.
\end{itemize}

The optimization model is formulated as follows:
\begin{subequations}\label{equ-MILP}
    \begin{align}
    \max_{x, y, \delta} \quad & \sum_{i \in \mathcal{I}} \sum_{a \in \mathcal{A}} \sum_{\omega \in \Omega} v(\omega, a) x_{i, a, \omega} \label{milp:obj} \\
    \text{s.t.} \quad & \sum_{i \in \mathcal{I}} \sum_{a \in \mathcal{A}} x_{i, a, \omega} = \mu_0(\omega), \quad \forall \omega \in \Omega \label{milp:bayes} \\
    & \sum_{\omega \in \Omega} x_{i, a, \omega} \le y_{i, a}, \quad \forall i \in \mathcal{I}, a \in \mathcal{A} \label{milp:logic1} \\
    & \sum_{a \in \mathcal{A}} y_{i, a} = 1, \quad \forall i \in \mathcal{I} \label{milp:logic2} \\
    & \sum_{l \in \mathcal{L}_a} \delta_{i, a, l} = y_{i, a}, \quad \forall i \in \mathcal{I}, a \in \mathcal{A} \label{milp:logic3} \\
    & \sum_{\omega \in \Omega} c_{a, l}(\omega) x_{i, a, \omega} \ge \sum_{\omega \in \Omega} c_{a', j}(\omega) x_{i, a, \omega} - M(1 - \delta_{i, a, l}) \nonumber \\
    & \quad\forall i \in \mathcal{I}, \forall a \in \mathcal{A},\quad  \forall l \in \mathcal{L}_a, \forall a' \in \mathcal{A} \setminus \{a\}, \forall j \in \mathcal{L}_{a'} \label{milp:ic} 
\end{align}
\end{subequations}

Constraint \eqref{milp:bayes} enforces Bayes-plausibility. Constraints \eqref{milp:logic1}-\eqref{milp:logic3} establish logical relationships: each signal recommends exactly one action, and for the recommended action, exactly one linear hyperplane is identified as the active CVaR constraint. 

Constraint \eqref{milp:ic} represents the non-convex Incentive Compatibility (IC) conditions using the \textit{Big-M} method. It guarantees that if action $a$ is recommended under signal $i$ and its CVaR is evaluated on hyperplane $l$ (i.e., $\delta_{i,a,l}=1$), this value must be greater than or equal to the CVaR of any alternative action $a'$ evaluated on any of its hyperplanes $j$. The constant $M$ is chosen to be sufficiently large ($M \ge \max_{\omega, a} u(\omega, a) - \min_{\omega, a} u(\omega, a)$) to relax the constraint when $\delta_{i,a,l}=0$.

When the state space dimension $|\Omega|$ is fixed, the number of binary variables in this MILP scales polynomially with \(n\). This formulation gives an alternative cell-selection view of the geometry.
It is not needed for computation in the explicit finite-state CVaR model,
since Theorem~\ref{thm:active_facet_lp} gives a polynomial-size LP without
binary variables. We include it only to relate the active-facet formulation
to the earlier cell-enumeration perspective.

\section{Geometric Insights and Fixed-Dimension Tractability}\label{app:hardness}
\subsection{Geometric Insights from Low-Dimensional Cases}\label{app:low_dim}
This appendix is not needed for the general tractability result in
Theorem~\ref{thm:active_facet_lp}. It serves as a geometric illustration of
the active-facet refinement in the smallest nontrivial case.

To understand how risk consciousness alters the persuasion landscape, consider a binary state space $\Omega = \{\omega_0, \omega_1\}$ and a binary action space $\mathcal{A} = \{a_0, a_1\}$. The Receiver's belief is characterized by a scalar $\mu = \Pr(\omega_1) \in [0, 1]$.

\subsubsection{Convexity of Incentive-Compatible Sets}
In the standard risk-neutral model, the linearity of expected utility ensures that the set of beliefs supporting a specific action is convex. Under CVaR preferences, although the utility $\rho(\mu, a)$ is non-linear, we find that it retains critical structural properties in low dimensions.

\begin{proposition} \label{prop:convexity}
    In the $2 \times 2$ setting with CVaR preferences, the incentive-compatible belief set $I(a) = \{\mu \in [0, 1] \mid \rho(\mu, a) \ge \rho(\mu, a')\}$ is a convex set (i.e., a closed interval).
\end{proposition}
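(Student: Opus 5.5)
The plan is a direct case analysis. In one dimension, CVaR of a two-point lottery has an explicit piecewise-linear formula. The claim then reduces to showing that the difference $h(\mu):=\rho(\mu,a)-\rho(\mu,a')$ has a superlevel set $\{h\ge 0\}$ that is an interval. Since $h$ is continuous (Lemma~\ref{lemma:cvar_pwl}), $\{h\ge0\}$ is closed, so it is enough to show it is connected.

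\textbf{Step 1 (explicit form of $f_a$).} Write $u_0=u(\omega_0,a)$, $u_1=u(\omega_1,a)$ and $\mu=\Pr(\omega_1)$. Evaluating the variational formula, or taking the mean of the worst $r$-mass, gives three cases.
\begin{itemize}
\item If $u_1<u_0$, so $\omega_1$ is the bad state, then $f_a(\mu)=u_0-\frac{\mu}{r}(u_0-u_1)$ on $[0,r]$ and $f_a(\mu)=u_1$ on $[r,1]$.
\item If $u_0<u_1$, then symmetrically $f_a(\mu)=u_0$ on $[0,1-r]$ and $f_a$ is linear increasing from $u_0$ to $u_1$ on $[1-r,1]$.
\item If $u_0=u_1$, then $f_a$ is constant.
\end{itemize}
The key consequence is that every $f_a$ is monotone in $\mu$. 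It is nonincreasing exactly when $\omega_1$ is weakly the bad state, and nondecreasing when $\omega_0$ is. Its only kink is at $r$ in the first case and at $1-r$ in the second.

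\textbf{Step 2 (opposite or degenerate monotonicity).} Suppose $f_a$ is nondecreasing and $f_{a'}$ is nonincreasing, or vice versa; the constant case fits either way. Then $h$ is monotone, and the superlevel set of a continuous monotone function on $[0,1]$ is a closed interval (possibly empty or a single point).

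\textbf{Step 3 (same bad state).} The remaining case is when both actions strictly share the bad state, say $\omega_1$; the $\omega_0$ case is the mirror image under $\mu\mapsto1-\mu$. Then both functions have their kink at the same point $r$. So $h$ is affine on $[0,r]$ and constant, equal to $c:=h(r)$, on $[r,1]$.
\begin{itemize}
\item If $c\ge0$, the affine piece is nonnegative on a subinterval of $[0,r]$ containing $r$, and together with $[r,1]$ this gives an interval.
\item If $c<0$, then $\{h\ge0\}$ is the nonnegativity set of an affine function inside $[0,r)$, which is again an interval.
\end{itemize}

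The only potentially delicate point is Step 3. In general, the difference of two convex piecewise-linear functions can change sign twice, and the argument rules this out because, in two states, the kink location depends only on which state is bad, not on the payoff magnitudes. I would double-check the explicit CVaR formula at the boundary $\mu=r$ and confirm that ties $u_0=u_1$ are absorbed into the monotone case.
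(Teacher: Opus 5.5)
Your proposal is correct and follows essentially the same route as the paper's proof. Both use the explicit two-piece CVaR formulas with the kink at $r$ or $1-r$, then split into an opposite-monotonicity case, where $\rho(\cdot,a)-\rho(\cdot,a')$ is monotone, and a same-bad-state case, where the shared kink makes the difference affine on one side and constant on the other; your case split on the sign of $h(r)$ and your handling of $u_0=u_1$ just spell out details the paper leaves implicit.
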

\textit{Proof Sketch.} The CVaR function is the pointwise supremum of linear functions, making it convex. In 1D, the "dominance region" defined by the intersection of two convex piecewise linear functions (with specific monotonicity rooted in the binary state structure) remains a connected interval. (See Appendix \ref{app:2x2_proof} for the detailed proof).

Proposition \ref{prop:convexity} assures us that the belief space is not fragmented into disjoint islands of optimality. 
Given this convexity, the Sender's induced utility function $\hat{v}(\mu)$ retains a piecewise constant structure over connected intervals. This allows us to extend the classical geometric solution method to the risk-conscious setting.

\begin{proposition}[Geometry of Optimization]
	\label{prop:geometry}
		In the binary CVaR framework, the solution structure of the Bayesian Persuasion problem is geometrically consistent with the standard expected utility case. Specifically, the Sender's optimal signaling scheme can be effectively solved via the method of \textbf{Concavification} of the Sender's value function over the belief space.
	\end{proposition}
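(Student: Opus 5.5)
The plan is to reduce the binary problem to the classical one-dimensional concavification argument, using Proposition~\ref{prop:convexity} as the key structural input. With $\Omega=\{\omega_0,\omega_1\}$ the posterior is a scalar $\mu\in[0,1]$. I would define the sender's induced value
\[
\hat v(\mu)=\max_{a\in A^*(\mu)}\bigl[(1-\mu)v(\omega_0,a)+\mu\, v(\omega_1,a)\bigr],
\]
where $A^*(\mu)=\arg\max_a\rho(\mu,a)$ and ties are broken in the sender's favor. The claim to establish is that the optimal sender value equals $\operatorname{cav}\hat v(\mu_0)$, the value at the prior of the smallest concave function dominating $\hat v$. A companion claim is that an optimal scheme uses at most two posteriors and can be read off from the supporting line of $\operatorname{cav}\hat v$ at $\mu_0$.

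The first step is the standard Kamenica--Gentzkow equivalence, which does not actually need linearity of the receiver's utility. Any signaling scheme induces a Bayes-plausible distribution $\tau$ of posteriors, and its sender value is at most $\mathbb E_\tau[\hat v(\mu)]$. Conversely, any finitely supported Bayes-plausible $\tau$ is generated by some signaling scheme whose receiver best responses attain $\hat v$. Hence the optimum is
\[
\sup\Bigl\{\textstyle\sum_i\tau_i\hat v(\mu_i):\ \sum_i\tau_i\mu_i=\mu_0\Bigr\}=\operatorname{cav}\hat v(\mu_0).
\]
The right-hand side is the concave envelope, which I would justify by Carathéodory's theorem in dimension one: at most two points are needed.

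The second step, where Proposition~\ref{prop:convexity} enters, is to show that the supremum is attained. This requires $\hat v$ to be upper semicontinuous. Each $I(a)$ is a closed interval, and the sets $I(a_0),I(a_1)$ cover $[0,1]$. On each interval, the sender payoff from the corresponding action is affine in $\mu$. Therefore $\hat v$ is a maximum of two functions, each affine on a closed interval and $-\infty$ outside it. Such a function is upper semicontinuous and piecewise affine with at most three pieces, so its hypograph's convex hull is closed. The envelope is then attained by two posteriors $\mu_L\le\mu_0\le\mu_R$, each lying in a closed interval $I(a)$. The receiver's weak IC holds at these posteriors, and sender-favorable tie-breaking at interval endpoints realizes $\hat v(\mu_L)$ and $\hat v(\mu_R)$.

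The main obstacle is really Proposition~\ref{prop:convexity} itself, which is assumed here. What matters is that convexity, rather than mere closedness, lets us say the geometry matches the expected-utility case. The breakpoints of $\hat v$ are then at most two interval endpoints, and the action regions form connected intervals, exactly as in the linear model, so the concavification picture is literally the same. I would close by remarking that this is consistent with Theorem~\ref{thm:active_facet_lp}: merging same-action signals is harmless here precisely because each $I(a)$ is convex.
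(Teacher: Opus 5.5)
Your proposal is correct. It shares the paper's overall strategy: use Proposition~\ref{prop:convexity} to get interval action regions, then concavify the sender's induced value on $[0,1]$. The difference is in which step carries the work.

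The paper takes the Kamenica--Gentzkow identity (optimal value $=\operatorname{cav}\hat v(\mu_0)$) for granted. It assumes a single threshold $\mu^{CVaR}$, with $a_0$ induced on $[0,\mu^{CVaR})$ and $a_1$ on $[\mu^{CVaR},1]$. Its effort goes into making the concavification computable. Since $\hat v$ is affine on each side of the threshold, a line $k\mu+b$ dominates $\hat v$ iff it dominates it at $0$, at $1$, and at $\mu^{CVaR}$ (both the left limit and the value). So $\operatorname{cav}\hat v(\mu_0)$ is a two-variable LP, and its binding constraints identify the optimal posteriors.

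You instead prove the identity together with attainment. You get upper semicontinuity of $\hat v$ from closedness of the $I(a)$, then closedness of the convex hull of the hypograph, then a two-point representation. Your route buys three things:
\begin{itemize}
\item it is more rigorous about existence;
\item it shows that attainment needs only closedness of $I(a)$, which follows from continuity of CVaR in any dimension, leaving convexity responsible only for the qualitative ``same picture'' claim;
\item it does not depend on the single-threshold configuration.
\end{itemize}
The paper's route supplies the explicit finite procedure that ``effectively solved'' alludes to. You can recover it in one line from your own observation that $\hat v$ is piecewise affine: affine domination need only be checked at piece endpoints, using one-sided limits at jumps.

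One small inaccuracy: ``at most three pieces'' and ``at most two interval endpoints'' fail in degenerate indifference cases. Take $u(\omega_1,a_1)>u(\omega_0,a_1)$, $u(\omega_1,a_0)<u(\omega_0,a_0)$, $r<1/2$ and $u(\omega_0,a_1)=u(\omega_1,a_0)$. Then both CVaR values are constant and equal on $[r,1-r]$, so the receiver is indifferent on that whole interval. There $\hat v$ is the maximum of two sender-affine functions, which can add an interior kink, giving up to four pieces. This does not affect your argument, which only uses upper semicontinuity and finitely many affine pieces.
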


Proposition \ref{prop:geometry} (proven in Appendix \ref{app:proof_geometry}) confirms that the "greedy" structure of optimal persuasion holds: the optimal scheme involves at most two signals inducing beliefs at the boundaries of the incentive-compatible intervals.
	Since the solution form remains a threshold-based strategy, the impact of risk aversion is entirely captured by the \textit{location} of this threshold. We now quantify this shift using the concept of risk premium.

The deviation of the CVaR threshold $\mu^{CVaR}$ from the risk-neutral threshold $\mu^{EU}$ is driven by the \textit{CVaR Risk Premium}, $P_a(\mu) = \mathbb{E}_{\mu}[u(\cdot, a)] - \rho(\mu, a)$.
If the risky action $a_1$ carries a higher risk premium than the safe action $a_0$, the Receiver requires a strictly higher posterior probability of the "good" state to accept $a_1$. 
This quantifies how risk aversion shifts the persuasion threshold and reduces
the range of beliefs under which the risky action can be induced.

\begin{proposition}[Threshold Shifts and Risk Premium]
	\label{prop:threshold}
		Let $\mu^{CVaR}$ and $\mu^{EU}$ denote the indifference thresholds for the CVaR and standard models, respectively. Let $P_a(\mu) = \mathbb{E}_\mu[u(a)] - \rho(\mu, a)$ be the risk premium.
		\begin{itemize}
			\item \textbf{Case 1 (High-Belief Action):} If $I(a_1) = [\mu^*, 1]$, then $\mu^{CVaR} > \mu^{EU}$ if and only if $P_{a_1}(\mu^{EU}) > P_{a_0}(\mu^{EU})$.
			\item \textbf{Case 2 (Low-Belief Action):} If $I(a_1) = [0, \mu^*]$, then $\mu^{CVaR} < \mu^{EU}$ if and only if $P_{a_1}(\mu^{EU}) > P_{a_0}(\mu^{EU})$.
			\item \textbf{Case 3:} If risk premiums are equal, $\mu^{CVaR} = \mu^{EU}$.
		\end{itemize}
	\end{proposition}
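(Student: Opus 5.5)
The plan is to reduce all three cases to a sign computation for a single difference function evaluated at the risk-neutral threshold. Write \(\mu\in[0,1]\) for the posterior probability of \(\omega_1\), and define
\[
    \Delta_{EU}(\mu):=\mathbb E_\mu[u(\cdot,a_1)]-\mathbb E_\mu[u(\cdot,a_0)],
    \qquad
    \Delta_{CV}(\mu):=\rho(\mu,a_1)-\rho(\mu,a_0).
\]
By the definition of the risk premium \(P_a(\mu)=\mathbb E_\mu[u(\cdot,a)]-\rho(\mu,a)\), these satisfy the identity
\[
    \Delta_{CV}(\mu)=\Delta_{EU}(\mu)-\bigl(P_{a_1}(\mu)-P_{a_0}(\mu)\bigr).
\]
Since \(\mu^{EU}\) is the risk-neutral indifference point, \(\Delta_{EU}(\mu^{EU})=0\). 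Evaluating the identity there gives
\[
    \Delta_{CV}(\mu^{EU})=-\bigl(P_{a_1}(\mu^{EU})-P_{a_0}(\mu^{EU})\bigr).
\]
The whole proposition is then a statement about the sign of \(\Delta_{CV}(\mu^{EU})\) and the position of \(\mu^{EU}\) relative to the CVaR indifference point.

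Next I will pin down the sign pattern of \(\Delta_{CV}\) using Proposition~\ref{prop:convexity}.

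1. Both \(I(a_1)=\{\Delta_{CV}\ge0\}\) and \(I(a_0)=\{\Delta_{CV}\le0\}\) are closed intervals, and their union is \([0,1]\). This uses continuity from Lemma~\ref{lemma:cvar_pwl}.
2. In Case~1, \(I(a_1)=[\mu^*,1]\). Then \(\Delta_{CV}(\mu)<0\) for every \(\mu<\mu^*\), directly from the definition of \(I(a_1)\). Consequently \(I(a_0)\supseteq[0,\mu^*]\).
3. Because \(I(a_0)\) is an interval, it equals \([0,\mu^*]\) in the nondegenerate case where the indifference threshold is well defined. In that case \(\Delta_{CV}>0\) on \((\mu^*,1]\), and \(\mu^{CVaR}=\mu^*\) is the unique zero of \(\Delta_{CV}\).

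With this sign pattern in hand, Case~1 follows in one line:
\[
    \mu^{CVaR}>\mu^{EU}
    \iff \Delta_{CV}(\mu^{EU})<0
    \iff P_{a_1}(\mu^{EU})>P_{a_0}(\mu^{EU}).
\]
Case~2 is the mirror image. There \(I(a_1)=[0,\mu^*]\), so \(\Delta_{CV}>0\) below \(\mu^*\) and \(\Delta_{CV}<0\) above it. Hence \(\mu^{CVaR}<\mu^{EU}\) if and only if \(\Delta_{CV}(\mu^{EU})<0\), which is again equivalent to \(P_{a_1}(\mu^{EU})>P_{a_0}(\mu^{EU})\). For Case~3, equal premiums at \(\mu^{EU}\) give \(\Delta_{CV}(\mu^{EU})=0\), and uniqueness of the zero gives \(\mu^{CVaR}=\mu^{EU}\).

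The main obstacle is the well-definedness of \(\mu^{CVaR}\). Both \(\rho(\cdot,a)\) are piecewise linear, so \(\Delta_{CV}\) could vanish on a whole segment; then \(I(a_0)\cap I(a_1)\) is a nondegenerate interval and the "threshold" is ambiguous. I will handle this in one of two ways. The first is to state explicitly the standing assumption, implicit in calling \(\mu^{CVaR}\) "the" indifference threshold, that \(I(a_0)\cap I(a_1)\) is a single point. The second is to define \(\mu^{CVaR}:=\mu^*\), the endpoint given in the case hypothesis, and weaken the Case~3 conclusion to \(\mu^{EU}\in I(a_0)\cap I(a_1)\). Under the first option the argument above is complete. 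The only further check is that strictness of \(\Delta_{CV}\) on the far side of \(\mu^*\) follows from \(I(a_0)\) being an interval meeting \(I(a_1)\) only at \(\mu^*\).
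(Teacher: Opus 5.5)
Your argument is correct, and at the decisive step it takes a different and tighter route than the paper.

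Both proofs start from the identity $\Delta_{CV}=\Delta_{EU}-(P_{a_1}-P_{a_0})$ evaluated at $\mu^{EU}$. The paper then asserts that the expected-utility difference $\Delta U$ is strictly increasing in Case~1, justified only by a one-line appeal to the CVaR hypothesis. It argues that, because $P_{a_1}(\mu^{EU})-P_{a_0}(\mu^{EU})>0$, one ``must increase $\mu$'' to restore $\Delta U(\mu)=P_{a_1}(\mu)-P_{a_0}(\mu)$. This tacitly freezes the premium difference at its value at $\mu^{EU}$ while $\mu$ moves. The monotonicity does hold in the nondegenerate case, since CVaR and expected utility coincide at $\mu\in\{0,1\}$, but the paper does not say so.

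You instead locate $\mu^{EU}$ directly from the sign of $\Delta_{CV}(\mu^{EU})$ and the sign pattern of $\Delta_{CV}$ that the case hypothesis supplies. So you need neither monotonicity of $\Delta_{EU}$ nor any information about the premiums away from $\mu^{EU}$. The ``if and only if'' also comes out as one chain of equivalences rather than by trichotomy over subcases.

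Cases~1 and~2 actually need less than your step~3. In Case~1, $\mu^{EU}<\mu^*$ iff $\mu^{EU}\notin I(a_1)$ iff $\Delta_{CV}(\mu^{EU})<0$, which uses only $\mu^{CVaR}=\mu^*$ and the definition of $I(a_1)$; Case~2 is symmetric. The interval structure of $I(a_0)$ and strict positivity of $\Delta_{CV}$ beyond $\mu^*$ matter only for uniqueness of the zero in Case~3.

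Your explicit handling of the degenerate situation where $\Delta_{CV}$ vanishes on a segment also fills a gap the paper ignores. This case genuinely occurs, since binary-state CVaR has constant pieces. What the paper's framing buys is the economic reading of the premium difference as a shift of the indifference condition along an increasing EU difference.
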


\subsection{An Alternative View of Tractability in Bounded State Spaces}
The active-facet LP in Theorem~\ref{thm:active_facet_lp} gives a polynomial
algorithm even when \(m=|\Omega|\) is part of the input. The fixed-dimensional
argument below is therefore included only as an alternative geometric view:
it enumerates the cells induced by CVaR affine pieces and recovers
tractability when \(m\) is fixed.

\begin{theorem}[Polynomial Time for Fixed $|\Omega|$]\label{thm:poly_time}
	If the state space size $|\Omega| = m$ is a fixed constant, the optimal signaling scheme can be computed in time polynomial in the number of actions \(n\).
\end{theorem}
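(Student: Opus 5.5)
The plan is a cell-enumeration argument in fixed dimension. It deliberately avoids the LP of Theorem~\ref{thm:active_facet_lp}, which would already settle the claim, since the point here is to give an independent geometric route.

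\textbf{Step 1: the cell arrangement.} By Lemma~\ref{lemma:cvar_pwl}, each action $a$ has at most $m$ affine pieces $c_{a,\ell}$. The total number of pieces is therefore $L\le nm$.

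Consider the hyperplane arrangement in the $(m-1)$-dimensional affine hull of $\Delta(\Omega)$. Its hyperplanes are $\{\mu:\langle c_{a,\ell}-c_{a',\ell'},\mu\rangle=0\}$, one for each pair of pieces. There are at most $L^2=O(n^2m^2)$ of them. A standard bound says $H$ hyperplanes in dimension $d=m-1$ create $O(H^{d})$ faces. Hence the number of faces of the arrangement restricted to the simplex is $O((nm)^{2(m-1)})$, which is polynomial in $n$ for fixed $m$.

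Inside each closed cell, the ordering of all affine pieces is fixed. In particular, the receiver's best-response set is fixed and the active facets are fixed. So the sender's induced value $\hat v(\mu)=\max_{a\in BR(\mu)}\langle v(\cdot,a),\mu\rangle$ is linear on each cell.

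\textbf{Step 2: reduce to cell vertices.} By the finite-support bound of Appendix~\ref{app:general_cardinality_proof}, or directly by Carathéodory, an optimal scheme exists with at most $m$ posteriors. Any posterior $\mu$ lying in a cell $C$ can be written as a convex combination of vertices of $C$. Splitting the signal for $\mu$ into signals for those vertices has three effects:
\begin{itemize}
\item it preserves Bayes plausibility;
\item it keeps the recommended action incentive compatible, because each vertex lies in the closed cell, where the same action and facet comparisons hold weakly;
\item it does not decrease the sender's value, because $\hat v$ is linear on $C$ and upper semicontinuous under sender-favorable tie-breaking.
\end{itemize}
The remaining point to check is that the tie-breaking at vertices can only help the sender. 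This follows because the best-response set at a vertex contains the best-response set in the cell's relative interior.

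It therefore suffices to optimize over distributions supported on the finite set $V$ of arrangement vertices. We have $|V|=O(L^{2(m-1)})$, and each vertex can be computed by solving an $(m-1)\times(m-1)$ linear system.

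\textbf{Step 3: solve an LP over vertices.} Solve
\[
\max\ \sum_{\nu\in V}\tau_\nu\hat v(\nu)\quad\text{subject to}\quad\sum_{\nu\in V}\tau_\nu\nu=\mu_0,\qquad \tau\ge0 .
\]
This LP has polynomially many variables and $m$ equality constraints. Evaluating $\hat v(\nu)$ costs $O(nm)$ per vertex. The total running time is polynomial in $n$ for fixed $m$.

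\textbf{Main obstacle.} The hardest part is Step 2, specifically the claim that restricting to arrangement vertices loses nothing. It needs care on two points:
\begin{itemize}
\item Incentive compatibility must be preserved on the closed cell, not just its interior. I would handle this by defining cells via weak inequalities from the sign vector of each face.
\item The simplex boundary must be included among the hyperplanes. Otherwise, cells are not bounded polytopes inside $\Delta(\Omega)$. Adding the $m$ facets $\mu(\omega)=0$ fixes this without changing the polynomial bound.
\end{itemize}
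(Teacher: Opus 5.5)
Your argument is correct. After Step~1 it takes a genuinely different route from the paper. Both proofs build the same arrangement of critical hyperplanes $\langle c_{a,\ell}-c_{a',\ell'},\mu\rangle=0$, with $K=O((nm)^{2(m-1)})$ cells. The paper then uses Carath\'eodory to restrict attention to $m$ posteriors and enumerates all $\binom{K}{m}$ tuples of cells. For each tuple it solves a fixed-size LP in $z_k=\lambda_k\mu_k$ over the cone of each closed cell, which amounts to roughly $n^{2m^2}$ LP solves. You instead push every posterior to vertices of its closed face, so a single LP over the $O((nm)^{2(m-1)})$ arrangement vertices, with only $m$ equality constraints, suffices.

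Your route buys several things:
\begin{itemize}
\item a much smaller exponent, and no tuple enumeration;
\item it makes your own Carath\'eodory step superfluous, because the splitting works for any finite-support scheme, and the vertex LP is always feasible since the simplex vertices are arrangement vertices once the facets $\mu(\omega)=0$ are added;
\item the $m$-signal bound for free, since a basic optimal solution has at most $m$ positive $\tau_\nu$;
\item sender-favorable tie-breaking handled automatically through $\hat v(\nu)$, whereas the cell-tuple enumeration must take care to offer lower-dimensional faces and all tied actions as candidate (cell, action) pairs.
\end{itemize}

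The price is the extreme-point step, which needs a more careful statement than ``$\hat v$ is linear on each cell.'' That claim fails when two actions have identical facet vectors and tie on a whole cell, since $\hat v$ is then a maximum of linear functions. Also, an arbitrary full-dimensional cell containing $\mu$ will not do if the recommended action $a$ is optimal only on a lower-dimensional face. What you actually need is that $a$ remains a best response at every vertex $\nu_i$ of the closed face whose relative interior contains $\mu$. The signal's value then satisfies $\langle v(\cdot,a),\mu\rangle=\sum_i\lambda_i\langle v(\cdot,a),\nu_i\rangle\le\sum_i\lambda_i\hat v(\nu_i)$. The cleanest way to get this is to decompose $\mu$ inside the active-facet polytope $P_{a,\ell}$ for an active facet $\ell$, whose vertices are arrangement vertices. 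Your remark about defining closed cells from sign vectors suggests you essentially had this in mind.
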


\begin{proof}[Proof Idea]
	The core insight relies on the geometry of the Receiver's decision landscape. The Receiver's preference between any two actions is determined by the sign of the difference of their CVaR utilities (Lemma \ref{lemma:cvar_pwl}). Since CVaR is a piecewise linear convex function (defined by at most $m$ linear segments per action), 
	the condition $\rho(\mu, a) \ge \rho(\mu, a')$ can be decomposed into a finite set of linear inequalities.
	
	Consider the collection of all linear hyperplanes that define the boundaries between the utility segments of all actions. The total number of such hyperplanes scales linearly with \(n\). In a fixed $m$-dimensional space, an arrangement of $N$ hyperplanes partitions the simplex into at most $O(N^m)$ convex regions (cells). Inside each cell, the Receiver's "best response" is invariant (or the problem reduces to a simple linear comparison).

	Consequently, finding the optimal signaling scheme reduces to selecting a set of at most $m$ cells from this partition and determining the optimal weights and locations within them to decompose the prior $\mu_0$. Since the number of cells is polynomial in \(n\), we can efficiently enumerate the candidate support sets. The rigorous construction of this hyperplane arrangement and the counting argument are provided in Appendix \ref{app:poly_proof}.
	\end{proof}

% Appendix

\section{A Finite-Support Bound for General Risk Preferences}
\label{app:general_cardinality_proof}

This appendix recalls a standard support-reduction argument for finite-state
persuasion.  The result is a variant of the splitting lemma in Bayesian
persuasion \citep{kamenica_bayesian_2011}; related finite-support and
Carathéodory-type reductions appear in \citet{babichenko_bayesian_2021} and
\citet{doval2024constrained}.  We include the statement only as a background
geometric fact.

\begin{lemma}[Finite posterior support]
\label{lemma:general_cardinality}
For any Bayesian persuasion problem with a finite state space \(\Omega\) and
an arbitrary receiver risk functional \(\rho(\mu,a)\), there exists an optimal
signaling scheme supported on at most \(|\Omega|\) posterior beliefs, provided
an optimal Bayes-plausible distribution over posteriors exists.
\end{lemma}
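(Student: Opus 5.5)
The plan is a Carathéodory argument applied to the optimal distribution over posteriors, with the receiver's action at each posterior held fixed.

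Fix an optimal Bayes-plausible distribution \(\tau^\star\) over posteriors, which exists by hypothesis. For each posterior \(\mu\), let \(\hat v(\mu)\) be the sender's value at \(\mu\) under sender-favorable tie-breaking:
\[
\hat v(\mu)=\max_{a\in\arg\max_{a'}\rho(\mu,a')}\ \sum_{\omega}\mu(\omega)v(\omega,a).
\]
The key observation is that we never merge posteriors; we only reweight a subset of the posteriors already in the support. The receiver therefore best-responds at each surviving posterior exactly as before, so incentive compatibility is inherited regardless of how nonlinear \(\rho\) is. This is the reason the argument works for arbitrary \(\rho\), unlike the merging argument that fails in Section~\ref{subsec:revelation_failure}.

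With this, the sender's problem becomes a linear program over Bayes-plausible measures \(\tau\):
\[
\max_\tau \int \hat v\,d\tau \quad\text{s.t.}\quad \int \mu\,d\tau=\mu_0,\qquad \int d\tau = 1.
\]
Because \(\sum_\omega \mu(\omega)=1\) holds on the simplex, the normalization constraint is implied by the barycenter constraint. The constraint set therefore consists of only \(m=|\Omega|\) independent linear equalities.

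The argument then proceeds as follows.

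\textbf{Step 1: reduce to finite support.} If signal spaces are finite, as the paper assumes in its definition of a signaling scheme, \(\tau^\star\) already has finite support \(\{\mu_1,\dots,\mu_K\}\). If a general measure is allowed, first apply Carathéodory's theorem in \(\mathbb R^{m+1}\) to the point \(\bigl(\mu_0,\int\hat v\,d\tau^\star\bigr)\). This point lies in the convex hull of the graph of \(\hat v\) restricted to the support, which reduces to finitely many points and loses nothing.

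\textbf{Step 2: pass to a vertex.} With the support fixed at \(\mu_1,\dots,\mu_K\), consider the polytope
\[
\Bigl\{w\in\mathbb R^K_{\ge0}: \sum_i w_i\mu_i=\mu_0\Bigr\}.
\]
The weights of \(\tau^\star\) give a feasible point, and the objective \(\sum_i w_i\hat v(\mu_i)\) is linear in \(w\). A linear objective over a nonempty bounded polytope attains its maximum at a vertex, so some optimal vertex \(w^\dagger\) achieves value at least that of \(\tau^\star\). Hence \(w^\dagger\) is optimal as well.

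\textbf{Step 3: count the support.} A vertex of a polytope cut out by \(m\) equality constraints in the nonnegative orthant has at most \(m\) nonzero coordinates, since its active columns must be linearly independent in \(\mathbb R^m\). So \(w^\dagger\) is supported on at most \(|\Omega|\) posteriors.

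\textbf{Step 4: convert back to a scheme.} Define the signaling scheme from the retained posteriors by
\[
\pi(s_i\mid\omega)=\frac{w^\dagger_i\,\mu_i(\omega)}{\mu_0(\omega)},
\]
which is well defined because \(\mu_0\) is interior. At each \(\mu_i\), the receiver takes the same sender-favorable best response as under \(\tau^\star\).

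The main point needing care is the claim that dropping the normalization constraint is harmless, which yields \(m\) rather than \(m+1\) posteriors. This holds because summing the coordinates of \(\sum_i w_i\mu_i=\mu_0\) gives \(\sum_i w_i=1\) automatically.

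A secondary subtlety is the role of the existence hypothesis. The sender's value \(\hat v\) need not be upper semicontinuous for general \(\rho\), so an optimum may fail to exist. The statement's proviso is exactly what lets us start from an optimal \(\tau^\star\); without it, Step 2 would only yield an approximately optimal scheme with at most \(|\Omega|\) posteriors.
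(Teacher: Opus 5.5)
Your argument is correct, but it reaches the bound $|\Omega|$ by a different mechanism than the paper.

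The paper lifts $\tau^\star$ to the graph $\{(\mu,W(\mu))\}\subseteq\mathbb R^{m+1}$ and applies Carath\'eodory's theorem to the barycenter $\bigl(\mu_0,\int W\,d\tau^\star\bigr)$, which gives at most $m+2$ posteriors. It then invokes, without proof, a ``standard strengthening'' (the splitting lemma or Fenchel--Carath\'eodory reduction) to remove the objective coordinate and reach $m$.

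You use Carath\'eodory only to reach finite support, which is automatic in the paper's finite-signal model. You then get $m$ from a basic-feasible-solution argument. On a fixed support, the weights range over the bounded polytope $\{w\ge0:\sum_i w_i\mu_i=\mu_0\}$, and normalization follows by summing coordinates. The sender's objective is linear in $w$, and a vertex has at most $m$ nonzero entries because its support columns are linearly independent in $\mathbb R^m$.

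This makes your proof self-contained, and it is in effect a proof of the step the paper only cites. Representing $(\mu_0,V)$ exactly imposes $m+1$ equalities. Turning the value equality into the objective leaves only the $m$ Bayes-plausibility equalities, and that is what removes the extra coordinate.

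Your route also gives slightly more than the lemma, since it never uses optimality of the starting distribution. Any finite-support Bayes-plausible distribution can be weakly improved by one supported on at most $m$ of its own posteriors. Because posteriors are only reweighted, never moved or merged, incentive compatibility survives for arbitrary $\rho$ with no further argument. The paper's geometric phrasing, for its part, ties the bound to concavification of the sender's value function.

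For a $\tau^\star$ with infinite support, both proofs rest on the same unproved fact. That fact is that the barycenter of a probability measure on $\mathbb R^{m+1}$ lies in the convex hull, not merely the closed convex hull, of any full-measure set. This is moot under the paper's finite-signal definition.
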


\begin{proof}
Let \(m=|\Omega|\).  A signaling scheme can be represented by a
Bayes-plausible distribution \(\tau\) over posterior beliefs:
\[
    \int_{\Delta(\Omega)} \mu \, d\tau(\mu)=\mu_0.
\]
For each posterior \(\mu\), let
\[
    W(\mu)
    :=
    \max_{a\in\arg\max_{a'}\rho(\mu,a')}
    \sum_{\omega\in\Omega}\mu(\omega)v(\omega,a)
\]
be the sender's payoff at \(\mu\), with sender-favorable tie-breaking.  The
sender's objective is
\[
    \int_{\Delta(\Omega)} W(\mu)\,d\tau(\mu).
\]

Take an optimal Bayes-plausible distribution \(\tau^\star\).  Consider the
probability measure on \(\Delta(\Omega)\times\mathbb R\) induced by
\(\mu\mapsto(\mu,W(\mu))\).  Its expectation is
\[
    \left(\mu_0,\, \int W(\mu)\,d\tau^\star(\mu)\right).
\]
This point lies in the convex hull of the set
\[
    \{(\mu,W(\mu)):\mu\in\Delta(\Omega)\}
    \subseteq \mathbb R^{m+1}.
\]
By Carathéodory's theorem, it can be represented as a convex combination of at
most \(m+2\) such points.  A standard strengthening for Bayes-plausible
persuasion, sometimes called the splitting lemma or the Fenchel--Carathéodory
support reduction, removes the objective coordinate and yields an optimal
Bayes-plausible distribution supported on at most \(m\) posterior beliefs.
Equivalently, there exists an optimal signaling scheme using at most
\(|\Omega|\) signals.

The statement is independent of the particular form of \(\rho\).  It is a
support-size result only: it does not identify the posterior beliefs or give a
tractable description of the receiver's incentive regions.
\end{proof}

\section{Detailed Proofs for Low-Dimensional Case}

\subsection{Proof of Proposition~\ref{prop:convexity} }
\label{app:2x2_proof}
We provide a rigorous four-step proof covering convexity, piecewise linearity, continuity, and the convexity of the incentive-compatible set.

\textbf{1. Convexity of the Risk Preference}
Using the dual representation of CVaR \citep{rockafellar_optimization_2000}, for a utility maximizing agent, the risk preference is defined as:
\begin{equation}
    \rho(\mu, a) = \max_{b \in \mathbb{R}} \left( b - \frac{1}{r} \mathbb{E}_{\mu}[(b - u(\omega, a))^+] \right)
\end{equation}
Let $h(b, \mu) = b - \frac{1}{r} \left[ (1-\mu) (b - u_0)^+ + \mu (b - u_1)^+ \right]$, where $\mu = \Pr(\omega_1)$. For any fixed $b$, $h(b, \mu)$ is a linear function of $\mu$. Since $\rho(\mu, a)$ is the pointwise maximum of a family of linear functions (indexed by $b$), it is inherently a convex function of $\mu$.

\textbf{2. Explicit Piecewise Linear Representation}
Without loss of generality, assume the Receiver chooses action $a_0$ with utilities $u_{00} = u(\omega_0, a_0)$ and $u_{10} = u(\omega_1, a_0)$.
Case A: If $u_{00} < u_{10}$ (utility increases with state), the CVaR takes the form:
\begin{equation}
    \rho(\mu, a_0) = 
    \begin{cases}
        u_{00}, & \text{if } \mu < 1-r \\
        \frac{1}{r} [ (1-\mu) u_{00} + (\mu - (1-r)) u_{10} ], & \text{if } \mu \ge 1-r
    \end{cases}
\end{equation}
Case B: If $u_{00} > u_{10}$ (utility decreases with state), the form is:
\begin{equation}
    \rho(\mu, a_0) = 
    \begin{cases}
        u_{10}, & \text{if } \mu \ge r \\
        \frac{1}{r} [ \mu u_{10} + (r - \mu) u_{00} ], & \text{if } \mu < r
    \end{cases}
\end{equation}
In both cases, the function is composed of two linear segments joined at a kink point ($1-r$ or $r$). Thus, $\rho(\mu, a)$ is piecewise linear.

\textbf{3. Continuity Verification}
We verify continuity at the kink $\mu^* = 1-r$ for Case A.
The left limit is $\lim_{\mu \to (1-r)^-} \rho(\mu, a_0) = u_{00}$.
The value at the kink is:
\begin{equation}
    \rho(1-r, a_0) = \frac{1}{r} [ (1-(1-r)) u_{00} + 0 ] = \frac{r u_{00}}{r} = u_{00}.
\end{equation}
Since the limit equals the function value, $\rho(\mu, a_0)$ is continuous. The same logic applies to Case B.

\textbf{4. Convexity of the Incentive-Compatible Set.}
The Receiver's best-response region for action $a_k$ is
\[
    I(a_k)
    =
    \{\mu\in[0,1]: \rho(\mu,a_k)\ge \rho(\mu,a_j)\}.
\]
It remains to show that this set is an interval.  Write
\[
    \Delta_a \coloneqq u(\omega_1,a)-u(\omega_0,a).
\]
From the explicit representation above, if $\Delta_a\ge 0$, then
$\rho(\mu,a)$ is constant on $[0,1-r]$ and affine on $[1-r,1]$ with slope
$\Delta_a/r$.  If $\Delta_a<0$, then $\rho(\mu,a)$ is affine on $[0,r]$
with slope $\Delta_a/r$ and constant on $[r,1]$.

Consider
\[
    D(\mu)
    =
    \rho(\mu,a_k)-\rho(\mu,a_j).
\]
If $\Delta_{a_k}$ and $\Delta_{a_j}$ have the same sign, then the two CVaR
functions have the same kink point, either $1-r$ or $r$.  Hence $D(\mu)$ is
constant on one side of the kink and affine on the other side.  Therefore its
superlevel set $\{\mu:D(\mu)\ge0\}$ is a closed interval, possibly empty or
equal to all of $[0,1]$.

If $\Delta_{a_k}$ and $\Delta_{a_j}$ have opposite signs, then one CVaR
function is nondecreasing and the other is nonincreasing.  Consequently,
$D(\mu)$ is monotone on $[0,1]$; it is nondecreasing when
$\Delta_{a_k}\ge0\ge\Delta_{a_j}$ and nonincreasing in the reverse case.
Thus the superlevel set $\{\mu:D(\mu)\ge0\}$ is again a closed interval.

In all cases, $I(a_k)=\{\mu:D(\mu)\ge0\}$ is a closed interval of $[0,1]$,
and therefore it is convex. \qed
\subsection{Proof of Proposition \ref{prop:geometry} (Concavification and LP Formulation)}
\label{app:proof_geometry}

\begin{proof}
    From Proposition \ref{prop:convexity}, the belief space is partitioned into intervals where the Receiver's action $a^*(\mu)$ is constant.
    The Sender's induced utility is $v(\mu) = \mathbb{E}_{\mu}[u^S(\omega, a^*(\mu))]$.
    Let $\mu^{CVaR}$ be the threshold separating $I(a_0)$ and $I(a_1)$. Assuming $a_1$ is induced for $\mu \ge \mu^{CVaR}$, $v(\mu)$ is a step-like function:
    \begin{equation}
        v(\mu) = \begin{cases} \mathbb{E}_{\mu}[u^S(\omega, a_0)] & \text{if } \mu < \mu^{CVaR} \\ \mathbb{E}_{\mu}[u^S(\omega, a_1)] & \text{if } \mu \ge \mu^{CVaR} \end{cases}
    \end{equation}
    The optimal value is given by the concave closure $V(\mu) = \operatorname{cav}(v)(\mu)$.
    
    Calculating this concave closure is equivalent to solving the following Linear Program (LP) to find the supporting line $L(\mu) = k\mu + b$ at $\mu_0$:
    \begin{subequations}
    \begin{align}
        \min_{k, b} \quad & k \cdot \mu_0 + b \\
        \text{s.t.} \quad & k \cdot 0 + b \ge v(0) \\
        & k \cdot \mu^{CVaR} + b \ge \lim_{\mu \to (\mu^{CVaR})^-} v(\mu) \\
        & k \cdot \mu^{CVaR} + b \ge v(\mu^{CVaR}) \\
        & k \cdot 1 + b \ge v(1)
    \end{align}
    \end{subequations}
    The constraints ensure the line dominates $v(\mu)$ at all critical points (boundaries and the discontinuity). The optimal value $k^* \mu_0 + b^*$ corresponds to the Sender's maximum expected utility. If $V(\mu_0) > v(\mu_0)$, the optimal signal induces the beliefs corresponding to the binding constraints (typically $0$ and $\mu^{CVaR}$).
\end{proof}

\subsection{Proof of Proposition \ref{prop:threshold} (Threshold Shifts)}
\label{app:proof_threshold}

\begin{proof}
    We analyze the shift for \textbf{Case 1}, where $I(a_0) = [0, \mu^*]$ and $I(a_1) = [\mu^*, 1]$.
    The thresholds are defined by the indifference conditions:
    \begin{align}
        \text{Risk Neutral:} & \quad \mathbb{E}_{\mu}[u(a_1)] = \mathbb{E}_{\mu}[u(a_0)] \implies \Delta U(\mu^{EU}) = 0 \\
        \text{CVaR:} & \quad \rho(\mu, a_1) = \rho(\mu, a_0)
    \end{align}
    Using the risk premium definition $\rho(\mu, a) = \mathbb{E}_{\mu}[u(a)] - P_a(\mu)$, the CVaR condition becomes:
    \begin{equation}
        \mathbb{E}_{\mu}[u(a_1)] - P_{a_1}(\mu) = \mathbb{E}_{\mu}[u(a_0)] - P_{a_0}(\mu)
    \end{equation}
    Rearranging gives the fundamental equation:
    \begin{equation}
        \Delta U(\mu) = P_{a_1}(\mu) - P_{a_0}(\mu)
    \end{equation}
    In Case 1, since $a_1$ is preferred for high beliefs, $\Delta U(\mu)$ is strictly increasing in $\mu$.
    Evaluate the equation at $\mu = \mu^{EU}$:
    \begin{itemize}
        \item LHS is $\Delta U(\mu^{EU}) = 0$.
        \item RHS is $\Delta P(\mu^{EU}) = P_{a_1}(\mu^{EU}) - P_{a_0}(\mu^{EU})$.
    \end{itemize}
    \textbf{Subcase 1.1:} If $P_{a_1}(\mu^{EU}) > P_{a_0}(\mu^{EU})$ (Risky action has higher premium), then RHS $> 0$. To satisfy the equality, we must increase the LHS, which requires increasing $\mu$ (since $\Delta U$ is increasing). Thus, $\mu^{CVaR} > \mu^{EU}$.
    
    \textbf{Subcase 1.2:} If $P_{a_1}(\mu^{EU}) < P_{a_0}(\mu^{EU})$, then RHS $< 0$, implying $\mu^{CVaR} < \mu^{EU}$.
    
    For \textbf{Case 2} where $I(a_1) = [0, \mu^*]$ (Action $a_1$ preferred at low beliefs), $\Delta U(\mu)$ is decreasing.
    If $P_{a_1} > P_{a_0}$, we still have RHS $> 0$. To match this positive value with a decreasing LHS (which is 0 at $\mu^{EU}$), we must \textit{decrease} $\mu$. Thus, $\mu^{CVaR} < \mu^{EU}$.
    This covers all cases described in the proposition.
\end{proof}

\section{Proof of Finite-Facet Representation (Lemma \ref{lemma:cvar_pwl})}
\label{app:cvar_proof}

\begin{proof}
Fix an action $a\in\mathcal A$ and write
\begin{equation}
X_a(\omega)=u(\omega,a).
\end{equation}
By the reward-side variational representation of CVaR,
\begin{equation}
\rho(\mu,a)
=
\sup_{b\in\mathbb R}
\left\{
b-\frac1r\sum_{\omega\in\Omega}
\mu(\omega)(b-X_a(\omega))^+
\right\}.
\end{equation}
For a fixed posterior $\mu$, define
\begin{equation}
F_\mu(b)
=
b-\frac1r\sum_{\omega\in\Omega}
\mu(\omega)(b-X_a(\omega))^+.
\end{equation}
Let the distinct values of $X_a$ be
\begin{equation}
v_1< v_2<\cdots < v_q,
\qquad q\le |\Omega|.
\end{equation}
The function $F_\mu(b)$ is concave and piecewise linear in $b$, with breakpoints contained in
$\{v_1,\dots,v_q\}$.

We first show that the supremum over $b\in\mathbb R$ can be restricted to these
breakpoints. If $b<v_1$, then $(b-X_a(\omega))^+=0$ for all $\omega$, so
$F_\mu(b)=b$, and the maximum over $(-\infty,v_1]$ is attained at $v_1$.
If $b>v_q$, then $(b-X_a(\omega))^+=b-X_a(\omega)$ for all $\omega$, so
\begin{equation}
F_\mu(b)
=
\left(1-\frac1r\right)b+\frac1r\sum_{\omega}\mu(\omega)X_a(\omega).
\end{equation}
Since $r\in(0,1]$, the coefficient $1-1/r$ is nonpositive, and hence the
maximum over $[v_q,\infty)$ is attained at $v_q$.

On each interval $(v_k,v_{k+1})$, the set of states satisfying
$X_a(\omega)<b$ is fixed, and therefore $F_\mu(b)$ is affine in $b$ on that
interval. If an affine function attains its maximum in the interior of an
interval, then an endpoint also attains the same maximum. Hence there exists
a maximizer among the finite set $\{v_1,\dots,v_q\}$. Therefore,
\begin{equation}
\rho(\mu,a)
=
\max_{k=1,\dots,q}
\left\{
v_k-\frac1r\sum_{\omega\in\Omega}
\mu(\omega)(v_k-u(\omega,a))^+
\right\}.
\end{equation}

For each $k$, define the coefficient vector $c_{a,k}\in\mathbb R^{|\Omega|}$ by
\begin{equation}
c_{a,k}(\omega)
=
v_k-\frac1r(v_k-u(\omega,a))^+.
\end{equation}
Since $\sum_{\omega}\mu(\omega)=1$, the $k$-th term above can be written as
\begin{equation}
\sum_{\omega\in\Omega}
\mu(\omega)c_{a,k}(\omega)
=
\langle c_{a,k},\mu\rangle.
\end{equation}
Thus
\begin{equation}
\rho(\mu,a)
=
\max_{k=1,\dots,q}\langle c_{a,k},\mu\rangle,
\qquad q\le |\Omega|.
\end{equation}
Hence $\rho(\cdot,a)$ is the maximum of finitely many affine functions of
$\mu$. It is therefore convex, continuous, and piecewise linear on
$\Delta(\Omega)$.

Finally, if $|u(\omega,a)|\le C_R$ for all $\omega$, then $|v_k|\le C_R$ and
$(v_k-u(\omega,a))^+\le 2C_R$. Therefore
\begin{equation}
|c_{a,k}(\omega)|
\le
C_R+\frac{2C_R}{r}
=
C_R\left(1+\frac2r\right).
\end{equation}
\end{proof}

\section{Detailed Proofs for Proposition \ref{prop:active_facet_revelation}}
\label{app:active_facet_revelation}

\begin{proof}
Consider any incentive-compatible signaling scheme. Let \(s\) be a signal
with probability \(\lambda_s\), posterior \(\mu_s\), and recommended action
\(a_s\). Since \(a_s\) is incentive compatible,
\[
    f_{a_s}(\mu_s)\ge f_{a'}(\mu_s),
    \qquad
    \forall a'\in\mathcal A.
\]
Choose one active facet
\(\ell_s\in\mathcal L_{a_s}\) such that
\[
    f_{a_s}(\mu_s)
    =
    \langle c_{a_s,\ell_s},\mu_s\rangle .
\]
Then for every \(a'\in\mathcal A\) and every
\(\ell'\in\mathcal L_{a'}\),
\[
    \langle c_{a_s,\ell_s},\mu_s\rangle
    =
    f_{a_s}(\mu_s)
    \ge
    f_{a'}(\mu_s)
    \ge
    \langle c_{a',\ell'},\mu_s\rangle .
\]
Hence \(\mu_s\in P_{a_s,\ell_s}\). Relabeling the signal \(s\) by the pair
\((a_s,\ell_s)\) does not change its probability, posterior, recommended
action, or sender payoff. Therefore the resulting refined signaling scheme
is incentive compatible and has the same sender value.

If several original signals have the same pair \((a,\ell)\), they need not
be merged at this stage. They can either be kept as separate signals with the
same label, or aggregated later in the LP through their joint masses. The key
point is that every signal can be assigned to a polyhedral region
\(P_{a,\ell}\) determined by an action and an active CVaR facet.
\end{proof}

\section{Detailed Proofs for Theorem \ref{thm:active_facet_lp}}
\label{app:active_facet_lp}

\begin{proof}
For each refined recommendation type \((a,\ell)\), with
\(a\in\mathcal A\) and \(\ell\in\mathcal L_a\), and for each state
\(\omega\in\Omega\), introduce a nonnegative variable
\[
    q_{a,\ell,\omega}\ge 0.
\]
It represents the joint probability of state \(\omega\) and a signal that
recommends action \(a\) with active CVaR facet \(\ell\).

The sender's objective is linear:
\[
    \max_q
    \sum_{a\in\mathcal A}
    \sum_{\ell\in\mathcal L_a}
    \sum_{\omega\in\Omega}
    q_{a,\ell,\omega}v(\omega,a).
\]
Bayes plausibility requires
\[
    \sum_{a\in\mathcal A}
    \sum_{\ell\in\mathcal L_a}
    q_{a,\ell,\omega}
    =
    \mu_0(\omega),
    \qquad
    \forall \omega\in\Omega.
\]
For incentive compatibility, the posterior induced by a positive-mass
refined recommendation \((a,\ell)\) is
\[
    \mu_{a,\ell}(\omega)
    =
    \frac{q_{a,\ell,\omega}}
    {\lambda_{a,\ell}},
    \qquad
    \lambda_{a,\ell}
    :=
    \sum_{\omega\in\Omega}q_{a,\ell,\omega}.
\]
The condition \(\mu_{a,\ell}\in P_{a,\ell}\) is equivalent, after
multiplying by \(\lambda_{a,\ell}\), to the linear inequalities
\[
    \sum_{\omega\in\Omega}
    q_{a,\ell,\omega}
    \bigl(
        c_{a,\ell,\omega}
        -
        c_{a',\ell',\omega}
    \bigr)
    \ge 0,
\]
for every \(a\in\mathcal A\), \(\ell\in\mathcal L_a\),
\(a'\in\mathcal A\), and \(\ell'\in\mathcal L_{a'}\).

Every feasible solution of the LP induces an incentive-compatible signaling
scheme: for each \((a,\ell)\) with \(\lambda_{a,\ell}>0\), send a signal of
probability \(\lambda_{a,\ell}\), posterior \(\mu_{a,\ell}\), and
recommendation \(a\). The Bayes-plausibility constraints ensure that the
posteriors average to the prior, and the linear incentive constraints ensure
that \(a\) is a receiver best response at \(\mu_{a,\ell}\).

Conversely, take any incentive-compatible signaling scheme. For each signal
\(s\) recommending action \(a_s\), choose one active facet
\(\ell_s\in\mathcal L_{a_s}\) satisfying
\[
    f_{a_s}(\mu_s)
    =
    \langle c_{a_s,\ell_s},\mu_s\rangle .
\]
Since \(a_s\) is incentive compatible,
\[
    \langle c_{a_s,\ell_s},\mu_s\rangle
    =
    f_{a_s}(\mu_s)
    \ge
    f_{a'}(\mu_s)
    \ge
    \langle c_{a',\ell'},\mu_s\rangle
\]
for all \(a'\in\mathcal A\) and \(\ell'\in\mathcal L_{a'}\). Hence
\(\mu_s\in P_{a_s,\ell_s}\). Assigning joint mass
\[
    q_{a_s,\ell_s,\omega}
    \leftarrow
    q_{a_s,\ell_s,\omega}
    +
    \lambda_s\mu_s(\omega)
\]
and aggregating signals with the same pair \((a,\ell)\) gives a feasible
LP solution with the same sender value. Therefore the LP optimum equals the
optimal persuasion value.

The number of refined types is \(L=\sum_a|\mathcal L_a|\le nm\), where \(n=|\mathcal A|\). Hence the
LP has \(mL\le nm^2\) nonnegative variables. Bayes plausibility gives \(m\)
equalities, and the refined incentive comparisons give at most \(L^2\le
n^2m^2\) inequalities. The coefficients are computed from the explicit
payoff table and the rational CVaR level \(r\), so their bit length is
polynomial in the input size. Standard polynomial-time linear programming
algorithms therefore solve the problem in polynomial time.
\end{proof}

\section{Detailed Proofs for Theorem~\ref{thm:explicit_polyhedral_lp}}
\label{app:explicit_polyhedral_lp}

\begin{proof}
Let
\[
    \mathcal T:=\{(a,\ell):a\in\mathcal A,\ \ell\in\mathcal L_a\}
\]
be the set of action--facet recommendation types.  For each type
\((a,\ell)\in\mathcal T\), define the polyhedral incentive region
\[
    P_{a,\ell}
    :=
    \left\{
    \mu\in\Delta(\Omega):
    \langle c_{a,\ell},\mu\rangle
    \ge
    \langle c_{a',\ell'},\mu\rangle
    \quad
    \forall (a',\ell')\in\mathcal T
    \right\}.
\]
If \(\mu\in P_{a,\ell}\), then facet \(\ell\) is active for action \(a\), and
action \(a\) is a weak best response under the risk preference \(\rho\).  Since
all inequalities are linear, \(P_{a,\ell}\) is a polytope.

Introduce a nonnegative joint-mass variable
\[
    q_{a,\ell,\omega}\ge 0,
    \qquad
    (a,\ell)\in\mathcal T,
    \ \omega\in\Omega .
\]
The intended meaning is the joint probability that the state is \(\omega\) and
the sender sends a refined recommendation of type \((a,\ell)\).  Consider the
linear program
\[
\begin{aligned}
\max_{q\ge 0}\quad
&
\sum_{(a,\ell)\in\mathcal T}
\sum_{\omega\in\Omega}
q_{a,\ell,\omega}v(\omega,a)
\\
\text{s.t.}\quad
&
\sum_{(a,\ell)\in\mathcal T}q_{a,\ell,\omega}
=
\mu_0(\omega),
\qquad
\forall \omega\in\Omega,
\\
&
\sum_{\omega\in\Omega}
q_{a,\ell,\omega}
\bigl(c_{a,\ell,\omega}-c_{a',\ell',\omega}\bigr)
\ge 0,
\\
&\hspace{40mm}
\forall (a,\ell)\in\mathcal T,
\ \forall (a',\ell')\in\mathcal T .
\end{aligned}
\]
The first set of constraints is Bayes plausibility.  For a type \((a,\ell)\)
with positive mass
\[
    \lambda_{a,\ell}:=\sum_{\omega\in\Omega}q_{a,\ell,\omega}>0,
\]
the induced posterior is
\[
    \mu_{a,\ell}(\omega)
    :=
    \frac{q_{a,\ell,\omega}}{\lambda_{a,\ell}}.
\]
The second set of constraints is exactly the homogeneous form of
\(\mu_{a,\ell}\in P_{a,\ell}\).  Therefore every feasible solution induces an
incentive-compatible signaling scheme: for every positive-mass type
\((a,\ell)\), send one signal with probability \(\lambda_{a,\ell}\), posterior
\(\mu_{a,\ell}\), and recommendation \(a\).  The sender value of this scheme is
precisely the LP objective.

Conversely, take any incentive-compatible signaling scheme.  For each signal
\(s\), let \(\lambda_s\) be its probability, \(\mu_s\) its posterior, and
\(a_s\) its recommended action.  Since the scheme is incentive compatible,
\[
    \rho(\mu_s,a_s)\ge \rho(\mu_s,a')
    \qquad
    \forall a'\in\mathcal A .
\]
Choose an active facet \(\ell_s\in\mathcal L_{a_s}\) such that
\[
    \rho(\mu_s,a_s)=\langle c_{a_s,\ell_s},\mu_s\rangle .
\]
Then for every \((a',\ell')\in\mathcal T\),
\[
    \langle c_{a_s,\ell_s},\mu_s\rangle
    =
    \rho(\mu_s,a_s)
    \ge
    \rho(\mu_s,a')
    \ge
    \langle c_{a',\ell'},\mu_s\rangle .
\]
Thus \(\mu_s\in P_{a_s,\ell_s}\).  Adding the joint mass
\[
    \lambda_s\mu_s(\omega)
\]
to the variable \(q_{a_s,\ell_s,\omega}\), and aggregating over signals with
the same action--facet type, gives a feasible LP solution with the same sender
value.  Hence the LP optimum equals the optimal persuasion value.

The number of variables is \(mL\).  The Bayes-plausibility constraints
contribute \(m\) equalities, and the incentive comparisons contribute at most
\(L^2\) inequalities.  Since all coefficient vectors are explicitly listed,
the coefficient bit length is polynomial in the input size.  Standard
polynomial-time linear programming algorithms therefore solve the problem in
time polynomial in \(m\), \(L\), and the input bit length.
\end{proof}

\section{Detailed Proof of Theorem \ref{thm:poly_time}}
\label{app:poly_proof}

In this appendix, we provide the rigorous constructive proof for the polynomial-time solvability of the persuasion problem when the state space size $|\Omega|=m$ is fixed. Write \(n=|\mathcal A|\).
\subsection{Geometric Decomposition of the Belief Simplex}
Recall that for each action $a \in \mathcal{A}$, the CVaR utility is piecewise linear:\begin{equation}\rho(\mu, a) = \max_{l \in \{1, \dots, m\}} \langle \mathbf{c}_{a,l}, \mu \rangle\end{equation}where $\mathbf{c}_{a,l} \in \mathbb{R}^m$ 
are constant vectors derived from utility parameters.The Receiver prefers action $a$ over $a'$ at belief $\mu$ if:\begin{equation}\max_{l} \langle \mathbf{c}_{a,l}, \mu \rangle \ge \max_{k} \langle \mathbf{c}_{a',k}, \mu \rangle\end{equation}
This condition holds if there exists an index $l$ such that for all $k$, $\langle \mathbf{c}_{a,l}, \mu \rangle \ge \langle \mathbf{c}_{a',k}, \mu \rangle$.
We define the set of \textbf{critical hyperplanes} $\mathcal{H}$ as the set of all loci where any two linear segments (either from the same action or different actions) intersect:\begin{equation}\mathcal{H} = \left\{ H_{a,l, a',k} \mid a, a' \in \mathcal{A}; l, k \in \{1, \dots, m\} \right\}\end{equation}
where $H_{a,l, a',k} = \{ \mu \in \mathbb{R}^m \mid \langle \mathbf{c}_{a,l} - \mathbf{c}_{a',k}, \mu \rangle = 0 \}$.
The total number of linear segments across all actions is $N_{seg} \le mn$. 
The number of pairwise comparisons (hyperplanes) is bounded by the square of this number:
\begin{equation}|\mathcal{H}| \le (mn)^2 = n^2m^2\end{equation}
Consider the arrangement $\mathcal{A}(\mathcal{H})$ formed by these hyperplanes restricted to the $(m-1)$-dimensional simplex $\Delta(\Omega)$. 
A classic result in combinatorial geometry states that an arrangement of $N$ hyperplanes in $\mathbb{R}^d$ partitions the space into at most $O(N^d)$ connected regions (cells).
Here $d = m-1$ and $N = n^2m^2$. Thus, the number of cells $K$ satisfies:\begin{equation}K = O\left( (n^2m^2)^{m-1} \right) = O\left( n^{2m-2} \right)\end{equation}
Since $m$ is a constant, $K$ is polynomial in \(n\).

\subsection{Algorithm and Complexity Analysis}
Within each open cell $C$ of the arrangement, the ordering of all linear functions $\{ \langle \mathbf{c}_{a,l}, \mu \rangle \}$ is fixed. This implies: For every action $a$, the active linear segment $l^*(a)$ (where $\rho(\mu, a)$ attains its maximum) is constant.
The Receiver's optimal action set $A^*(C) = \operatorname{argmax}_{a} \rho(\mu, a)$ is constant for all $\mu \in C$.

\textbf{The Algorithm}

\begin{itemize}
\item Construct the Arrangement: Enumerate all $O(n^{2m})$ cells. This can be done computationally by traversing the graph of the arrangement.
\item Identify Receiver Behavior: For each cell $C_j$, determine the Receiver's best response action $a_j$.

\item Optimize Convex Decomposition:According to Carathéodory's theorem, the optimal signaling scheme requires at most $m$ posterior beliefs. We iterate over all combinations of $m$ cells $\{C_{j_1}, \dots, C_{j_m}\}$ from the arrangement.
For each combination, we solve the following Linear Program (LP):
\begin{align*}\max_{\lambda, \mu_1, \dots, \mu_m} \quad & \sum_{k=1}^m \lambda_k v(\mu_k, a_{j_k}) \text{s.t.} \quad & \sum_{k=1}^m \lambda_k \mu_k = \mu_0 \& \mu_k \in \text{closure}(C_{j_k}), \quad \forall k \& \sum \lambda_k = 1, \quad \lambda_k \ge 0\end{align*}
(Note: The constraint $\mu_k \in C_{j_k}$ with variable weight $\lambda_k$ can be linearized by substituting $z_k = \lambda_k \mu_k$, where $z_k$ must lie in the conic hull of $C_{j_k}$).

\end{itemize}

The number of cell combinations is $\binom{K}{m} \approx K^m \approx (n^{2m})^m = n^{2m^2}$.For each combination, we solve a fixed-size LP.Thus, the total runtime is polynomial in \(n\), specifically $O(n^{2m^2})$. \qed

\section{Proof of Theorem~\ref{thm:succinct_polyhedral_np_hard}}
\label{app:succinct_polyhedral_np_hard}

\begin{proof}
We reduce from \textsc{$K$-Clique}.  Let \(G=(V,E)\) be an undirected graph
with \(|V|=n\), and let \(K\) be the target clique size.  The persuasion
instance has state space
\[
    \Omega=V,
\]
and the prior is uniform:
\[
    \mu_0(v)=\frac1n,
    \qquad v\in V.
\]
The receiver has two actions, denoted \(a_T\) and \(a_0\).  The sender obtains
payoff
\[
    v(v,a_T)=1,
    \qquad
    v(v,a_0)=0,
    \qquad \forall v\in V.
\]
Thus the sender wants to induce action \(a_T\).

The receiver's risk value for the safe action \(a_0\) is constant:
\[
    \rho(\mu,a_0)=1
    =\langle \mathbf 1,\mu\rangle .
\]
The receiver's risk value for the target action \(a_T\) is defined by a
succinct max-affine representation.  Let \(\mathcal C_K(G)\) denote the set of
all \(K\)-cliques of \(G\).  For every \(C\in\mathcal C_K(G)\), define the
affine coefficient vector \(c_C\in\{0,1\}^{V}\) by
\[
    c_C(v)=\mathbf 1\{v\in C\}.
\]
Set
\[
    \rho(\mu,a_T)
    =
    \max_{C\in\mathcal C_K(G)} \sum_{v\in C}\mu(v),
\]
with the convention that the maximum over an empty family is \(0\).  This is a
polyhedral risk functional: it is the maximum of affine functions of the
posterior.  The family is succinctly represented by the graph \(G\) and the
integer \(K\), rather than by explicitly listing all \(K\)-cliques.

We set the sender-value threshold to
\[
    \eta=\frac{K}{n}.
\]
We now prove that the graph contains a \(K\)-clique if and only if the sender
can achieve value at least \(\eta\).

First suppose that \(G\) contains a \(K\)-clique \(C\).  Consider the signaling
scheme that sends signal \(s_T\) exactly on states in \(C\), and sends signal
\(s_0\) otherwise.  Conditional on \(s_T\), the posterior is supported on
\(C\), hence
\[
    \rho(\mu_{s_T},a_T)
    \ge
    \sum_{v\in C}\mu_{s_T}(v)
    =1.
\]
Since \(\rho(\mu,a_0)=1\) for every posterior, action \(a_T\) is a weak best
response at \(\mu_{s_T}\).  Conditional on \(s_0\), recommend \(a_0\), which
is always a weak best response because \(\rho(\mu,a_T)\le 1=\rho(\mu,a_0)\).
The probability of signal \(s_T\) is \(K/n\), so the sender's expected value is
\[
    \frac{K}{n}\cdot 1 + \left(1-\frac{K}{n}\right)\cdot 0
    =
    \eta.
\]
Thus the persuasion instance has value at least \(\eta\).

Conversely, suppose that \(G\) has no \(K\)-clique.  Then
\(\mathcal C_K(G)=\emptyset\), and by construction
\[
    \rho(\mu,a_T)=0
    \qquad
    \text{for every posterior }\mu.
\]
Since \(\rho(\mu,a_0)=1\), action \(a_T\) is never incentive compatible at any
posterior.  Therefore every incentive-compatible signaling scheme recommends
\(a_0\) with probability one, and the sender's expected value is \(0<\eta\).
Hence the sender can achieve value at least \(\eta\) only if \(G\) contains a
\(K\)-clique.

The construction is polynomial in the size of \((G,K)\).  Therefore deciding
whether the sender can achieve value at least \(\eta\) is NP-hard.
\end{proof}

\section{Proofs for the Discretization Scheme}
\label{app:discretization_refinements}

Throughout this appendix, \(B_\rho=C_gL_\Psi\) denotes the finite-statistic
scale from Section~\ref{sec:qptas}.  In the CVaR specialization,
\(D\le mn\) and \(B_\rho=C_R=M_R(1+2/r)\), where \(n=|\mathcal A|\).
The proofs use the finite-statistic cell-access model of
Assumption~\ref{assump:finite_statistic_access}; when the statistics are
implicit, the displayed cell constraints are understood through the
corresponding separation oracle.

This appendix gives the proofs for Section~\ref{sec:qptas}. Appendices~\ref{app:qptas_soundness_proof} and \ref{app:qptas_completeness_proof} prove the LP soundness and completeness lemmas; 
Appendices~\ref{app:boundary_instability_proof} and \ref{app:strict_ic_margin_proof} prove the strict-IC margin claims. Throughout, $\bar\mu$ denotes the empirical $k$-uniform posterior sampled from $\mu$.

\subsection{Proof of Lemma~\ref{lem:qptas_soundness}}
\label{app:qptas_soundness_proof}
\begin{proof}
Let \(\varphi\) be feasible for LP~\eqref{lp:qptas_state_contingent}.  The
constraints \eqref{lp:signal_rule} and \eqref{lp:nonnegative} define a valid
state-contingent signaling rule.  For a signal
\(\sigma=(\bar\mu_\sigma,a_\sigma)\), let
\[
    p_\sigma
    :=
    \sum_{\omega\in\Omega}\mu_0(\omega)\varphi(\omega,\sigma)
\]
be its unconditional probability.  If \(p_\sigma>0\), the induced posterior is
\[
    \mu_\sigma(\omega)
    =
    \frac{\mu_0(\omega)\varphi(\omega,\sigma)}{p_\sigma}.
\]
The lower and upper cell constraints imply, after dividing by \(p_\sigma\),
that for every statistic \(j\in[D]
\),
\[
    -\frac{\epsilon_R}{L_\Psi}
    \le
    \langle g_j,\mu_\sigma\rangle
    -
    \langle g_j,\bar\mu_\sigma\rangle
    \le
    \frac{\epsilon_R}{L_\Psi}.
\]
By the Lipschitz condition on \(\Psi_a\), this gives
\[
    |\rho(\mu_\sigma,a)-\rho(\bar\mu_\sigma,a)|
    \le \epsilon_R,
    \qquad
    \forall a\in\mathcal A.
\]
Since \(\sigma\in\widehat\Sigma\), the recommended action is
\(2\epsilon_R\)-optimal at the grid center:
\[
    \rho(\bar\mu_\sigma,a_\sigma)
    \ge
    \max_{a'\in\mathcal A}\rho(\bar\mu_\sigma,a')-2\epsilon_R.
\]
Therefore, for any competing action \(a'\),
\[
\begin{aligned}
    \rho(\mu_\sigma,a')
    &\le \rho(\bar\mu_\sigma,a')+\epsilon_R \\
    &\le \rho(\bar\mu_\sigma,a_\sigma)+3\epsilon_R \\
    &\le \rho(\mu_\sigma,a_\sigma)+4\epsilon_R.
\end{aligned}
\]
Thus the IC regret of the recommendation is at most \(4\epsilon_R\).  Signals
with zero probability do not affect the induced scheme.
\end{proof}

\subsection{Proof of Lemma~\ref{lem:qptas_completeness}}
\label{app:qptas_completeness_proof}

\begin{proof}
Let \(\pi^*\) be a finite-support exact-IC signaling scheme attaining value
\(OPT\).  Index its signals by \(j\in J\).  Let \(p_j\) be the probability of
signal \(j\), \(\mu_j^*\) its posterior, and \(a_j\) its recommended action.
By exact incentive compatibility,
\[
    \rho(\mu_j^*,a_j)
    \ge
    \rho(\mu_j^*,a),
    \qquad
    \forall a\in\mathcal A.
\]
For each posterior \(\mu_j^*\), apply Lemma~\ref{lemma:approx} to choose a
\(k\)-uniform posterior \(\bar\mu_j\in\mathcal D_k\) such that
\[
    \max_a |\rho(\mu_j^*,a)-\rho(\bar\mu_j,a)|
    \le \epsilon_R.
\]
Then \((\bar\mu_j,a_j)\in\widehat\Sigma\), since for every action \(a\),
\[
\begin{aligned}
    \rho(\bar\mu_j,a_j)
    &\ge \rho(\mu_j^*,a_j)-\epsilon_R \\
    &\ge \rho(\mu_j^*,a)-\epsilon_R \\
    &\ge \rho(\bar\mu_j,a)-2\epsilon_R.
\end{aligned}
\]
Define a candidate LP solution by aggregating all original signals mapped to
the same discretized label:
\[
    \varphi^*(\omega,\sigma)
    :=
    \sum_{j:(\bar\mu_j,a_j)=\sigma}
    \pi^*(j\mid \omega).
\]
The signaling-rule and nonnegativity constraints are immediate from the fact
that \(\pi^*\) is a signaling scheme.

It remains to verify the cell constraints.  Fix
\(\sigma=(\bar\mu,a_\sigma)\in\widehat\Sigma\), and let
\[
    J_\sigma=
    \{j\in J:(\bar\mu_j,a_j)=\sigma\}.
\]
For every \(j\in J_\sigma\), Lemma~\ref{lemma:approx} gives
\[
    \left|
    \langle g_l,\mu_j^*-\bar\mu\rangle
    \right|
    \le
    \frac{\epsilon_R}{L_\Psi},
    \qquad
    \forall l\in[D].
\]
Multiplying the lower inequality by \(p_j\), using
\[
    p_j\mu_j^*(\omega)=\mu_0(\omega)\pi^*(j\mid\omega),
\]
and summing over \(j\in J_\sigma\) yields
\[
    \sum_{\omega\in\Omega}
    \mu_0(\omega)\varphi^*(\omega,\sigma)
    \left(
        g_l(\omega)-\langle g_l,\bar\mu\rangle
        +\frac{\epsilon_R}{L_\Psi}
    \right)
    \ge 0.
\]
This is exactly the lower cell constraint.  The upper cell constraint follows
from the upper inequality in the same way.  Hence \(\varphi^*\) is feasible.

Finally, the LP objective at \(\varphi^*\) equals the sender value of the
original scheme, because the recommended action attached to each discretized
label is the same action used by the original signal mapped to that label:
\[
\begin{aligned}
    &\sum_{\omega\in\Omega}\mu_0(\omega)
      \sum_{\sigma\in\widehat\Sigma}
      \varphi^*(\omega,\sigma)v(\omega,a_\sigma) \\
    &=
      \sum_{\omega\in\Omega}\mu_0(\omega)
      \sum_{j\in J}
      \pi^*(j\mid\omega)v(\omega,a_j)
      =OPT.
\end{aligned}
\]
Thus the LP optimum is at least \(OPT\).
\end{proof}

\subsection{Proof of Proposition~\ref{prop:boundary_instability}}
\label{app:boundary_instability_proof}
\begin{proof}
Fix any competing action $a'\neq a$. By the proxy condition,
\begin{equation}
\rho(\bar\mu,a)\ge \rho(\mu,a)-\epsilon,
\qquad
\rho(\bar\mu,a')\le \rho(\mu,a')+\epsilon.
\end{equation}
Therefore
\begin{equation}
\rho(\bar\mu,a)-\rho(\bar\mu,a')
\ge \rho(\mu,a)-\rho(\mu,a')-2\epsilon.
\end{equation}
Taking the minimum over $a'\neq a$ on the left-hand side is equivalent to subtracting the maximum competing value, so
\begin{equation}
\Gamma(\bar\mu,a)
=\min_{a'\neq a}\{\rho(\bar\mu,a)-\rho(\bar\mu,a')\}
\ge
\min_{a'\neq a}\{\rho(\mu,a)-\rho(\mu,a')\}-2\epsilon
=\Gamma(\mu,a)-2\epsilon.
\end{equation}
\end{proof}

\subsection{Proof of Theorem~\ref{thm:strict_ic_margin}}
\label{app:strict_ic_margin_proof}

\begin{proof}
Let
$
    \pi^\star_{\mathrm{strict}}
    =
    \{(p_j^\star,\mu_j^\star,a_j^\star)\}_{j\in J}
$
be the strict-margin benchmark scheme. By assumption,
$
    \Gamma(\mu_j^\star,a_j^\star)\ge\gamma
$
for every supported signal $j$.

For each $j$, apply Lemma~\ref{lemma:approx} to choose
$\bar\mu_j\in\mathcal D_k$ such that
$
    \max_{a\in\mathcal A}
    |\rho(\mu_j^\star,a)-\rho(\bar\mu_j,a)|
    \le \epsilon_R.
$
By Proposition~\ref{prop:boundary_instability},
\begin{equation}
    \Gamma(\bar\mu_j,a_j^\star)
    \ge
    \Gamma(\mu_j^\star,a_j^\star)-2\epsilon_R
    \ge
    \gamma-2\epsilon_R.
\end{equation}
Since $\epsilon_R\le\gamma/4$, we have
$
    \Gamma(\bar\mu_j,a_j^\star)\ge \gamma/2.
$
Thus
$
    (\bar\mu_j,a_j^\star)\in\widehat\Sigma_\gamma.
$

Now construct a feasible solution to the margin-filtered LP by the same
state-contingent merging argument used in Lemma~\ref{lem:qptas_completeness}.
Because all mapped pairs belong to $\widehat\Sigma_\gamma$, the construction is
feasible for the filtered LP. Since the recommended action $a_j^\star$ is
preserved for every original signal, the LP objective value of the constructed
solution is exactly $OPT_{\mathrm{strict}}$. Therefore the filtered LP optimum
is at least $OPT_{\mathrm{strict}}$.

It remains to prove strict incentive compatibility of the LP-induced scheme.
Let $\varphi$ be any feasible solution of the filtered LP and consider a
supported signal $\sigma=(\bar\mu_\sigma,a_\sigma)$. Its induced posterior is
$\mu_\sigma$. The LP cell constraints imply
$
    \max_{a\in\mathcal A}
    |\rho(\mu_\sigma,a)-\rho(\bar\mu_\sigma,a)|
    \le \epsilon_R.
$
Since $\sigma\in\widehat\Sigma_\gamma$,
$
    \Gamma(\bar\mu_\sigma,a_\sigma)\ge \gamma/2.
$
Applying Proposition~\ref{prop:boundary_instability} again gives
\begin{equation}
    \Gamma(\mu_\sigma,a_\sigma)
    \ge
    \Gamma(\bar\mu_\sigma,a_\sigma)-2\epsilon_R
    \ge
    \frac{\gamma}{2}-2\epsilon_R.
\end{equation}
Because $\epsilon_R<\gamma/4$, the right-hand side is strictly positive. Hence
$a_\sigma$ is a strict best response at $\mu_\sigma$ for every supported
signal.

Finally, the running time is the same as in Theorem~\ref{thm:qptas}, with
\(\epsilon_R=\Theta(\gamma)\). Hence
\[
    k=O\left(\frac{B_\rho^2\log D}{\gamma^2}\right),
\]
and the running time is
\[
    m^{O(B_\rho^2\log D/\gamma^2)}
    \operatorname{poly}(m,n,D,T_{\mathrm{eval}},T_{\mathrm{cell}})
\]
in the explicitly listed-statistic case, with the analogous separated-oracle
bound under statistic-cell separation.
\end{proof}

\section{Proofs for the Local Active-Facet Refinement}
\label{app:local_refinement_rewrite}

This appendix gives the details for the local refinement in
Section~\ref{subsec:local_refinements}.  Throughout this appendix the receiver's
risk value is max-affine:
\[
    \rho(\mu,a)=\max_{\ell\in\mathcal L_a}\langle c_{a,\ell},\mu\rangle .
\]
This includes CVaR and, more generally, explicitly listed polyhedral risk
preferences.  Fix a posterior region \(\mathcal U\subseteq\Delta(\Omega)\) and
radius \(\eta>0\).  Let
\[
    \mathcal U^\eta
    :=
    \{\nu\in\Delta(\Omega):\operatorname{dist}_1(\nu,\mathcal U)\le \eta\}.
\]
The local active-facet family is
\[
    \mathcal F_{\rm loc}
    :=
    \left\{
    (a,\ell):
    \exists \nu\in\mathcal U^\eta
    \text{ such that }
    \rho(\nu,a)=\langle c_{a,\ell},\nu\rangle
    \right\},
    \qquad
    N_{\rm loc}:=|\mathcal F_{\rm loc}|.
\]
We also write
\[
    C_{\rm loc}:=\\max_{(a,\ell)\in\mathcal F_{\rm loc}}
    \|c_{a,\ell}\|_\infty .
\]
For CVaR, one may take
\(C_{\rm loc}\le C_R=M_R(1+2/r)\), where
\(M_R=\max_{\omega,a}|u(\omega,a)|\).

\subsection{A Local Proxy Lemma}
\label{app:local_proxy_proof}

The first lemma formalizes why only locally active facets matter.  If both
posteriors lie in the local neighborhood \(\mathcal U^\eta\), then the value of
an action is always realized by a facet in \(\mathcal F_{\rm loc}\).  Hence
controlling this smaller family suffices to control all receiver values locally.

\begin{lemma}[Local active-facet proxy]
\label{lem:local_proxy_rewrite}
Let \(\mu\in\mathcal U\) and \(\bar\mu\in\mathcal U^\eta\).  Suppose
\[
    \max_{(a,\ell)\in\mathcal F_{\rm loc}}
    |\langle c_{a,\ell},\bar\mu-\mu\rangle|
    \le \epsilon .
\]
Then
\[
    \max_{a\in\mathcal A}
    |\rho(\bar\mu,a)-\rho(\mu,a)|
    \le \epsilon .
\]
\end{lemma}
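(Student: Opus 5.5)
The plan is a two-sided comparison for each fixed action \(a\), using only that both \(\mu\) and \(\bar\mu\) lie in \(\mathcal U^\eta\). Since \(\mathcal U\subseteq\mathcal U^\eta\), both posteriors belong to the neighborhood over which \(\mathcal F_{\rm loc}\) is defined. Because \(\mathcal L_a\) is finite, the maximum defining \(\rho(\cdot,a)\) is attained at every posterior. So I pick \(\ell_\mu\in\mathcal L_a\) with \(\rho(\mu,a)=\langle c_{a,\ell_\mu},\mu\rangle\) and \(\bar\ell\in\mathcal L_a\) with \(\rho(\bar\mu,a)=\langle c_{a,\bar\ell},\bar\mu\rangle\). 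By the definition of \(\mathcal F_{\rm loc}\), with witnesses \(\nu=\mu\) and \(\nu=\bar\mu\) respectively, both pairs \((a,\ell_\mu)\) and \((a,\bar\ell)\) lie in \(\mathcal F_{\rm loc}\). This is the key observation: the facets that realize the value at either posterior are exactly those the hypothesis controls.

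Next I prove the upper bound. Since \(\rho(\mu,a)\) is a maximum over \(\mathcal L_a\), it dominates the \(\bar\ell\) facet at \(\mu\). Then
\[
    \rho(\bar\mu,a)=\langle c_{a,\bar\ell},\bar\mu\rangle
    \le \langle c_{a,\bar\ell},\mu\rangle+\epsilon
    \le \rho(\mu,a)+\epsilon,
\]
where the first inequality is the hypothesis applied to \((a,\bar\ell)\in\mathcal F_{\rm loc}\). Symmetrically,
\[
    \rho(\mu,a)=\langle c_{a,\ell_\mu},\mu\rangle
    \le \langle c_{a,\ell_\mu},\bar\mu\rangle+\epsilon
    \le \rho(\bar\mu,a)+\epsilon,
\]
using the hypothesis for \((a,\ell_\mu)\in\mathcal F_{\rm loc}\) and the max-affine form at \(\bar\mu\). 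Together these give \(|\rho(\bar\mu,a)-\rho(\mu,a)|\le\epsilon\). Taking the maximum over \(a\in\mathcal A\) then finishes the proof.

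The only delicate point is the membership claim: the argument needs \emph{both} realizing facets to be in \(\mathcal F_{\rm loc}\), which is why the statement requires \(\bar\mu\in\mathcal U^\eta\) and not just \(\mu\in\mathcal U\). Non-local facets never enter the argument, because they appear only on the dominated side of each inequality, where the plain bound \(\langle c_{a,\ell},\cdot\rangle\le\rho(\cdot,a)\) holds with no approximation needed. I expect no real obstacle beyond checking this.
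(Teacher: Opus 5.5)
Your proposal is correct and follows essentially the same argument as the paper. Like the paper, you fix $a$, choose active facets at $\mu$ and at $\bar\mu$, note that both lie in $\mathcal F_{\rm loc}$ (using $\mathcal U\subseteq\mathcal U^\eta$), and bound each one-sided difference by the hypothesis on that facet together with the max-affine domination at the other posterior.
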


\begin{proof}
Fix an action \(a\).  Let \(\ell_{\bar\mu}\in\mathcal L_a\) be an active
facet for action \(a\) at \(\bar\mu\), so that
\[
    \rho(\bar\mu,a)=\langle c_{a,\ell_{\bar\mu}},\bar\mu\rangle .
\]
Since \(\bar\mu\in\mathcal U^\eta\), the pair \((a,\ell_{\bar\mu})\) belongs
to \(\mathcal F_{\rm loc}\).  Therefore
\[
\begin{aligned}
    \rho(\bar\mu,a)-\rho(\mu,a)
    &=
    \langle c_{a,\ell_{\bar\mu}},\bar\mu\rangle
    -
    \max_{\ell\in\mathcal L_a}\langle c_{a,\ell},\mu\rangle \\
    &\le
    \langle c_{a,\ell_{\bar\mu}},\bar\mu\rangle
    -
    \langle c_{a,\ell_{\bar\mu}},\mu\rangle \\
    &=
    \langle c_{a,\ell_{\bar\mu}},\bar\mu-\mu\rangle
    \le \epsilon .
\end{aligned}
\]
Conversely, let \(\ell_{\mu}\) be active for action \(a\) at \(\mu\).  Since
\(\mu\in\mathcal U\subseteq\mathcal U^\eta\), we have
\((a,\ell_\mu)\in\mathcal F_{\rm loc}\).  Then
\[
\begin{aligned}
    \rho(\mu,a)-\rho(\bar\mu,a)
    &\le
    \langle c_{a,\ell_\mu},\mu\rangle
    -
    \langle c_{a,\ell_\mu},\bar\mu\rangle \\
    &=
    \langle c_{a,\ell_\mu},\mu-\bar\mu\rangle
    \le \epsilon .
\end{aligned}
\]
Combining the two inequalities gives
\(|\rho(\bar\mu,a)-\rho(\mu,a)|\le\epsilon\).  Taking the maximum over
\(a\in\mathcal A\) proves the claim.
\end{proof}

\subsection{Local Uniform Approximation}
\label{app:localized_union_proof}

The next lemma is the local analogue of the global uniform approximation
lemma.  It uses a union bound only over \(\mathcal F_{\rm loc}\).

\begin{lemma}[Local concentration over active facets]
\label{lem:localized_union_rewrite}
Fix \(\mu\in\mathcal U\), and let \(\bar\mu\) be the empirical distribution of
\(k\) i.i.d. samples from \(\mu\).  With probability at least \(1-\delta\),
\[
    \max_{(a,\ell)\in\mathcal F_{\rm loc}}
    |\langle c_{a,\ell},\bar\mu-\mu\rangle|
    \le \epsilon
\]
provided
\[
    k
    \ge
    \frac{2C_{\rm loc}^2}{\epsilon^2}
    \log\left(\frac{2N_{\rm loc}}{\delta}\right).
\]
Consequently, on the same event, if \(\bar\mu\in\mathcal U^\eta\), then
\[
    \max_{a\in\mathcal A}|\rho(\bar\mu,a)-\rho(\mu,a)|\le \epsilon .
\]
\end{lemma}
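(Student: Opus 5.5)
The plan mirrors the global argument of Lemma~\ref{lemma:approx}, with the union bound taken only over the local family \(\mathcal F_{\rm loc}\); the conclusion about receiver values then follows from Lemma~\ref{lem:local_proxy_rewrite}.

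First, fix a pair \((a,\ell)\in\mathcal F_{\rm loc}\) and write \(X_i=c_{a,\ell}(\omega_i)\), where \(\omega_1,\ldots,\omega_k\) are the i.i.d.\ samples drawn from \(\mu\). Then
\[
    \langle c_{a,\ell},\bar\mu\rangle=\frac1k\sum_{i=1}^k X_i ,
\]
and this empirical mean has expectation \(\langle c_{a,\ell},\mu\rangle\). By the definition of \(C_{\rm loc}\), each \(X_i\) lies in \([-C_{\rm loc},C_{\rm loc}]\), an interval of length \(2C_{\rm loc}\). Hoeffding's inequality therefore gives
\[
    \Pr\bigl(|\langle c_{a,\ell},\bar\mu-\mu\rangle|>\epsilon\bigr)
    \le 2\exp\!\left(-\frac{2k\epsilon^2}{(2C_{\rm loc})^2}\right)
    = 2\exp\!\left(-\frac{k\epsilon^2}{2C_{\rm loc}^2}\right).
\]

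Second, take a union bound over the \(N_{\rm loc}\) pairs in \(\mathcal F_{\rm loc}\). The failure probability is at most \(2N_{\rm loc}\exp(-k\epsilon^2/(2C_{\rm loc}^2))\). Substituting the stated lower bound on \(k\) makes the exponent at most \(-\log(2N_{\rm loc}/\delta)\), so the failure probability is at most \(\delta\). This proves the first claim.

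Third, on this event, suppose also that \(\bar\mu\in\mathcal U^\eta\). Since \(\mu\in\mathcal U\), the hypotheses of Lemma~\ref{lem:local_proxy_rewrite} hold with the same \(\epsilon\), and that lemma gives
\[
    \max_{a\in\mathcal A}|\rho(\bar\mu,a)-\rho(\mu,a)|\le\epsilon .
\]

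The only delicate point is that \(\mathcal F_{\rm loc}\) is a fixed, deterministic family. It is defined from \(\mathcal U^\eta\), not from the sample, so the union bound is legitimate. The condition \(\bar\mu\in\mathcal U^\eta\) is not guaranteed by the sampling; it is exactly the extra assumption in the ``consequently'' clause, and we impose it rather than prove it. Lemma~\ref{lem:local_proxy_rewrite} needs it so that the facet active at \(\bar\mu\) belongs to \(\mathcal F_{\rm loc}\).
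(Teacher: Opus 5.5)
Your proposal is correct and follows essentially the same route as the paper: Hoeffding's inequality for each fixed facet in \(\mathcal F_{\rm loc}\) using the bound \(C_{\rm loc}\), a union bound over the \(N_{\rm loc}\) local facets, and then Lemma~\ref{lem:local_proxy_rewrite} for the risk-value conclusion. Your explicit check of the Hoeffding constant (a range of length \(2C_{\rm loc}\) gives exponent \(-k\epsilon^2/(2C_{\rm loc}^2)\)), and your remark that \(\bar\mu\in\mathcal U^\eta\) is an added hypothesis rather than a consequence of sampling, both match what the paper's argument uses.
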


\begin{proof}
Fix \((a,\ell)\in\mathcal F_{\rm loc}\).  For samples
\(X_1,\ldots,X_k\sim\mu\), set
\(Y_i=c_{a,\ell}(X_i)\).  Then
\[
    \frac1k\sum_{i=1}^kY_i=\langle c_{a,\ell},\bar\mu\rangle,
    \qquad
    \mathbb E[Y_i]=\langle c_{a,\ell},\mu\rangle,
\]
and \(|Y_i|\le C_{\rm loc}\).  Hoeffding's inequality gives
\[
    \Pr\left(
    |\langle c_{a,\ell},\bar\mu-\mu\rangle|>\epsilon
    \right)
    \le
    2\exp\left(-\frac{k\epsilon^2}{2C_{\rm loc}^2}\right).
\]
Taking a union bound over \(N_{\rm loc}\) locally active facets yields
\[
    \Pr\left(
    \max_{(a,\ell)\in\mathcal F_{\rm loc}}
    |\langle c_{a,\ell},\bar\mu-\mu\rangle|>\epsilon
    \right)
    \le
    2N_{\rm loc}\exp\left(-\frac{k\epsilon^2}{2C_{\rm loc}^2}\right).
\]
The stated lower bound on \(k\) makes this probability at most \(\delta\).
The final claim follows from Lemma~\ref{lem:local_proxy_rewrite}.
\end{proof}

\subsection{A Variance-Sensitive Local Bound}
\label{app:bernstein_active_proof}

For some instances, the locally active affine pieces have much smaller variance
than their worst-case range.  The following Bernstein version replaces the
range-only sample size by a local variance term.

Define
\[
    V_{\rm loc}
    :=
    \sup_{\mu\in\mathcal U}
    \max_{(a,\ell)\in\mathcal F_{\rm loc}}
    \operatorname{Var}_{\omega\sim\mu}[c_{a,\ell}(\omega)] .
\]

\begin{lemma}[Variance-sensitive local active-facet bound]
\label{lem:bernstein_active_rewrite}
Fix \(\mu\in\mathcal U\), and let \(\bar\mu\) be the empirical distribution of
\(k\) i.i.d. samples from \(\mu\).  There is a universal constant \(C>0\) such
that, with probability at least \(1-\delta\),
\[
    \max_{(a,\ell)\in\mathcal F_{\rm loc}}
    |\langle c_{a,\ell},\bar\mu-\mu\rangle|
    \le \epsilon
\]
provided
\[
    k
    \ge
    C\max\left\{
    \frac{V_{\rm loc}}{\epsilon^2},
    \frac{C_{\rm loc}}{\epsilon}
    \right\}
    \log\left(\frac{2N_{\rm loc}}{\delta}\right).
\]
Consequently, if \(\bar\mu\in\mathcal U^\eta\), then
\(\max_a|\rho(\bar\mu,a)-\rho(\mu,a)|\le\epsilon\) on the same event.
\end{lemma}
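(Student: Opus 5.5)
The plan is to mirror the Hoeffding-based proof of Lemma~\ref{lem:localized_union_rewrite}, replacing Hoeffding's inequality by Bernstein's inequality for each fixed locally active facet, and then taking a union bound over \(\mathcal F_{\rm loc}\).

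\textbf{Step 1: per-facet Bernstein bound.} Fix \((a,\ell)\in\mathcal F_{\rm loc}\) and set \(Y_i=c_{a,\ell}(X_i)\) with \(X_i\sim\mu\) i.i.d. Then \(\frac1k\sum_i Y_i=\langle c_{a,\ell},\bar\mu\rangle\) and \(\mathbb E Y_i=\langle c_{a,\ell},\mu\rangle\). The centered variables \(Z_i=Y_i-\mathbb E Y_i\) satisfy \(|Z_i|\le 2C_{\rm loc}\). Since \(\mu\in\mathcal U\), we also have \(\operatorname{Var}(Z_i)=\operatorname{Var}_{\omega\sim\mu}[c_{a,\ell}(\omega)]\le V_{\rm loc}\); this is exactly where the supremum over \(\mathcal U\) in the definition of \(V_{\rm loc}\) is used. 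Bernstein's inequality then gives
\[
\Pr\bigl(|\langle c_{a,\ell},\bar\mu-\mu\rangle|>\epsilon\bigr)\le 2\exp\left(-\frac{k\epsilon^2}{2V_{\rm loc}+\tfrac{4}{3}C_{\rm loc}\epsilon}\right).
\]

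\textbf{Step 2: union bound and choice of \(k\).} A union bound over the \(N_{\rm loc}\) facets makes the total failure probability at most \(2N_{\rm loc}\exp(-k\epsilon^2/(2V_{\rm loc}+\tfrac43 C_{\rm loc}\epsilon))\). To make this at most \(\delta\), it suffices that
\[
k\ge\bigl(2V_{\rm loc}/\epsilon^2+\tfrac43 C_{\rm loc}/\epsilon\bigr)\log(2N_{\rm loc}/\delta).
\]
Using \(x+y\le 2\max\{x,y\}\), this is implied by the stated condition with a universal constant such as \(C=4\).

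\textbf{Step 3: transfer to risk values.} On the good event, if additionally \(\bar\mu\in\mathcal U^\eta\), then Lemma~\ref{lem:local_proxy_rewrite} applies with \(\mu\in\mathcal U\) and yields \(\max_a|\rho(\bar\mu,a)-\rho(\mu,a)|\le\epsilon\).

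The main point to be careful about is Step 1. The range bound must be stated for the centered variables, which are bounded by \(2C_{\rm loc}\) rather than \(C_{\rm loc}\). It also matters that the variance bound is taken at the sampling distribution \(\mu\in\mathcal U\), not at \(\bar\mu\). Both issues only affect the universal constant, which is why the statement is phrased with an unspecified \(C\). Everything else is routine.
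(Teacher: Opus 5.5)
Your proof is correct and follows the paper's argument: a per-facet Bernstein bound with variance controlled by $V_{\rm loc}$ (using $\mu\in\mathcal U$), a union bound over the $N_{\rm loc}$ locally active facets, and then Lemma~\ref{lem:local_proxy_rewrite} to pass to risk values. The only difference is that you make the constants explicit, using the centered range $2C_{\rm loc}$ and obtaining $C=4$, whereas the paper absorbs these into an unspecified universal constant.
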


\begin{proof}
Fix \((a,\ell)\in\mathcal F_{\rm loc}\) and again set
\(Y_i=c_{a,\ell}(X_i)\).  The random variables are bounded by
\(C_{\rm loc}\), and their variance is at most \(V_{\rm loc}\) by definition.
Bernstein's inequality gives, for a universal constant \(C_0>0\),
\[
    \Pr\left(
    |\langle c_{a,\ell},\bar\mu-\mu\rangle|>\epsilon
    \right)
    \le
    2\exp\left(
    -C_0 k\min\left\{
    \frac{\epsilon^2}{V_{\rm loc}},
    \frac{\epsilon}{C_{\rm loc}}
    \right\}
    \right).
\]
A union bound over \(\mathcal F_{\rm loc}\) shows that the desired event holds
with probability at least \(1-\delta\) whenever
\[
    k
    \ge
    C\max\left\{
    \frac{V_{\rm loc}}{\epsilon^2},
    \frac{C_{\rm loc}}{\epsilon}
    \right\}
    \log\left(\frac{2N_{\rm loc}}{\delta}\right)
\]
for a sufficiently large universal constant \(C\).  The risk-value conclusion
again follows from Lemma~\ref{lem:local_proxy_rewrite}.
\end{proof}

\subsection{Proof of the Certified Local Refinement}
\label{app:local_complexity_proof}

We now prove the certified local theorem stated in
Section~\ref{subsec:local_refinements}.  The statement assumes that the
benchmark posteriors lie in \(\mathcal U\), that the reduced LP uses grid
centers in \(\mathcal U^\eta\), and that the induced posteriors are certified
to remain in \(\mathcal U^\eta\).  These conditions ensure that all active
facets relevant to the benchmark and to the LP solution belong to
\(\mathcal F_{\rm loc}\).

\begin{proof}[Proof of Theorem~\ref{thm:local_complexity}]
The proof is the same soundness-completeness argument as in the global
discretization theorem, with the global statistic family replaced by
\(\mathcal F_{\rm loc}\).

\emph{Soundness.}  Consider any feasible solution of the reduced
state-contingent LP.  By assumption, every supported induced posterior
\(\mu_\sigma\) lies in \(\mathcal U^\eta\), and its associated grid center
\(\bar\mu_\sigma\) also lies in \(\mathcal U^\eta\).  The local cell
constraints guarantee
\[
    \max_{(a,\ell)\in\mathcal F_{\rm loc}}
    |\langle c_{a,\ell},\mu_\sigma-\bar\mu_\sigma\rangle|
    \le \epsilon_R .
\]
By Lemma~\ref{lem:local_proxy_rewrite},
\[
    \max_{a\in\mathcal A}
    |\rho(\mu_\sigma,a)-\rho(\bar\mu_\sigma,a)|
    \le \epsilon_R .
\]
If the signal label \((\bar\mu_\sigma,a_\sigma)\) is chosen from the same
approximate-center alphabet as in Theorem~\ref{thm:qptas}, then
\[
    \rho(\bar\mu_\sigma,a_\sigma)
    \ge
    \max_{a'}\rho(\bar\mu_\sigma,a')-2\epsilon_R .
\]
Combining these inequalities gives
\[
    \max_{a'}\rho(\mu_\sigma,a')-
ho(\mu_\sigma,a_\sigma)
    \le 4\epsilon_R .
\]
Thus the reduced LP has the same IC-regret soundness guarantee as the global
LP.

\emph{Completeness.}  Let a benchmark scheme be supported on
\(\mathcal U\).  For every benchmark posterior \(\mu\), Lemma
\ref{lem:localized_union_rewrite}, or the Bernstein version
Lemma~\ref{lem:bernstein_active_rewrite}, gives a grid posterior
\(\bar\mu\in\mathcal D_k\) such that
\[
    \max_{(a,\ell)\in\mathcal F_{\rm loc}}
    |\langle c_{a,\ell},\mu-\bar\mu\rangle|
    \le \epsilon_R .
\]
When the local grid is restricted to \(\mathcal D_k\cap\mathcal U^\eta\), we
choose such a proxy within \(\mathcal U^\eta\), as required by the theorem's
certification condition.  Lemma~\ref{lem:local_proxy_rewrite} then implies
uniform approximation of all receiver values at the benchmark posteriors.
Therefore any exact-IC benchmark action becomes an approximate-center label at
its local proxy.  Assigning the benchmark signal masses to these proxy labels
constructs a feasible solution of the reduced LP with the same sender value as
the benchmark, exactly as in the completeness proof for the global LP.

The sample-size bound follows from Lemma~\ref{lem:bernstein_active_rewrite}.
Taking
\[
    k_{\rm loc}
    =
    O\left(
    \max\left\{
    \frac{V_{\rm loc}}{\epsilon^2},
    \frac{C_{\rm loc}}{\epsilon}
    \right\}
    \log N_{\rm loc}
    \right)
\]
up to logarithmic confidence factors yields the stated local grid size
\(m^{k_{\rm loc}}\).  The Hoeffding version follows by replacing
Lemma~\ref{lem:bernstein_active_rewrite} with
Lemma~\ref{lem:localized_union_rewrite}.  This proves the theorem.
\end{proof}

\section{Supplementary Numerical Experiments}
\label{app:experiments}

In this appendix, we provide detailed setups and analyses for the numerical experiments conducted to validate our theoretical findings. We explore three key dimensions: the performance gap between CVaR-aware and risk-neutral persuasion, the impact of risk aversion on information disclosure, and the computational efficiency of our approximation scheme.

\subsection{Implementation Details and Computational Environment}
\label{app:exp_environment}

All experiments were implemented in Python. The optimization-based experiments use Gurobi through the \texttt{gurobipy} interface. The CVaR-aware persuasion problems are encoded as mixed-integer linear programs: the continuous variables represent the joint probability of state and recommended action, and binary variables select the active linear piece of the CVaR representation. The risk-neutral benchmark in Figure~\ref{fig:exp_comparison} is solved as a linear program with standard expected-utility IC constraints. Gurobi output was disabled in all reported runs. For the large-instance approximation experiment in Figure~\ref{fig:exp_qptas}, the exact benchmark is computed with \texttt{MIPGap=0}; approximate runs use \texttt{MIPGap} equal to the plotted $\epsilon$ values and a per-solve time limit of 120 seconds. The big-$M$ constant used in the MILP implementation is $10.0$.

The deterministic experiments in Figures~\ref{fig:exp_comparison} and~\ref{fig:entropy} do not use random sampling. The discretization scheme experiment in Figure~\ref{fig:exp_qptas} generates synthetic instances with $|\Omega|=60$, $n=20$, $r=0.25$, and 20 trials for each $\epsilon$. The instance generator supports a NumPy seed argument; in the script used for the reported figure, the trial loop records the intended seed schedule $42+t$ for trial $t$, but the generator call leaves the seed unset. As a result, rerunning the script reproduces the experimental protocol but may not reproduce the plotted points exactly. A fully deterministic rerun can be obtained by passing the recorded seed $42+t$ to the generator in trial $t$.

The experiments were run on a MacBook Pro with an Apple M1 Pro processor, 10 CPU cores, and 32GB memory, using macOS 26.4.1. The software environment used Python 3.9.13, NumPy 1.26.4, SciPy 1.13.1, pandas 1.4.4, Matplotlib 3.5.2, and Gurobi/\texttt{gurobipy} 12.0.3. The reported discretization scheme runtime curve records wall-clock time measured around each Gurobi solve; the plotted values aggregate the mean runtime and mean relative error across the 20 trials, with standard-deviation error bars.

\subsection{Performance Comparison: CVaR vs. Risk-Neutral Senders}
\label{app:exp_comparison}

\textbf{Experimental Setup.} We evaluate the Sender's expected utility under two distinct signaling strategies: our proposed CVaR-aware mechanism versus the standard risk-neutral (expected utility) persuasion model. We vary the Receiver's risk tolerance level, parameterized by the quantile $r \in (0, 1)$, to observe performance robustness.
\begin{itemize}
    \item \textbf{Scenario 1 ($2 \times 2$ Case):} A minimal setting with two states (Bad/Good) and two actions. The "Safe" action yields a constant utility of 0.4. The "Risky" action yields 0 in the Bad state and 1 in the Good state (Expected Value = 0.5).
    \item \textbf{Scenario 2 ($5 \times 3$ Case):} A complex setting with 5 states and 3 actions. Action $a_0$ is safe. Action $a_1$ is highly risky (20\% probability of catastrophic zero return, 80\% probability of high return). Action $a_2$ serves as an intermediate noise option.
\end{itemize}

\textbf{Results and Analysis.} 

Figure \ref{fig:exp_comparison} (see main text or placeholder below) illustrates the Sender's utility as a function of $r$.
\begin{itemize}
    \item \textbf{Utility Collapse in Standard Models:} In the $2 \times 2$ case, the standard risk-neutral model (red dashed line) blindly mixes Good and Bad states to maximize expected value. When $r$ is low (high risk aversion), this mixed signal fails to satisfy the Receiver's CVaR constraint, leading the Receiver to reject the risky action entirely. Consequently, the Sender's utility collapses to 0.
    \item \textbf{Robustness of CVaR Model:} The CVaR-aware model (blue solid line) successfully identifies the Receiver's "risk floor." At low $r$, it adopts a more conservative disclosure strategy to eliminate tail risks, thereby securing a non-zero utility.
    \item \textbf{Convergence:} As $r \to 1$, the Receiver approaches risk neutrality. The CVaR constraints relax, and our model's performance smoothly converges to the standard benchmark, demonstrating generalized applicability.
\end{itemize}

% Placeholder for Figure 1 (User will insert image)
\begin{figure}[htbp]
	\centering 
	\includegraphics[width=\textwidth]{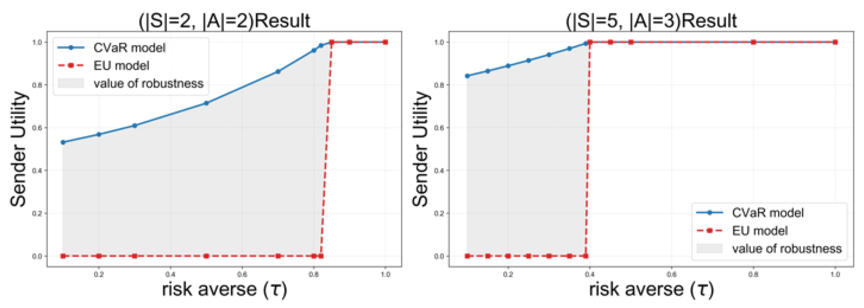}
    \caption{Comparison of Sender's utility under CVaR-aware vs. Standard Expected Utility models across varying risk tolerance levels $r$.}
    \label{fig:exp_comparison}
\end{figure}

\subsection{Impact of Risk Preference on Information Disclosure}
\label{app:exp_entropy}

\textbf{Experimental Setup.} To understand how risk aversion shapes the optimal information structure, we simulate a financial advisory scenario with $|\Omega|=5$ states and three actions:
\begin{enumerate}
    \item \textbf{Deposit:} Risk-free, constant utility 0.55.
    \item \textbf{Bond:} Low risk, linear return profile, expected utility 0.60.
    \item \textbf{Stock:} High risk, convex return profile (high upside, extreme tail risk of 0), expected utility 0.70.
\end{enumerate}
The Sender strictly prefers the Receiver to choose "Stock" ($v_{stock}=1, v_{others}<1$). We measure the degree of information disclosure using the \textbf{posterior entropy} $H(\pi) = \sum_{s} \mathbb{P}(s) \mathcal{H}(\mu_s)$. Lower entropy implies higher disclosure (precision), while higher entropy implies information pooling (ambiguity).

\textbf{Results and Analysis.} 
 Figure \ref{fig:entropy} (placeholder) reveals a monotonic negative correlation between risk tolerance and information precision:
\begin{itemize}
    \item \textbf{High Disclosure at Low $r$:} When the Receiver is extremely risk-averse, the Sender is forced to provide high-precision signals to "prove" that the tail risk has been eliminated. Consequently, posterior entropy is low.
    \item \textbf{Strategic Obfuscation at High $r$:} As risk tolerance increases, the Sender gains slack to mix "bad" states with "good" ones while still satisfying the CVaR constraint. This allows for partial pooling, leading to higher posterior entropy and higher Sender utility.
    \item \textbf{Step-wise Transitions:} The entropy curve exhibits discrete jumps, corresponding to the Receiver's optimal action switching from Deposit $\to$ Bond $\to$ Stock as persuasion becomes feasible.
\end{itemize}

% Placeholder for Figure 2
\begin{figure}[htbp]
    \centering 
    \includegraphics[width=0.6\textwidth]{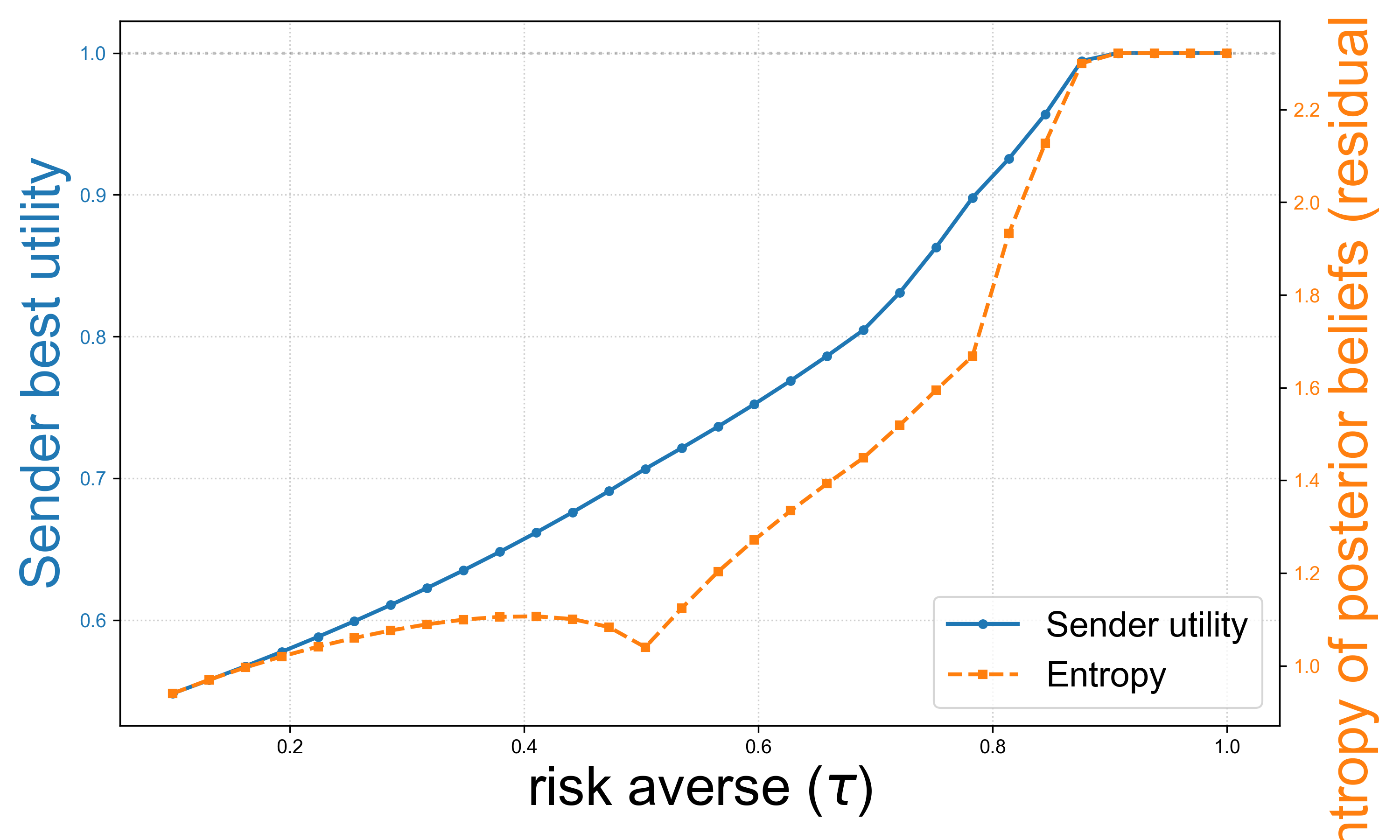}
    \caption{Evolution of posterior entropy (information ambiguity) with respect to risk tolerance $r$.}    \label{fig:entropy}
\end{figure}

\subsection{Finite-Precision Discretization: Accuracy-Time Trade-off}
\label{app:exp_qptas}

We evaluate the discretized implementation as a finite-precision approximation to the active-facet LP. The goal is not to replace the exact LP in the explicit finite-state model, but to illustrate how grid resolution affects sender value and IC violation when posterior beliefs are restricted to a finite representation.

\textbf{Experimental Setup.} 
\begin{itemize}
    \item \textbf{Scale:} $|\Omega|=60$ states, $n=20$ actions.
    \item \textbf{Adversarial Structure:} To challenge the algorithm, we construct "hard" instances where tail states have double probability weights (simulating frequent crises) and action variances grow exponentially ($ \sigma_a \propto e^{a/3}$), creating highly non-convex feasibility boundaries.
    \item \textbf{Parameters:} Risk level $r=0.25$. We vary the approximation parameter $\epsilon \in \{0.25, \dots, 0.0\}$.
\end{itemize}

\textbf{Results and Analysis.} 
 Figure \ref{fig:exp_qptas} (placeholder) plots the Relative Error (Orange) and Running Time (Blue) against $\epsilon$.
\begin{itemize}
    \item \textbf{High Precision:} The relative error decreases monotonically with $\epsilon$. Notably, at $\epsilon=0.05$, the empirical error is already below 2\%, suggesting the algorithm performs much better than the worst-case theoretical bound.
    \item \textbf{Computational Cost:} For coarse approximations ($\epsilon > 0.08$), the runtime is negligible (seconds). However, as $\epsilon \to 0$ (specifically $\epsilon < 0.03$), the runtime spikes exponentially. This confirms the theoretical complexity class of $n^{O(1/\epsilon^2)}$ and highlights the "expensive" nature of exact solutions in this non-convex landscape. The discretization scheme provides a tunable knob to navigate this trade-off effectively.
\end{itemize}

% Placeholder for Figure 3
\begin{figure}[h]
    \centering
    \includegraphics[width=0.6\textwidth]{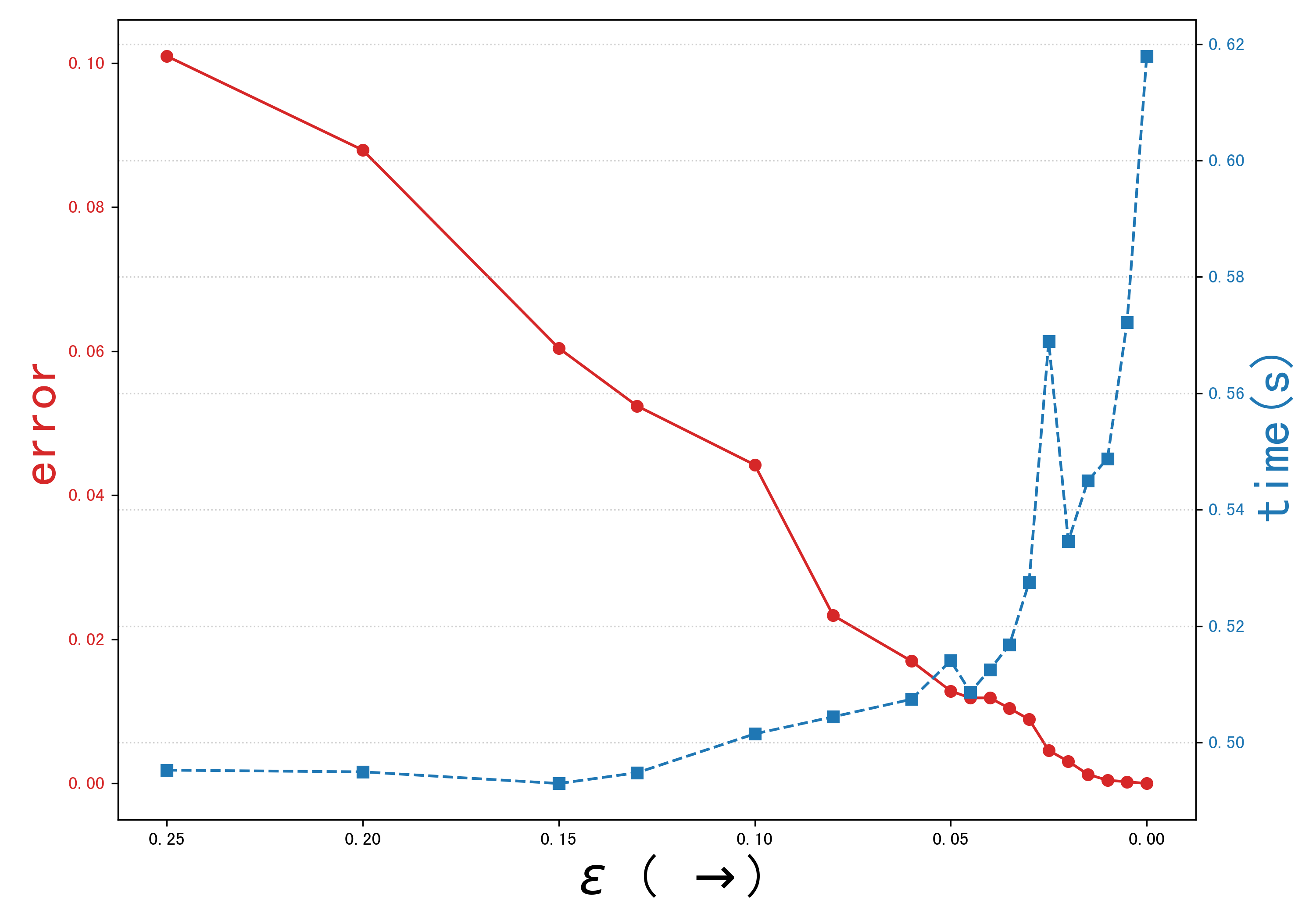}
    \caption{Trade-off between finite-precision accuracy and runtime for the posterior discretization scheme}
    \label{fig:exp_qptas}
\end{figure}

\end{document}